\documentclass[11pt]{article}     % onecolumn (ditto)
\usepackage{a4wide}
\usepackage{fullpage}
\usepackage{graphicx}
\usepackage{amsmath}
\usepackage{amssymb}
\usepackage{hyperref}
\usepackage{cleveref}
\usepackage{authblk}
%table:
\usepackage{booktabs,multirow,array,caption}
\usepackage{pifont}
\usepackage{bigdelim}
\newcommand{\cmark}{\ding{51}}%
\newcommand{\xmark}{\ding{55}}%

\usepackage[vlined,english,boxed,linesnumbered]{algorithm2e} % for styled algorithms
      \IncMargin{1em}  % margin of algorithms
     %\SetAlgoLined
     \SetKwInput{Input}{Input}
     \SetKwInput{Output}{Output}
     \SetKw{Return}{Return}
     \SetKwIF{If}{ElseIf}{Else}{If}{then}{else if}{else}{endif}
     \SetKwFor{For}{For}{do}{done}
     \SetKwFor{for}{For}{}{}
     \SetKwFor{While}{While}{do}{done}
     \SetKwFor{Repeat}{Repeat}{}{}
     \SetKwRepeat{Do}{Do}{while}
      \SetKwIF{Do}{notused}{WhileOfDo}{Do}{}{notused}{while}{done}
     \SetKwFor{Procedure}{Procedure}{}{}
     \SetKwFor{Function}{Function}{}{}
     \SetKw{Return}{Return}%
     \SetKw{return}{return}%
     \SetKwComment{Comment}{/* }{ */}
     \SetSideCommentRight
     \DontPrintSemicolon

\setcounter{topnumber}{9}
\setcounter{bottomnumber}{9}
\setcounter{totalnumber}{20}
\setcounter{dbltopnumber}{9}

\newcommand{\lightparagraph}[1]{\smallskip\textit{#1 }}

\usepackage{xcolor}
\usepackage[normalem]{ulem} % \sout{} for striking out (barré)
\newcommand{\lv}[1]{\textcolor{blue}{LV: #1}}

\newcommand{\todolater}[1]{\textcolor{orange}{}}
\long\def\begtodolater#1\endtodolater{\begingroup\color{orange}\endgroup}
\long\def\begskip#1\endskip{}

% Theorem-like environment
\newtheorem{theorem}{Theorem}
\newtheorem{proposition}{Proposition}
\newcommand{\qed}{\hfill \fbox{} \medskip}
\newenvironment{proof}[1][Proof]{\medskip\noindent\emph{#1. }\ignorespaces}{\qed\medskip\par\noindent\ignorespacesafterend}
\newtheorem{lemma}{Lemma}
\newtheorem{corollary}{Corollary}

\newtheorem{claim}{Claim}

\newcommand{\optproblem}[2]{\medskip\fbox{\parbox{0.92\textwidth}{\textsc{#1.} #2}}\medskip}

% eps \let\eps=\varepsilon

\DeclareMathOperator{\processcosts}{FinalizeCosts}
\DeclareMathOperator{\removearc}{RemoveArc}
\DeclareMathOperator{\popmin}{PopMin}

\DeclareMathOperator{\reorder}{Reorder}

\DeclareMathOperator{\dijkstra}{DijkstraFromSet}
\newcommand{\intervs}[1]{{\cal I}_{#1}}
\let\bottom=\perp

\begin{document}

\title{Minimum-Cost Temporal Walks 
under Waiting-Time Constraints
in Linear Time\footnote{This work was suported by the French National Research Agency (ANR) through project Multimod with reference number ANR-17-CE22-0016.}}
%\subtitle{Do you have a subtitle?\\ If so, write it here}

%\titlerunning{On The Complexity of Maximising Temporal Reachability via Trip Temporalisation}        % if too long for running head

\author[1]{Filippo Brunelli}
\author[1]{Laurent Viennot}

%\authorrunning{Short form of author list} % if too long for running head

\affil[1]{\small Université Paris Cité, Inria, CNRS, Irif, France}
% not yet \affil[2]{\small DIENS, École normale supérieure, PSL University, Inria Paris, France}

%\date{Received: date / Accepted: date}
% The correct dates will be entered by the editor

\maketitle

\begin{abstract}
In a temporal graph, each edge  is available at specific points in time. Such an availability point is often represented by a ``temporal edge'' that can be traversed from its tail only at a specific departure time, for arriving in its head after a specific travel time. In such a graph, the connectivity from one node to another is naturally captured by the existence of a temporal path where temporal edges can be traversed one after the  other. When imposing constraints on how much time it is possible to wait at a node in-between two temporal edges, it then becomes interesting to consider temporal walks where it is allowed to visit several times the same node, possibly at different times.

%Bentert et al. (Applied Network Science 2020) proposed an algorithm for computing minimum-cost temporal walks from a single source under waiting-time constraints. It requires a transformation of the temporal graph and performs a Dijkstra traversal at each time instant where an edge event occurs. The resulting time complexity is within a logarithmic factor from linear. We propose a single pass algorithm when temporal edges have strictly positive travel times and when they are provided in input by non-decreasing departure time and also by non-decreasing arrival time. Its time complexity is linear with this appropriately sorted input. Interestingly, we show that a logarithmic factor in the time complexity appears to be necessary if the input contains only one ordering of the temporal edges (either by arrival times or departure times).
We study the complexity of computing minimum-cost temporal walks from a single source under waiting-time constraints in a temporal graph, and ask under which conditions this problem can be solved in linear time.  Our main result is a linear time algorithm when the input temporal graph is given by its (classical) space-time representation.
%temporal edges are provided in input by non-decreasing departure time and also by non-decreasing arrival time. 
We use an algebraic framework for manipulating abstract costs, enabling the optimization of a large variety of criteria or even combinations of these. It allows to improve previous results for several criteria such as number of edges or overall waiting time even without waiting constraints. It saves a logarithmic factor for all criteria under waiting constraints.
%This result is somehow optimal: 
Interestingly, we show that a logarithmic factor in the time complexity appears to be necessary with a more basic input consisting of a single ordered list of temporal edges (sorted either by arrival times or departure times). We indeed show equivalence between the space-time representation and a representation with two ordered lists.

\smallskip
\textbf{Keywords:}{ temporal graph, temporal path, temporal walk, %profile,
 shortest temporal path, optimal temporal walk, waiting-time constraints, restless temporal walk, linear time.}
% \PACS{PACS code1 \and PACS code2 \and more}
% \subclass{MSC code1 \and MSC code2 \and more}
\end{abstract}

%\pagebreak

\section{Introduction} \label{sec:intro}

Computing shortest paths is certainly one of the most fundamental problems within algorithmic graph theory, as well as one of the most important subroutine for a large diversity of applications in networks. While its complexity has been extensively covered in the context of static graphs, there is still room for improvement in temporal graphs, where the edge set evolves with time. Temporal graphs arose with the need to better model contexts where the appearance of interactions or connections depends on time, such as epidemic propagation or transport networks.
For example, in a flight network, nodes represent airports while each edge corresponds to a flight and is labeled with a departure time and a travel time. The natural notion of connectivity is then grasped by temporal paths where edges appear in chronological order and can be traversed one after the other. Starting with the work on time-dependent networks~\cite{CookeH1966} and the telephone problem~\cite{Bumby1981}, the discrete time version of temporal graphs we consider here was already investigated in~\cite{Berman1996,Nachtigall1995,PallottinoS1998} and introduced later in various contexts ranging from social interactions to mobile networks and distributed computing (see e.g. \cite{Casteigts2012,Michail2016,Latapy2018}). This classical point-availability model of temporal graph is the following. The availability of an edge $(u,v)$ at time $\tau$ is modeled by a temporal edge $e=(u,v,\tau,\lambda)$. It represents the possibility to traverse the edge from $u$ at time exactly $\tau$ with arrival in $v$ at time $\tau+\lambda$. We refer to $\tau$ and $\tau+\lambda$ as the departure time and arrival time of $e$ respectively, while $\lambda\ge 0$ is called the travel time of $e$. A temporal walk can then be defined as a sequence of temporal edges such that each temporal edge arrives at the departing node of the next one, and the arrival time of each temporal edge is less or equal to the departure time of the next one. The inequality means that it is possible to wait at the node in-between two consecutive temporal edges. We distinguish such a walk from a temporal path, which is a temporal walk visiting at most once any node.

\begin{figure}[t]
    \begin{minipage}[c]{.34\textwidth}
    \includegraphics[width=\textwidth]{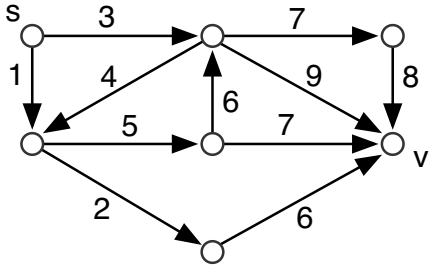}
    \end{minipage}
    \hfill
    \begin{minipage}[c]{.65\textwidth}
    \caption{A temporal graph: each edge is labeled by its departure time and travel times are all one here for simplicity. Several temporal paths or walks from source node $s$ to node $v$ optimize different criteria (we identify those by the consecutive labels of their edges): 1,2,6 has earliest arrival time ($7=6+1$), 3,4,5,7 has shortest duration ($5=7+1-3$), 3,9 has fewest number of edges (2), and 3,4,5,6,7,8 has minimum overall waiting time (0).}
    \label{fig:temp-walks}
    \end{minipage}
    \hfill
\end{figure}

%Computing shortest temporal paths or walks is definitely one of the most central tasks concerning temporal graphs.
While the notion of ``shortest'' path is fairly standard for static graphs, there exists several natural extensions for defining ``shortest'' temporal walks. Indeed, the following natural criteria can be optimized: earliest arrival time, shortest duration,  or fewest number of edges, to name most popular ones. See \Cref{fig:temp-walks} for an example showing the different paths or walks resulting from optimizing these.
Most criteria result in optimal temporal walks which are indeed temporal paths. Minimizing overall waiting time at nodes is a notable exception where walks obviously help compared to paths (see \Cref{fig:temp-walks}).
Indeed, optimizing different criteria might appear as different problems and the single-source optimal path/walk problem has mostly been addressed with different algorithms for different criteria, resulting in different time complexities~(see e.g. \cite{BuiXuanFJ2003,Michail2016,Wu2016}). Most of them assume that the input is given in some specifically sorted format, and their running time is either linear in the number $M$ of temporal edges or within a logarithmic factor at most $\log M$ from it. Recently, generic algorithms allowing to solve the problem for any criteria, or a linear combination of them, were proposed~\cite{BentertHNN2020,BrunelliCV2021} but inherently incur a logarithmic factor. This raises the question of which criteria have a linear time algorithm given a pre-sorted input. This paper addresses it.

\smallskip
\textbf{Related work.}
Interestingly, after numerous works inspired by Dijkstra algorithm (see e.g. \cite{Berman1996,BrodalJ04,Nachtigall1995,PallottinoS1998}), a linear-time algorithm for earliest arrival time was first claimed in~\cite{Wu2014,Wu2016} with a similar algorithm as~\cite{Dibbelt2013,Dibbelt2018} through a single scan of temporal edges ordered by non-decreasing departure time. Both works assume positive travel times, ensuring that temporal paths are strict in the sense that departure times strictly increase along a temporal path. Single-source shortest duration temporal paths are then obtained by basically solving the profile problem, that is computing the earliest arrival time at each node for each possible departure time at the source, which seems more difficult. However, a more intricate version of the scanning algorithm~\cite{Wu2014,Dibbelt2013} solves it in $O(M\log \Delta)$ time where $\Delta\le M$ is the maximum number of temporal edges with same head. A linear time algorithm was later included in~\cite{Wu2016} by taking as input a representation of the temporal graph as a static graph. This representation is similar to the classical ``space-time'' (or ``time-expanded'') graph~\cite{PallottinoS1997,PallottinoS1998,SchulzWW2000,MullerHSWZ04,Michail2016} where each node is split into node events, one for each time where a temporal edge arrives to it or departs from it, and each temporal edge is turned into an arc between two node events.
This ``space-time'' approach is also used for temporal paths with fewest edges (a.k.a. shortest temporal paths)~\cite{Wu2016} but results in an $O(M \log M)$ time complexity as Dijkstra algorithm is used on this static graph which has $\Theta(M)$ vertices and $\Theta(M)$ edges. It was thus unclear whether linear time is possible for fewest edges, and other criteria, such as overall waiting time which seems even harder as it requires to consider walks rather than just paths. Moreover, if linear-time could be explained by the use of a sorted input for earliest arrival time, it left open what property of the space-time representation is enabling linear time for shortest duration.

A further level of difficulty happens when the waiting time at each node is bounded: the computation of temporal paths becomes NP-hard~\cite{CasteigtsHMZ2021}.
Note that such waiting restrictions are natural in several contexts such as epidemic propagation or flight networks. For example, in an epidemic propagation model where nodes represent individuals and temporal edges represent contacts, an upper bound on waiting time allows to take into account the fact that a contaminated individual is contagious only during a time interval, and cannot spread the disease after. 
Despite this hardness result, a recent break-through~\cite{BentertHNN2020} shows that computing single-source optimal walks is still possible under such waiting-time constraints in $O(M\log M)$ time. Similarly to optimizing overall waiting time, such constraints indeed impose to switch from paths to walks for other criteria also. The algorithm is generic as it optimizes a linear combination of all classical criteria.
%Their algorithm relies on first transforming the temporal graph and then performing a Dijkstra computation for each time instant when a temporal edge departs or arrives. More precisely, it first builds an equivalent temporal graph where all edges have zero travel time. This is done by adding a dummy node with appropriate waiting restrictions for each temporal edge. Note that this can considerably increase the number of nodes. Then, it scans  time instants when temporal edges occur in increasing order. For each time instant $t$, a static directed graph is constructed and a Dijkstra computation allows to update the reachability of nodes with temporal edges up to time $t$.
%The algorithm also provides for each such node $v$ an optimal walk from the source node to $v$ according to a linear combination of most classical cost criteria. The $O(M\log M)$ complexity results from both sorting the temporal edges according to time and the use of Dijkstra algorithm on graphs that can have up to $\Theta(M)$ nodes and $\Theta(M)$ edges.
The $\log M$ factor comes from using several calls to Dijkstra algorithm on graphs that can have up to $\Theta(M)$ nodes and $\Theta(M)$ edges. This leaves open whether linear time is possible for any criterion under waiting-time constraints.

Other related work~\cite{BuiXuanFJ2003,DehneOS2012,Simard21} consider a more general model where each temporal edge is available during an interval of time (rather than a point), resulting in higher time complexities (e.g. quadratic for shortest duration) although waiting-time constraints are not considered.

\smallskip
\textbf{Contribution.}
%\fb{We will consider temporal graphs that do not admit cycles of temporal edges with zero travel time and same departure time. We propose a rather simple temporal-edge scanning algorithm that runs in linear time in such graphs, and the input is given as two appropriately sorted lists of temporal edges: one according to arrival times, and the other according to departure times.}
We propose a temporal-edge scanning algorithm for single-source minimum-cost walks that runs in linear time given an acyclic space-time representation. The acyclic assumption means that the temporal graph do not contains a cycle of temporal edges with zero-travel time at any time instant.
It is obviously satisfied when travel times are positive which is the case in many practical settings such as transit networks.
%when travel times are strictly positive, and the input is given as two appropriately sorted lists of temporal edges: one according to arrival times, and the other according to departure times.
The algorithm can handle waiting-time constraints as defined in~\cite{BentertHNN2020}. We use an algebraic definition of cost similarly to~\cite{BrunelliCV2021} and in the spirit of~\cite{Sobrinho2005,Griff2010}. This enables a large variety of cost definitions, including in particular the linear combination considered in~\cite{BentertHNN2020}, or compositions such as shortest-fastest~\cite{Simard21}. This shows that linear time is possible for all criteria given an acyclic space-time representation when this was unknown for fewest edges and overall waiting time. Moreover, this holds even with waiting-time constraints while a logarithmic factor was previously necessitated for all criteria in that context. Our algorithm also solves the profile problem (again in linear time) with waiting-time constraints. No such algorithm was previously claimed, although we suspect that \cite{BentertHNN2020} can be adapted for that, but with a logarithmic slowdown. See \Cref{tab:compl} for a comparison with previous work.
%Assuming positive travel times is related to the concept of strict temporal path where departing times are required to strictly increase along a temporal path. This assumption is quite natural in many practical settings such as transit networks.
%, and . 
% % From a practical perspective, an adequate input is initially computed in $O(M\log M)$ time from any classical representation as usually assumed in practical works~\cite{Dibbelt2018,Wu2016} for better efficiency. Then, our algorithm enables an arbitrary number of queries from different sources in $O(M)$ time each. In particular, this implies a logarithmic speed-up for the all-pair problem. A trivial modification allows to restrict queries to a given interval of time within the lifetime of the temporal graph.

\begin{table}[t]
  \newcommand{\lambdapos}{$\lambda>0$}
  \newcommand{\betainf}{$\beta=\infty$}
  \newcommand{\both}{$\lambda>0$, $\beta=\infty$}
  \newcommand{\none}{--}
    \centering
    
    \scalebox{1}{
    \begin{tabular}{@{} p{3.5cm}|p{5.1cm}ccl @{}}
    \toprule[0.5pt]
    
     \multirow{1}{*}{Criterion} &  \multirow{1}{*}{Time complexity}  &  Model & Waiting restr. & Input\\
%      &   &  $\lambda > 0$ & $\beta = \infty$ & pre-sorted\\
     
    \midrule[0.5pt]
    
      Earliest arrival time
      & $O(M)$ \cite{Dibbelt2018,Wu2016} 
      & \lambdapos & \xmark & pre-sorted \\
      
      Shortest duration
      & $O(M)$ \cite{Wu2016} 
      & \lambdapos & \xmark & space-time \\
      
      Fewest edges
      & $O(M\log \Delta)$ \cite{Wu2016} 
      & \lambdapos & \xmark & pre-sorted \\
 
% The O(n^2M(n+\log M)) is for interval model
%      \multirow{2}{*}{Shortest-fastest}& \rdelim\{{2}{17.5mm}[\begin{tabular}{l}
%           $O(M \log M)$ \cite{Simard21} \\
%           $O(n^2M(n+\log M))$ \cite{Simard21}
%      \end{tabular}]   & \xmark & \xmark & \xmark \\
%       &  & \cmark  & \xmark & \xmark \\   
%      Shortest-fastest & $O(M \log M)$ \cite{Simard21} & \xmark & \xmark & -- \\
  
     Any above & \rdelim\}{3}{*}[ $O(M\log M)$ \cite{BentertHNN2020}] & \multirow{3}{*}{\none} & \multirow{3}{*}{\cmark} & \multirow{3}{*}{any}  \\  
     Overall waiting time & & & & \\
     Linear combination & & & & \\

%      Linear combination & $O(M\log M)$ \cite{BentertHNN2020} & \cmark & \cmark & \xmark \\
      
     Profile & $O(M\log \Delta)$ \cite{Dibbelt2018}
     & \lambdapos & \xmark & pre-sorted\\   
    \midrule[0.5pt]
    
     & {This Paper} 
     &  &\\
     %$\lambda=0$ & $\beta\neq \infty$ \\
    \midrule[0.5pt]
    
     \multirow{2}{*}{Any above}& \rdelim\{{2}{*}[%17.5mm
        \begin{tabular}{l}
        $O(M)$ Algorithm~\ref{alg:gal} \\
        $O(M \log n)$ Algorithm~\ref{alg:0del}
      \end{tabular}]   & acyclic & \cmark & space-time \\
       &  & \none & \cmark & space-time \\   
          
     Overall waiting time & $\Omega(M\log M)$ \Cref{th:lb}
     & \lambdapos & \xmark & pre-sorted\\

          \bottomrule[0.5pt]
    \end{tabular}
    }
    \caption{Best time complexities for solving single-source optimal temporal walks for various criteria in a temporal graph with $M$ temporal edges, $n$ nodes, and where a node has at most $\Delta\le M$ temporal edges entering it. The ``Model'' column indicates if positive travel times are assumed with ``$\lambda > 0$''; ``acyclic'' stands for the more general setting where no cycle of zero-travel-time edges occurs at any time instant; and a dash stands for the general model where such cycles can occur. A check-mark in column ``Waiting restr.'' indicates that waiting-time constraints are supported while a cross indicates that unrestricted waiting is assumed. The ``Input'' column indicates if the input is required to be a ``pre-sorted'' list of temporal edges, or a ``space-time'' representation, or a list of temporal edges in ``any'' order.} \label{tab:compl}
\end{table}

Our algorithm does not work directly on the space-time representation but on two ordered lists of temporal edges, one where edges arriving at any given node are sorted by non-decreasing arrival time and one where edges departing from any given node are sorted by non-decreasing departure time. The first list must also follow a certain property related to the acyclic requirement on the input. We call this representation a ``doubly-sorted'' representation of the temporal graph. Each list can easily be computed in linear-time from a space-time representation through a procedure similar to topological ordering. We indeed show that this doubly-sorted representation is equivalent to the space-time representation in the sense that one can be computed from the other in linear time and space. In particular, when travel times are positive, two lists sorted by non-decreasing arrival time and non-decreasing departure time respectively form such a doubly-sorted representation.
%, shedding light on why the space-time representation could enable linear time for shortest duration.

%Saving a logarithmic factor when the input is sorted is not as simple as one might think as 
Interestingly, a single sorted list of temporal edges is not sufficient for obtaining linear time. We show a lower bound of $\Omega(M\log M)$ with such a ``singly-sorted'' representation % \fb{should we switch to single and double sorted?} may be better english, but there are a lot to change
for algorithms using comparisons only, which is indeed a desirable algorithmic feature in the algebraic approach for supporting a wide variety of abstract costs. %The idea for obtaining this lower bound is to reduce a sorting problem to the computation of one-to-all shortest duration walks in a temporal graph with $O(M)$ temporal edges.
This lower bound holds even if the input temporal edges are required to be given either by non-decreasing arrival times or by non-decreasing departure times.  This shows that our requirement for a ``doubly-sorted'' representation with both orderings is somehow necessary for allowing linear time computation. It also sheds light on why the equivalent space-time representation could enable linear time for shortest duration.
%
%Interestingly, we show that the classical ``space-time'' representation is equivalent to our representation with two sorted lists of temporal edges as each one can be computed from the other in linear time.

%Our algorithm actually works in a more general setting where zero travel time are allowed, but where no cycle of edges with zero travel time occurs at any time instant. Positive travel times obviously imply such acyclicity. 
%Note that assuming positive travel times is quite natural in many practical settings such as transit networks, and that they are related to the concept of strict temporal paths where departing times are required to strictly increase along a temporal path. 
%It also works with a wider range of pairs of orderings of temporal edges as long as they satisfy a certain doubly-sorted requirement related to this acyclicity. We show that the required doubly-sorted representation of the temporal graph is equivalent to the space-time representation in the sense that one can be computed from the other in linear time and space, shedding light on why the space-time representation could enable linear time for shortest duration.

Finally, we show how to handle the setting where cycles of edges with zero travel time can occur. It is then possible to compute for a fixed source an adequate pair of orderings allowing our algorithm to run correctly for that source. This pair can be computed in $O(M\log n)$ time where $n$ is the number of nodes, allowing to reduce the complexity from $O(M\log M)$ to $O(M\log n)$ compared to previous work. Note that this $\log n$ factor seems mandatory as the single-source shortest problem in static graphs has then a trivial reduction to our problem. Note also that $M$ is not bounded with respect to $n$ and can be much larger in practice. 

%During a single scan of temporal edges, our algorithm updates the minimum cost for reaching any edge through a walk from the source using edges scanned so far. 
Our main new technique consists in maintaining at each node a list of intervals spanning a sliding window of outgoing temporal edges. It allows to update in constant time the cost of candidate minimum-cost walks departing in a time interval. These intervals may be split as temporal edges are scanned, and a careful use of the two orderings of temporal edges given as input allows to manage them with linear amortized complexity. We think that this technique is a valuable contribution and could appear useful for other temporal graph problems involving temporal connectivity such as computing temporal betweenness~\cite{BussMNR2020} or delay-robust temporal walks~\cite{FuchsleMNR22}.

%Besides its simplicity compared to~\cite{BentertHNN2020}, our algorithm also offers a slight improvement in terms of complexity. 
%From a theoretical perspective, it shows that it is possible to save a $\log M$ factor for a wide variety of shortest path problems with appropriately sorted input. Such linear time algorithms were previously known only for earliest arrival time and shortest duration, both with unrestricted waiting. We thus extend these results to waiting-time constraints and to a wide variety of optimization criteria including fewest number of edges and overall waiting time, to name a few for which this was unknown. 

The paper is organized as follows. After defining the main notions in Section~\ref{sec:def}, we first present, as a warm-up for handling waiting constraints, a simple linear-time algorithm allowing to compute all temporal edges appearing in any temporal walk from a given source assuming positive travel times. It allows to compute single-source earliest arrival time walks. We then introduce an algebraic definition of cost in \Cref{sec:gal} and present an algorithm solving the following more general problem: given a source node $s$, compute for each temporal edge $e$, the minimum-cost of a temporal walk from $s$ having $e$ as last edge (if such a walk exists). Maintaining a tentative minimum-cost for each edge requires additional data-structures, and the algorithm allows more general ordering of edges as input in the more general acyclic setting. In Section~\ref{sec:solving}, specific algebraic cost structures are proposed to solve the single-source optimal temporal walk problem for most classical criteria, and combinations of them, as well as the profile problem. A lower bound on the time complexity of the minimum overall waiting time problem is then given in Section~\ref{sec:lb}. We show in Section~\ref{sec:representations} the equivalence between the space-time representation of a temporal graph and the doubly-sorted representation required by our algorithm. Finally, we show in Section~\ref{sec:zero} how to extend our algorithm in the setting where cycles of edges with zero travel time can occur.

\section{Preliminary definitions}
\label{sec:def}

A \emph{temporal graph} is a tuple ${G}=(V,{E},\alpha,\beta)$, where $V$ is the set of \emph{nodes}, $E$ is the set of temporal edges and $\alpha,\beta\in [0,+\infty]^V$ are minimum and maximum waiting-times at each node. A \emph{temporal edge} $e$ is a quadruple $(u,v,\tau,\lambda)$, where $u\in V$ is the \emph{tail} of $e$, $v\in V$ is the \emph{head} of $e$, $\tau\in\mathbb{R}$ is the \emph{departure time} of $e$, and $\lambda\in\mathbb{R}_{\geq0}$ is the \emph{travel time} of $e$. We also define the \emph{arrival time} of $e$ as $\tau+\lambda$, and we let $dep(e)=\tau$ and $arr(e)=\tau+\lambda$ denote the departure time and arrival time of $e$ respectively. For the sake of brevity, we often say edge instead of temporal edge.
%We call \emph{strict temporal graph} a temporal graph such that all the temporal edges have (strictly) positive travel time. 
We let $n=|V|$ and $M=|E|$ denote the number of nodes and edges respectively.

Given a temporal graph ${G}=(V,{E},\alpha,\beta)$ a \emph{walk} $Q$ from $u$ to $v$, or a \emph{$uv$-walk} for short, is a sequence of temporal edges $\langle e_1=(u_1,v_1,\tau_1,\lambda_1),\ldots,e_k=(u_k,v_k,\tau_k,\lambda_k)\rangle\subseteq E^k$ such that $u=u_1$, $v_k=v$, and, for each $i$ with $1<i\leq k$, $u_i=v_{i-1}$ and $a_{i-1}+\alpha_{u_i} \le \tau_i \le a_{i-1} + \beta_{u_i}$ where $a_{i-1}=\tau_{i-1}+\lambda_{i-1}$ is the arrival time of $e_{i-1}$.  %We say that $e_k$ \emph{terminates} walk $Q$ and that edge $e$ is an \emph{arrival edge} with respect to node $v$ whenever the head of $e$ is $v$ and there exists an $sv$-walk terminating with $e$.
Note that the waiting time $\tau_i-a_{i-1}$ at node $u_i$ is constrained to be in the interval $[\alpha_{u_i},\beta_{u_i}]$.
We say that \emph{waiting is unrestricted} when $\alpha_v=0$ and $\beta_v=+\infty$ for all $v\in V$. Note that for positive travel times, such a walk is \emph{strict} in the sense that $\tau_{i-1} < \tau_i$ for $1<i\leq k$ as the constraint $a_{i-1}+\alpha_{u_i} \le \tau_i$ implies $\tau_i \ge a_{i-1}=\tau_{i-1}+\lambda_{i-1} > \tau_{i-1}$ for $\lambda_{i-1} > 0$.
The \emph{departing time} $dep(Q)$ of $Q$ is defined as $\tau_1$, while the \emph{arrival time} $arr(Q)$ of $Q$ is defined as $\tau_k+\lambda_k$. We say that a temporal edge $e=(x,y,\tau,\lambda)$ \emph{extends} $Q$ when $x=v_k$ and $arr(Q)+\alpha_x\le \tau\le arr(Q)+\beta_x$. When $e$ extends $Q$, we can indeed define the walk $Q.e=\langle e_1,\ldots, e_k,e\rangle$ from $u$ to $y$. Moreover, we also say that $e$ extends $e_k$ as it indeed extends any walk $Q$ having $e_k$ as last edge. %\fb{Add zero-walk}
More generally, we say that an edge $f=(x,y,\tau,\lambda)$ \emph{half-extends} an edge $e=(u,v,\tau',\lambda')$ when $x=v$ and $arr(e)+\alpha_x\le \tau$. Note that $f$ half-extends $e$ whenever $f$ extends $e$.
We also say that an edge $e$ with head $v$ is \emph{$s$-reachable} when there exists an $sv$-walk ending with edge $e$.
A \emph{zero-walk} is a walk consisting of temporal edges with same departure time and with zero travel time, and going through nodes with zero minimum waiting constraint. More formally, we define a zero-walk as a walk $\langle e_1=(u_1,v_1,\tau_1,\lambda_1),\ldots,e_k=(u_k,v_k,\tau_k,\lambda_k)\rangle$ such that $\tau_i=\tau_j$ for $i,j\in [k]$, $\lambda_i=0$ and $\alpha_{u_i} = 0$ for $i\in [k]$. Such a zero-walk is called a zero-cycle when $u_1 = v_k$. We say that a temporal graph $G$ is \emph{zero-acyclic} if there are no zero-cycle in $G$.

Let us now introduce some orderings of temporal edges with respect to certain temporal criteria. We say that an ordering $E^{ord}$ of all edges is \emph{half-extend-respecting} when
%the edges of any walk $Q$ in $G$ appear in order in $E^{ord}$. Equivalently, $E^{ord}$ is half-extend-respecting when
for any pair $e,f\in E$ of edges such that $f$ half-extends $e$, then $e$ appears before $f$ in $E^{ord}$ which is denoted by $e <_{E^{ord}} f$. We also write $e \leq_{E^{ord}} f$ for $e <_{E^{ord}} f$ or $e = f$.
Note that the edges of any walk $Q$ in $G$ must appear in order in such an ordering $E^{ord}$ as each edge of $Q$ half-extends the edge preceding it.
We will show in Section~\ref{sec:representations} that a temporal graph admits a half-extend-respecting ordering of its edges if and only if it is zero-acyclic.
We say that an ordering $E^{ord}$ of all the temporal edges is \emph{node-departure sorted} if all edges departing from the same node are ordered by non-decreasing departure time in $E^{ord}$, that is we have $e <_{E^{ord}} f$ whenever $e,f \in E$ have same tail and satisfy $dep(e) < dep(f)$. Similarly, we say that an ordering $E^{ord}$ of all the temporal edges is \emph{node-arrival sorted} if all edges arriving to the same node are ordered by non-decreasing arrival time in $E^{ord}$,  that is we have $e <_{E^{ord}} f$ whenever $e,f \in E$ have same head and satisfy $arr(e) < arr(f)$.

Finally, we define the \emph{doubly-sorted representation} of a temporal graph $(V,E,\alpha,\beta)$ as a data-structure with two lists $(E^{dep}, E^{arr})$, containing $|E|$ quadruples each, representing all temporal edges in $E$, where $E^{arr}$ is a node-arrival sorted list, and $E^{dep}$ is a node-departure sorted list. Moreover, we assume that we have implicit pointers between the two lists, that link each quadruple of one list to the quadruple representing the same temporal edge in the other list. We also say that $(E^{dep}, E^{arr})$ is \emph{half-extend-respecting} when $E^{arr}$ is additionally half-extend-respecting.

Without loss of generality, we can restrict our attention to nodes appearing as head or tail of at least one temporal edge and we thus assume $|V|=O(|E|)$. An algorithm is said to be linear in time and space when it runs in $O(|E|)$ time and uses $O(|E|)$ space.
Given a doubly sorted representation $(E^{dep}, E^{arr})$, we also assume that we are given for each node $v$ the list $E_v^{dep}$ of pointers to temporal edges with tail $v$ ordered by non-decreasing departure time, as it can be computed in linear time and space from $E^{dep}$ through bucket sorting. We assume that each list $E^{arr}$, $E^{dep}$, or $E_v^{dep}$ is stored in an array $T$ such that each element $T[i]$ can be accessed directly through its index $i\in [1,|T|]$ in constant time. Given two indexes $i\le j$, we also let $T[i:j]$ denote the sub-array  of elements of $T$ with index in $[i,j]$.

Finally, consider the case of  a temporal graph $G$ with strictly positive travel times. It is obviously zero-acyclic as it contains no edge with zero travel time. Clearly, if $E^{arr}$ (resp. $E^{dep}$) is an ordering of its edges by non-decreasing arrival time (resp. departure time), then $(E^{dep},E^{arr})$ is a doubly-sorted representation of $G$. Moreover, $E^{arr}$ is half-extend-respecting: whenever an edge $f=(v,w,\tau',\lambda')$ half-extends $e=(u,v,\tau,\lambda)$, the arrival time $a=\tau+\lambda$ of $e$ satisfies $a+\alpha_v\le \tau'$ and the arrival time $a'=\tau'+\lambda'$ of $f$ thus satisfies $a < a'$ as $\alpha_v\ge 0$ and $\lambda'>0$. Such a representation $(E^{dep},E^{arr})$ with fully sorted lists is called a \emph{fully doubly-sorted representation}.

\section{Warming up: a simple linear-time algorithm for reachability}
\label{sec:simplealg}

As a warm-up, we first provide a simple algorithm for solving in linear time and space the reachability problem when assuming positive travel times, which is defined as follows.

\optproblem{Singles-Source Reachability Problem}{Given a temporal graph  with waiting constraints $G=(V,E,\alpha,\beta)$ and a source node $s$, compute the set of all temporal edges that are $s$-reachable.}

Notice that this problem generalises the single-source earliest arrival time problem. Indeed, given the set of the $s$-reachable edges it is sufficient to perform a linear scan of such set to identify for each node $v$ the $s$-reachable edge with head $v$ that has lowest arrival time, and which corresponds to the earliest arrival time at $v$. 

In this section, we assume to be given a fully doubly-sorted representation $(E^{dep},E^{arr})$ of a temporal graph with positive travel times. We design an algorithm which mainly consists in scanning linearly edges in $E^{arr}$ while updating the set $A_v$ of $s$-reachable edges terminating $sv$-walks in the temporal graph resulting from the edges read so far. To help identifying edges that will appear in such walks in next iterations, we also mark edges that extend these walks.

We now describe more precisely how edges are scanned and marked as formalized in Algorithm~\ref{alg:eat}. We first build the lists $E_v^{dep}$ of temporal edges with tail $v$ by bucket sorting $E^{dep}$ at Line~\ref{line:prebeg}. We then identify the $s$-reachable edges as follows. We linearly scan $E^{arr}$. In the temporal graph resulting from the temporal edges read up to edge $e=(u,v,\tau,\lambda)\in E^{arr}$, the only walks from $s$ that have not been considered yet must contain $e$, and must have it as last edge as $E^{arr}$ is sorted by non-decreasing arrival time.
If its tail $u$ is $s$, or if $e$ is marked, then we know that there exists a walk from $s$ to its head $v$. In that case, we add edge $e$ to $A_v$ at Line~\ref{line:add}, and we then mark edges that extend $e$, that is edges in $E_v^{dep}$ with departure time in $[a+\alpha_v,a+\beta_v]$, since the arrival time of $e$ is $a=\tau+\lambda$. These edges appear consecutively in $E_v^{dep}$ which is processed linearly as walks from $s$ to $v$ are identified. This process is done in Lines~\ref{line:procbeg}-\ref{line:procend} in Algorithm~\ref{alg:eat}, starting from the index $p_v$ of the last processed edge in $E_v^{dep}$, and such edges $f$ that extend $e$ are marked at Line~\ref{line:mark} before updating $p_v$.
Moreover, we use classical parent pointers to be able to compute an $sv$-walk for each $s$-reachable edge with head $v$. 
Each parent pointer $P[f]$ of an edge $f$ is initially set to a null value $\bottom$ at Line~\ref{line:pp}. Whenever we mark edge $f$, that extend the currently scanned edge $e$, we set the parent pointer of $f$ to $e$.
If $f$ is an $s$-reachable edge at $v$, we can then get an $sv$-walk by following the parent pointer $P[f],P[P[f]],\dots$.

\begin{algorithm}[t]
%\SetAlgoLined
\DontPrintSemicolon %otherise print the semicolon in \lIf
\Input{A doubly-sorted representation $(E^{arr},E^{dep})$ of a temporal graph $G$ with waiting constraints $(\alpha,\beta)$, and a source node $s\in V$.}
\Output{The sets $(A_v)_{v\in V}$ of $s$-reachable edges at each node $v$ sorted by non-decreasing arrival time.}

For each node $v$, generate the list $E^{dep}_v$ by bucket sorting $E^{dep}$. \label{line:prebeg}\\
\For{each node $v$}{
  Set $A_v:=\emptyset$. \Comment{Set of $s$-reachable edges (as a sorted list).}
  Set $p_v:=0$. \Comment{Index of the last processed edge in $E_v^{dep}$.} 
}

% For each node $v$, initialize the arrival time Set $A_v:=\emptyset$.\\
Set all the edges in $E^{arr}$ as unmarked.\\
Set $P[e]:=\bottom$ for each edge $e\in E^{arr}$.\label{line:pp}\label{line:preend}\Comment{Parent of $e$, initially null.}
%\Comment{Scan edges in $E^{arr}$:}
\For{each edge $e = (u,v,\tau,\lambda)$ in $E^{arr}$% by non-decreasing arrival time
}{
    \If{$u=s$ or $e$ is marked}{\label{line:if}\Comment{$e$ is $s$-reachable.}
        $A_v:=A_v\cup\{e\}$\\\label{line:add}%\Comment{Add $a$ to $A_v$.}
        Let $a=\tau + \lambda$ be the arrival time of $e$.\label{line:procbeg}\\
        \Comment{Process further edges from $v$ until dep.\,time $\ge a+\beta_v$:}
        Let $l>p_v$ be the first index of an edge $(v,w,\tau',\lambda') \in E_v^{dep}$ such that $\tau' \geq a + \alpha_v$ (set $l:=|E_v^{dep}| +1$ if no such index exists).\label{line:skip}\label{line:left}\\
        %\Comment{Skip edges from $v$ with dep.\,time $< a+\alpha_v$:}
        %Let $q\ge p_v$ be the last index of an edge $(v,w,\tau',\lambda') \in E_v^{dep}$ such that $\tau' < a + \alpha_v$ (set $q:=p_v$ if no such index exists).\label{line:skip}\\
        %\Comment{Process further edges from $v$ with dep.\,time  $\le a+\beta_v$:}
        %Set $l:=q+1$.\label{line:left}\lv{We could directly use $l$ instead of $q$: Let $l>p_v$ be the first index of an edge...}\\
        Let $r \ge l$ be the last index of an edge $(v,w,\tau',\lambda') \in E_v^{dep}$ such that $\tau'\leq a + \beta_v$ (set $r:=l-1$ if no such index exists).\label{line:right}\\
        \Comment{Mark unmarked edges with dep.\,time in $[a + \alpha_v,a+\beta_v]$:}
        \lIf{$l\le r$}{mark each edge $f\in E_v^{dep}[l:r]$ and set $P[f]:=e$.}\label{line:mark} 
        Set $p_v:=r$.\label{line:rvupdate}\label{line:procend}\\
    }
}

\Return{the sets $(A_v)_{v\in V}$.}

\caption{Computing, for each node $v$, the set $A_v$ of all $s$-reachable edges with head $v$.}
\label{alg:eat}
\end{algorithm}

\begin{theorem}
    Given a fully doubly-sorted representation of a temporal graph with waiting constraints $G=(V,E,\alpha,\beta)$ having positive travel times, and a source node $s\in V$, Algorithm~\ref{alg:eat} computes all $s$-reachable temporal edges in linear time and space.
\end{theorem}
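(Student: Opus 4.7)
The plan is to establish two things: correctness (each $A_v$ contains exactly the $s$-reachable edges with head $v$) and linear time and space complexity. Correctness is shown by induction along the scan order of $E^{arr}$, using the fact that positive travel times imply that whenever an edge $f$ half-extends an edge $e$, we have $arr(f) > arr(e)$, so $e$ appears strictly before $f$ in the fully arrival-sorted list $E^{arr}$.

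The key technical step is the following marking invariant: immediately after an edge $e' = (u',v,\tau',\lambda')$ is identified as $s$-reachable during the main loop, every edge $f \in E_v^{dep}$ whose departure time lies in $[arr(e') + \alpha_v, arr(e') + \beta_v]$ is marked. Edges $f$ whose position in $E_v^{dep}$ exceeds the current $p_v$ are handled directly by the window $[l,r]$ computed at Lines~\ref{line:left}--\ref{line:right} and marked at Line~\ref{line:mark}. Edges $f$ at positions $\le p_v$ require a secondary induction over prior processings at $v$: if $e''$ is the most recent earlier processing, giving $p_v = r''$, then since arrival times at $v$ are non-decreasing we have $arr(e'') \le arr(e')$, so $e'$'s window is not strictly left of $e''$'s; hence either $f$ fell in $[l'',r'']$ and was marked at that time, or $f$ lies in the previously skipped prefix $[p_v^{\text{old}}+1, l''-1]$, whose edges have departure $< arr(e'') + \alpha_v \le arr(e') + \alpha_v$, contradicting $f$ being in $e'$'s window. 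Once the invariant is established, the main induction on $E^{arr}$ concludes: an edge $e$ is added to $A_v$ exactly when its tail is $s$ or it was marked by some previously identified $s$-reachable edge it extends, so both directions follow (in the reverse direction, the predecessor edge in any $sv$-walk ending with $e$ appears earlier in $E^{arr}$ by strict positivity of travel times, is in $A_u$ by the inductive hypothesis, and marks $e$ via the invariant).

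For complexity, the preprocessing (bucket sorting $E^{dep}$ into the $E_v^{dep}$, resetting marks and parents) is linear. In the main loop, the $O(1)$ work per scanned edge sums to $O(|E|)$. Each positive outcome performs a forward scan over $E_v^{dep}$ from index $p_v+1$ and updates $p_v$ to $r \ge l-1$, absorbing all scanned positions into the advance; hence the total inner scan work at each node $v$ is $O(|E_v^{dep}|)$ (plus one extra probe per processing to detect the right end of the window, charged to the positive outcome), summing to $O(|E|)$. Space is linear for the lists $E_v^{dep}$, the sets $A_v$, the parent array $P$, the counters $p_v$, and the mark bits. The main obstacle is the marking invariant for positions $\le p_v$: showing that the non-backtracking pointer never causes missed markings requires the careful combined use of monotone arrival times at each head and the sliding-window structure imposed by the waiting constraints $[\alpha_v,\beta_v]$.
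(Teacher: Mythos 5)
Your proposal is correct and follows essentially the same route as the paper: an induction along the arrival-sorted scan with the key step being that edges at positions at most $p_v$ lying in the current window $[arr(e')+\alpha_v,arr(e')+\beta_v]$ were already marked during an earlier processing at $v$ (using monotone arrival times and the definition of $l$), plus the standard amortized argument that the non-decreasing pointer $p_v$ lets each edge of $E_v^{dep}$ be scanned once. The only cosmetic difference is that the paper pinpoints directly the iteration at which $p_v$ crossed the index of $f$ rather than recursing through the most recent processing, but the content is identical.
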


\begin{proof}
\lightparagraph{Correctness.}
Let us denote by %$E_k$ the set of edges in $E^{arr}[1:k]$, and with
$G_k=(V,E^{arr}[1:k],\alpha,\beta)$ the temporal graph induced by the first $k$ temporal edges in $E^{arr}$.
We will prove, by induction on $k$, the following two invariants: 

\begin{description}
    \item [($I^1_k$)] For every node $v$, $A_v$ contains all $s$-reachable edges with head $v$ in $G_k$.
    \item [($I^2_k$)] The marked edges are all the edges in $E$ that extend a walk from $s$ in $G_k$.
\end{description}

The correctness of the algorithm will follow from invariant $(I^1_k)$ for $k=|E|$. The invariants are satisfied for $k=0$ since there are no edges in $G_0$ while the sets $(A_v)_{v\in V}$ of $s$-reachable edges are initially empty and no edge is initially marked.

Now suppose that the two invariants hold for $k-1$, with $k\geq 1$, and let us prove that they still hold for $k$ after scanning the $k$th edge $e_k=(u,v,\tau,\lambda)$ in $E^{arr}$. To prove $(I^1_k)$ and $(I^2_k)$, we first show that the condition of the if statement at Line~\ref{line:if} is met when $e_k$ is an $s$-reachable edge in $G_k$. It is obviously the case when $u=s$ as $\langle e_k\rangle$ is in $G_k$, or when $e_k$ was previously marked, as Invariant $(I^2_{k-1})$ then implies that it extends a walk $Q$ from $s$ in $G_{k-1}$ and that $Q.e_k$ is a walk in $G_k$. The converse also holds: if $e_k$ is an edge of a walk $Q$ from $s$ in $G_k$, then either it is the first edge and we have $u=s$ or the sequence $Q'$ of edges before $e_k$ in $Q$ is a walk in $G_{k-1}$  and $(I^2_{k-1})$ implies that it is marked. %\fb{hard to rephrase as an iff., it becomes heavy}

Note that when $e_k$ appears in a walk $Q$ of $G_k$, it must be the last edge of $Q$ as $E^{arr}$ is sorted by non-decreasing arrival time and edges have positive travel time. This allows to prove $(I^1_k)$: as we assume $(I^1_{k-1})$, we just have to consider walks from $s$ that are in $G_k$ but not in $G_{k-1}$, that is those containing $e_k$. Since all these walks have $e_k$ as last edge, and $e_k$ is the only edge added to $A_v$ when such walks exist, we can conclude that $(I^1_k)$ holds.

Similarly, to prove $(I^2_k)$ when $(I^2_{k-1})$ holds, we just have to consider the edges extending a walk $Q$ from $s$ which is in $G_k$ but not in $G_{k-1}$. As discussed above, when such a walk $Q$ exists, $e_k$ is its last edge and the condition of the if statement Line~\ref{line:if} holds. Edges extending such a walk $Q$ are thus those extending $e_k$, that is all edges $f\in E_v^{dep}$ such that $a+\alpha_v\le dep(f) \le a+\beta_v$. Note that the ordering of $E_v^{dep}$ implies that these edges are consecutive in $E_v^{dep}$. If no such edges exist, let $l'$ and $r'$ designate the first and last indexes respectively where they are placed in $E_v^{dep}$. To prove $(I^2_k)$, it thus suffices to prove that all edges in $E_v^{dep}[l':r']$ are marked after scanning $e_k$ and that only edges in $E_v^{dep}[l':r']$ are marked during the iteration for $e_k$ (if no such edges exist we prove that we mark no edges).
Consider the values $l$ and $r$ computed at Lines~\ref{line:skip} and~\ref{line:right} respectively. If no edge $f$ extends $e_k$, then we get $r=l-1$ and no edge is marked. Now, we assume that such edges exist and that $l'$ and $r'$ are well defined. 
First assume $l\le r$ and thus that $l$ was not set to $|E_v^{dep}|+1$. The choice of $l,r$ then imply $a+\alpha_v\le dep(E_v^{dep}[l])$ and $dep(E_v^{dep}[r])\le a +\beta_v$. We thus have $l'\le l\le r\le r'$ and all marked edges at Line~\ref{line:mark} are in $E_v^{dep}[l':r']$. Moreover, the choice of $r$ indeed then implies $r=r'$. We still need to prove that edges in $E_v^{dep}[l':l-1]$ have already been marked.
Otherwise, when $r=l-1$, no edge is marked. This occurs when $p_v\ge r'$ and we then have $l=p_v+1$. % and $r=p_v$.
In both cases, it remains to prove that all edges $f\in E_v^{dep}[l':\min\{l-1,r'\}]$ have already been marked. This interval is non empty when $l'\le l-1$ and thus $p_v=l-1$ by the choice of $l$. We thus have $p_v\ge \min\{l-1,r'\}$. Let $i$ be the index of $f$ in $E_v^{dep}$ and consider the iteration $j<k$ when $p_v$ was updated form a value smaller than $i$ to a value $r''\ge i$ where $l''$ and $r''$ denote the indexes computed for variables $l$ and $r$ respectively during the $j$-th iteration for edge $e_j\in E^{arr}$. 
Since $E^{arr}$ is sorted by non-decreasing arrival time, the arrival time $a'$ of $e_j$ satisfies $a'\le a$ and we thus have $dep(f)\ge a+\alpha_v\ge a'+\alpha_v$. The choice of index $l$ at Line~\ref{line:skip} in that iteration thus guarantees that the index $l''$ must satisfy $l''\le i$. We thus have $l''\le i\le r''$ and $f$ was marked at Line~\ref{line:mark} during the $j$th iteration. This completes the proof of $(I^2_k)$.
% Moreover, the choice of $r$ also implies $r'\le r$, even when $r=l-1$\fb{ when $r=l-1$ what does it mean to say $r'\le r$? who is $r'$?}. Thus, we just have to prove that any edge $f=(v,w,\tau',\lambda')\in E_v^{dep}[l':q]$ was marked previously. When $q>p_v$, the edge $E_v^{dep}[q]$ has departure time $\tau'<a+\alpha_v$ and the ordering of $E_v^{dep}$ implies $l'> q$, which means that the sub-array $E_v^{dep}[l':q]$ is empty. Let us now consider the case $q=p_v$ and the iteration $j<k$ when $f$ was processed, that is when the variable $p_v$ was first updated to a value greater or equal to the index $i$ of $f$ in $E_v^{dep}$. While scanning the $j$th edge $e_j\in E^{arr}$, we have updated $p_v$ from a value $p_v<i$ to a value $r''\ge i$. Since $E^{arr}$ is sorted by non-decreasing arrival time, the arrival time $a'$ of $e_j$ satisfies $a'\le a$ and we thus have $\tau'\ge a+\alpha_v\ge a'+\alpha_v$. The choice of index $q$ at Line~\ref{line:skip} in this iteration guarantees that the index $l''$ set in variable $l$ at Line~\ref{line:left} must satisfy $l''\le i$. We thus have $l''\le i\le r''$ and $f$ was marked at Line~\ref{line:mark} during the $j$th iteration. This completes the proof of $(I^2_k)$.

%\lv{The following paragraph, seems to over-explain, we have to reduce to the minimum argumentation.}

We finally prove that the parent pointers allow us to compute for each $s$-reachable edge $f=(u,v,\tau,\lambda)$ with head $v$ an $sv$-walk  ending with $f$. If $f\in A_v$ and it is not marked, then $P[f]=\bottom$, and we must have $u=s$ as $f$ was added to $A_v$. In this case, $\langle f \rangle$ is an $sv$-walk itself. Now consider the case $f\in A_v$ and $f$ is marked. Consider the iteration $k$ where $f$ was marked. By $(I^2_{k-1})$ and $(I^2_k)$, $f$ extends a walk from $s$ ending with $e_k$, where $e_k$ is the edge scanned at iteration $k$, and $P[f]$ was then set to $e_k$.
% Thus, let us consider the case in which we are interested in an $sv$-walk that ends with $f$ and $tail(f)\neq s$. As we proved, the marked edges are exactly the edges that extends a walk from $s$, and thus they are also the last edge of the walk from $s$ that have at least two edges.
% If an edge $f$ is marked at $k$th iteration while edge $e_k$ is being scanned, this means that there exists a walk from $s$ terminating with $e_k$ that can be extended by edge $f$. Indeed, the parent pointer of $f$ will be set to $e_k$. \lv{I don't follow here, we just want to argue that $P[f]$ is the last edge of an $su$-walk that $f$ extends.}
This guarantees by a simple induction that, if $P[f]\neq \bottom$, by following the parent pointers in classical manner, namely $P[f], P[P[f]], \dots$, until $\bottom$ is found, it is possible to obtain a walk terminating with edge $f$. 

\lightparagraph{Complexity analysis.}
The preprocessing of $E^{dep}$ and the initialization from Line~\ref{line:prebeg} to Line~\ref{line:preend} clearly take linear time. The main for loop scans each temporal edge $e=(u,v,\tau,\lambda)$ in $E^{arr}$ exactly once. For each iteration there are three operations that may require non-constant time: the computation of $l$ and $r$ at Lines~\ref{line:skip} and~\ref{line:right}, and marking edges in $E^{dep}_v[l,r]$ at Line~\ref{line:mark}. They all take $O(r-p_v)$ time as $l$ and $r$ can be found by scanning edges in $E^{dep}_v$ from $p_v+1$.
%
%Note that adding $a$ to set $A_v$ at Line~\ref{line:add} can be done in constant time: since arrival times are added to $A_v$ in non-decreasing order, testing whether $a$ is already in $A_v$ amounts to compare $a$ to the last value added to $A_v$.
Thanks to the update of the index $p_v$ to $r$, each edge in $E^{dep}_v$ is processed at most once for a total amortized cost of $O(|E^{dep}_v|)$.  Overall, this leads to a time complexity of $O(|E| + \sum_{v \in V}|E^{dep}_v|)=O(|E|)$.
%Since $\sum_{v \in V}|E^{dep}_v| = |E|$, the total time complexity of the algorithm is $O(|E| + n)$.
Algorithm~\ref{alg:eat} thus runs in linear time.
%
%space complexity
Finally, let us notice that for all nodes $v$, the set $A_v$ has size bounded by the number of temporal edges with head $v$. We thus have $\sum_{v\in V}|A_v|\le |E|$,
% NO $|A_v| \leq |E^{dep}_v|$, for each node $v$. 
and the space complexity of Algorithm~\ref{alg:eat} is linear.
\end{proof}

\section{Single-source all-reachable-edge minimum-cost walks}
\label{sec:gal}

To solve the problem of computing minimum-cost walks from a single source $s$, we will consider a more general problem consisting in computing at each destination $v$, and for each possible $s$-reachable edge $e$ with head $v$, an $sv$-walk with minimum cost among all $sv$-walks ending with $e$. We first introduce an algebraic framework for associating costs to edges and walks.

\subsection{General cost structure for walks}

%We now introduce an algebraic framework for defining costs of edges and walks.
We integrate a temporal graph $G=(V,E,\alpha,\beta)$ with an algebraic \emph{cost structure} $({C},\gamma,\oplus,\preceq)$, where ${C}$ is the set of possible \emph{cost values}, $\gamma$ is a \emph{cost function} $\gamma:E\rightarrow{C}$, $\oplus$ is a \emph{cost combination function} $\oplus:{C}\times{C}\rightarrow{C}$, and $\preceq$ is a \emph{cost total order} $\preceq\ \subseteq\ {C}\times{C}$. We also define the relation $\prec$ between the elements of ${C}$ as $a \prec b$ if and only if $a \preceq b$ and $a \neq b$.
For any walk $Q=\langle e_1,\ldots,e_k\rangle$, the \emph{cost function} of $Q$ is recursively defined as follows: $\gamma_{Q} = \gamma_{\langle e_1,\ldots,e_{k-1}\rangle} \oplus\gamma(e_k)$, with $\gamma_{\langle e_1\rangle}=\gamma(e_1)$. In other words, the costs combine along the walk according to the cost combination function. 
The cost structure is supposed to satisfy the following \emph{right-isotonicity property}~\cite{BrunelliCV2021,Sobrinho2005,Griff2010} (\emph{isotonicity} for short): 
\begin{align}\tag{isotonicity}\label{eq:isotonicity}  
\mbox{For any } c_1,c_2,c\in C \mbox{ such that } c_1\preceq c_2, \mbox{ we have } c_1\oplus c\preceq c_2\oplus c. 
\end{align} 
This property guarantees that if several walks are extended by a given temporal edge $e$, then the best cost is obtained by extending the walk $Q^*$ with minimum cost: as for any other walk $Q$ we have $\gamma_{Q^*}\preceq \gamma_{Q}$, we get $\gamma_{Q^*.e}\preceq \gamma_{Q.e}$ by the isotonicity property and the cost function definition. However, a prefix of a minimum-cost walk is not necessarily a minimum-cost walk.
% Due to waiting-time constraints, Lemma~1 in \cite{BrunelliCV2021} does not hold.

%Given a temporal graph $G=(V,E,\alpha,\beta)$ with cost structure $({C},\gamma,\oplus,\preceq)$, and a source node $s\in V$, we let $A_v$ denote the set of all $s$-reachable edges with head $v$. 
We define the single-source all-reachable-edge minimum-cost problem as follows.

\optproblem{single-source all-reachable-edge minimum-cost problem}{Given a temporal graph $G=(V,E,\alpha,\beta)$ with cost structure $({C},\gamma,\oplus,\preceq)$, and a source node $s\in V$, compute for each destination $v\in V$ and each possible $s$-reachable edge $e$ with head $v$ the minimum cost of any $sv$-walk ending with edge $e$. 
}

Letting $A_v$ denote the set of all $s$-reachable edges with head $v$, it consists in computing for each node $v$ all pairs $(e,c)$ such that $e\in A_v$ and $c=\min\{\gamma_Q : Q \mbox{ is an $sv$-walk ending with edge $e$}\}$. We will denote with $A'_v$ the list of such pairs $(e,c)$ ordered by non-decreasing arrival time of the edges. In this section we consider this problem in the case of zero-acyclic temporal graphs. 

\subsection{Solving the single-source all-reachable-edge minimum-cost problem}

We can now state our main theorem about the complexity of the above problem.

\begin{theorem}\label{th:gal}
Given a half-extend-respecting doubly-sorted representation $(E^{dep},E^{arr})$ of a zero-acyclic temporal graph $G=(V,E,\alpha,\beta)$ with cost structure $({C},\gamma,\oplus,\preceq)$ satisfying isotonicity, and a source node $s$, the single-source all-reachable-edge minimum-cost problem can be solved in linear time and space.
\end{theorem}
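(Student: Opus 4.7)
The plan is to generalise Algorithm~\ref{alg:eat} so that, alongside deciding reachability, it maintains for every temporal edge $f$ with tail $v$ a tentative \emph{incoming cost} $IC_f$: the minimum over all arriving edges $e'$ at $v$ such that $f$ extends $e'$, of the already-computed cost of an $sv'$-walk ending with $e'$ (where $v'$ is the tail of $e'$). Scanning edges $f=(v,w,\tau_f,\lambda_f)$ of $E^{arr}$ in order, the algorithm would at each step (i) read $IC_f$ in $O(1)$ time from a data structure attached to node $v$; (ii) compute $c_f$, the minimum cost of any $sw$-walk ending with $f$, as the minimum of $\gamma(f)$ (available when $v=s$, corresponding to the walk $\langle f\rangle$) and $IC_f\oplus\gamma(f)$ (available when $IC_f$ is defined); (iii) append $(f,c_f)$ to $A'_w$ in non-decreasing arrival-time order whenever $c_f$ is defined; and (iv) push $c_f$ into the data structure at $w$ as a new candidate cost for all outgoing edges $g\in E_w^{dep}$ with $dep(g)\in[arr(f)+\alpha_w,\,arr(f)+\beta_w]$. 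Correctness would follow by induction on the scan position: the half-extend-respecting property of $E^{arr}$ guarantees both that every arriving edge contributing to $IC_f$ is scanned before $f$ (so $IC_f$ is final at query time) and that every outgoing edge touched by an update has not yet been scanned; and \eqref{eq:isotonicity} ensures that combining the minimum incoming cost with $\gamma(f)$ yields the minimum cost over all $sw$-walks ending with $f$.

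The technical engine is the per-node data structure supporting these range-min updates and point queries in amortised constant time. At node $v$, the plan is to maintain a doubly linked list of \emph{intervals} partitioning a contiguous active portion of $E_v^{dep}$, each interval carrying a single label equal to the current minimum incoming cost applicable to its edges, together with a back-pointer from every edge to its enclosing interval so that queries are $O(1)$. The decisive structural fact is that, because $E^{arr}$ is node-arrival sorted, arriving edges at $v$ are processed in non-decreasing arrival time, hence the index endpoints $l$ and $r$ of the window $[arr(e')+\alpha_v,arr(e')+\beta_v]$ within $E_v^{dep}$ are both monotonically non-decreasing across successive updates at $v$. The affected region thus behaves as a sliding window over $E_v^{dep}$, accessed directly via a per-node cursor built in linear time from $E^{dep}$ by bucket sort as in Algorithm~\ref{alg:eat}, so that neither the location of $l$ nor of $r$ requires any search. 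Processing an arriving edge $e'$ with cost $c$ then reduces to advancing the head cursor past intervals that are now permanently left of $l$ (they become frozen), optionally splitting the interval that straddles $l$, collapsing each maximal run of intervals in $[l,R]$ (where $R$ is the previous right end of the active region) whose label exceeds $c$ into a single interval of cost $c$ while leaving smaller-labelled intervals intact, and finally appending an interval of cost $c$ on $[R+1,r]$, possibly merged with its left neighbour.

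For complexity I would set up a potential argument whose potential is the number of currently active intervals summed over all nodes. Each update contributes $O(1)$ to the potential via its intrinsic creations (the split at $l$ and the extension on $[R+1,r]$), while a collapse of a run of $k$ intervals pays for its $O(k)$ actual work by destroying $k-1$ intervals, and so costs only $O(1)$ amortised. Summed over all $M$ arriving edges and $M$ scanned edges, together with the $O(M)$ bucket sort and $O(1)$ per query, this yields the claimed linear time and space. Parent pointers maintained as in Algorithm~\ref{alg:eat} then allow recovery of an actual optimal walk per edge within the same asymptotic bound. The main obstacle I anticipate is precisely the amortised accounting of the walk through $[l,R]$: one must argue that intervals with label no greater than $c$ are not revisited across successive updates without commensurate work, which requires a careful invariant on how the interval boundaries are placed and maintained, tying together the two input orderings so that every interval is created, visited and destroyed only a constant number of times over the lifetime of the algorithm.
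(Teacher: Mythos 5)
Your overall architecture is the same as the paper's Algorithm~\ref{alg:gal} (scan $E^{arr}$, per-node interval lists over $E_v^{dep}$, a sliding window whose endpoints are monotone, freezing of the prefix left of $l$, and an amortized charging argument), but the complexity argument has a genuine gap, and it is exactly the one you flag at the end. Your update step ``collapse each maximal run of intervals in $[l,R]$ whose label exceeds $c$, leaving smaller-labelled intervals intact'' forces you to inspect \emph{every} interval in the active window, including those that survive, and your potential (number of active intervals) only pays for destructions. This is not just a bookkeeping detail: take a node $v$ with outgoing edges at departure times $D,D+1,\dots,D+k$, waiting bounds $\alpha_v=0$, $\beta_v=D$, and arriving edges at times $1,2,\dots,2k$ with costs $1,2,3,\dots$ increasing. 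The window left end stays at the first outgoing edge while the right end creeps forward, so after the first $k$ arrivals the active region holds $\Theta(k)$ intervals with increasing labels; each of the next $k$ arrivals has a cost larger than every label, destroys nothing, yet under your rule must sweep all $\Theta(k)$ intervals to discover that no run exceeds $c$. That is $\Theta(M^2)$ total work on an instance with $M=\Theta(k)$ edges. The missing idea is the paper's key invariant: $\intervs{v}$ is kept sorted by non-decreasing cost. This holds automatically because every new interval is appended at the right end with cost $c$ \emph{after} removing the suffix of intervals with cost exceeding $c$, and because $l$ only moves right, so the frozen prefix (whose labels may be smaller or larger) is never touched again and can be finalized once via $\processcosts$. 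With sortedness, the collapse scans only a suffix, every inspected interval except the one that stops the scan is destroyed, and the cost per update is $O(1)$ amortized --- which is precisely the ``careful invariant tying together the two orderings'' you say you still need (cf.\ Lemma~\ref{lem:kthiter} and the while loop at Line~\ref{ling:whileinterv}).

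A secondary issue: your $O(1)$ point query relies on a back-pointer from each edge to its enclosing interval, but maintaining those pointers under collapses and splits costs time proportional to the number of edges in the merged intervals, which again threatens linearity. The paper avoids this by never querying via per-edge pointers: when an edge $e$ with tail $u$ is scanned it calls $\processcosts(u,i)$, which finalizes (writes $B[\cdot]$ for) all edges of $E_u^{dep}$ up to $e$'s index; since $l_u$ only advances, each edge is finalized once, and the half-extend-respecting hypothesis guarantees no later arrival at $u$ could still improve those finalized values. If you adopt both the sortedness invariant and this finalize-on-demand query mechanism, your induction on the scan position goes through essentially as the paper's proof does (the paper additionally routes the correctness through $(E^{dep},E^{arr})$-respected walks and Lemma~\ref{lem:all} so as to cover arbitrary doubly-sorted inputs, but for the theorem as stated your direct use of the half-extend-respecting property is adequate).
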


We will see in Section~\ref{sec:representations} that the half-extend-respecting requirement can be dropped as a doubly-sorted representation $(E^{dep},E^{arr})$ such that $E^{arr}$ is half-extend-respecting can easily be computed in linear time and space from any doubly-sorted representation of a zero-acyclic graph.

\bigskip

To prove the above theorem, we can design an algorithm that scans linearly edges in $E^{arr}$. The general idea is to maintain for each unscanned edge $f$ the minimum cost of any walk $Q$ from $s$ in the partial temporal graph induced by edges scanned so far such that $f$ extends $Q$. We can update these costs each time a new edge $e\in E^{arr}$ is scanned relying on the property that the edges of any walk are scanned in order. In particular, the cost that has been associated to $e$ itself allows to infer easily the minimum cost of a walk from $s$ ending with $e$. 

However, we propose a more general algorithm that works on any doubly-sorted representation $(E^{dep},E^{arr})$ to enable the wider setting of Section~\ref{sec:zero}. The counterpart is that it considers only certain walks that are well ordered with respect to $E^{arr}$ and $E^{dep}$ in the following sense. We say that a walk $Q$ is \emph{$(E^{dep},E^{arr})$-respected} when for each pair $e$ and $f$ of consecutive edges in $Q$, and for each edge $e' \in E$ having same tail as $f$ and satisfying $f\leq_{E^{dep}} e'$, we have $e <_{E^{arr}}e'$. In particular, we get $e<_{E^{arr}} f$ for $e'=f$, and the edges of $Q$ must appear in order in $E^{arr}$. It also implies that edges $e'$ after $f$ in $E_v^{dep}$ will be scanned after $e$ by our algorithm, a property we will use for getting efficient updates. We will see later that all walks are $(E^{dep},E^{arr})$-respected when $E^{arr}$ is half-extend-respecting and $E^{dep}$ is node-departure sorted.

We now introduce two notions related to $(E^{dep},E^{arr})$-respected walks. An $(s,E^{dep},E^{arr})$-reachable edge is defined as an $s$-reachable edge that ends an $(E^{dep},E^{arr})$-respected walk from $s$.
Moreover, given a subset $E'$ of edges, and an edge $f\in E$ with tail $v$, we define its \emph{best extendable cost} with respect to $E'$ as the minimum cost of an $sv$-walk $Q$ that $f$ extends in the partial graph induced by $E'\cup\{f\}$ and such that $Q.f$ is $(E^{dep},E^{arr})$-respected.

Our algorithm maintains the best extendable costs of all edges with respect to the prefix of edges scanned so far as follows.
Initially, all best extendable costs are undefined as expressed by a special value $\bottom$. Then each time an edge $e\in E^{arr}$ is scanned, it is sufficient to update the costs of edges $f$ that extend $e$ and such that $\langle e,f\rangle$ is $(E^{dep},E^{arr})$-respected as detailed in the proof of Lemma~\ref{lem:kthiter}. The main difficulty is to perform this update in constant amortized time although a large number of edges $f$ may extend $e$. For that purpose, $E^{dep}$ is first bucket sorted according to tails, and edges from a node $v$ that extend the same walks from $s$ to $v$ are grouped into intervals of the array $E_v^{dep}$ of edges from $v$. These intervals are stored in a doubly linked list $\intervs{v}$ of quadruples where each interval $(l,r,c,e)\in \intervs{v}$ represents the association of edges in $E_v^{dep}[l:r]$ to best extendable cost $c$ and parent edge $e$ where $e$ is an edge they all extend and such that there exists an $(E^{dep},E^{arr})$-respected walk from $s$ having cost $c$ and ending with $e$. We also maintain the overall interval $(l_v,r_v)$ spanned by $\intervs{v}$. Note that this interval is considered to be empty when $r_v<l_v$.
%Note that this data-structure allows to update the cost associated to all edges of an interval in constant time.
Algorithm~\ref{alg:gal} describes how to update these intervals each time an edge $e$ with head $v$ is scanned. It relies on the fact that intervals of $\intervs{v}$ are consecutive in $E_v^{dep}$ and are also ordered by non-decreasing associated costs. When the scan of $E^{arr}$ has progressed sufficiently, the best extendable cost of some edges will not change anymore and it is then stored directly in an array $B$ through the procedure $\processcosts(v,j)$ which erases intervals of $\intervs{v}$ up to index $j$ and stores the cost associated to the corresponding edges in $B$ as detailed in Algorithm~\ref{alg:fixcosts}.
%The overall approach of our algorithm is to perform two main tasks during a single scan of $E^{arr}$. First, it identifies the edges that extend walks from $s$ in the partial temporal graph resulting from the edges scanned so far. For such an edge $f$ with tail $v$, we define its \emph{best extendable cost} as the minimum cost of an $sv$-walk that $f$ extends in this partial graph. The algorithm associates to such edges their \emph{best extendable cost} as more edges are scanned in $E^{arr}$ and more walks are considered in this growing partial graph. We will refer to this first task as \emph{update costs}. Second, it infers the minimum cost $c$ of walks from $s$ ending with the edge $e=(u,v,\tau,\lambda)$ currently scanned from the best extendable cost of $e$. This task will be called \emph{compute best cost}. 
At the end of the scan, our algorithm returns the lists $(A'_v)_{v\in V}$ %\fb{we are free to put $A_v$ just if we do not put the easy algorithm} 
which contain the minimum cost associated to all possible $(s,E^{dep},E^{arr})$-reachable edges. During the execution, we build parent pointers, that allow to represent, for each such edge $e\in A'_v$, an $sv$-walk with minimum cost ending with edge $e$ by associating to each edge $f$ the edge $P[f]$ preceding it in such a walk.
%Algorithm~\ref{alg:gal} gives a detailed description of the algorithm and Algorithm~\ref{alg:fixcosts} describes the  procedure $\processcosts(v,j)$ which stores in $B[f]$ the best extendable cost of edges $f$ with tail $v$ and index at most $j$ in $E_v^{dep}$ when we have progressed sufficiently in the scanning of $E^{arr}$ to ensure that we have considered all walks that such edges $f$ extend in $G$.

%new version with index in ProcessCosts:
\begin{algorithm}[ht]
%\SetAlgoLined
\DontPrintSemicolon %otherise print the semicolon in \lIf
\Input{A doubly-sorted representation $(E^{dep},E^{arr})$ of a temporal graph $G$ with waiting-time constraints $(\alpha,\beta)$ and cost structure $({C}, \gamma,\oplus,\preceq)$ satisfying isotonocity, and a source node $s$.}
\Output{Minimum cost of an $(E^{dep},E^{arr})$-respected $sv$-walk for each node $v$ and for each $(s,E^{dep},E^{arr})$-reachable edge $e$ with head $v$.}

For each node $v$, generate the list $E^{dep}_v$ by bucket sorting $E^{dep}$. \label{ling:prebegpar}\\
\For{each node $v$}{
  Set $A'_v:=\emptyset$. \Comment{List of pairs of $s$-reachable edge and cost.}
%  Set $p_v:=0$. \Comment{Index of the last processed edge in $E_v^{dep}$.}
%  Set $r_v:=0$. \Comment{Index of the last right bound of an interval in $\intervs{v}$.}
  Set $\intervs{v}:=\emptyset$. \Comment{Doubly linked list of consecutive intervals of $E_v^{dep}$.}
  Set $(l_v,r_v):=(1,0)$. \Comment{Overall interval of $E_v^{dep}$ spanned by $\intervs{v}$.}
}
%Set all the edges in $E^{arr}$ as unmarked.\\
Set best extendable cost $B[e]:=\bottom$ and parent pointer $P[e]:=\bottom$ for each edge $e\in E^{arr}$.\label{ling:preendpar}\\

\For{each edge $e = (u,v,\tau,\lambda)$ in $E^{arr}$\label{ling:begoutfor}}{
    Let $i$ be the index of $e$ in $E_u^{dep}$.\label{ling:begbest}\\
    $\processcosts(u,i)$ \label{ling:callonu} \Comment{Obtain $B[e]$ in particular.}
    \If{$u=s$ or $B[e]\neq \bottom$\label{ling:edgecondition}}{
        \Comment{Get the minimum cost $c$ of a walk having $e$ as last edge:} 
        \lIf{$u=s$ and ($B[e]=\bottom$ or $\gamma(e) \prec B[e]\oplus \gamma(e)$)\label{ling:costsource}}{$c:=\gamma(e)$ and $P[e]:=e$}
        \lElse{$c:=B[e]\oplus\gamma(e)$.}\label{ling:cost}
        Append $(e,c)$ to $A'_v$.\label{ling:add}\\
        \Comment{Find the interval $(l,r)$ of edges in $E_v^{dep}$ that extend $e$:}
        Let $a=\tau + \lambda$ be the arrival time of $e$.\label{ling:begupdate}\\
        Let $l \geq l_v$ be the first index of an edge $(v,w',\tau',\lambda') \in E_v^{dep}$ such that $\tau' \ge a + \alpha_v$ (set $l:=|E_v^{dep}|+1$ if no such index exists).\label{ling:left}\\
        Let $r \geq r_v$ be the last index of an edge $(v,w,\tau',\lambda') \in E_v^{dep}$ such that $\tau'\leq a + \beta_v$ (set $r:=r_v$ if no such index exists).\label{ling:setrvprime}\label{ling:right}\\
        %Let $j \geq l_v$ be the last index of an edge $(v,w',\tau',\lambda') \in E_v^{dep}$ such that $\tau' < a + \alpha_v$ (set $j:=l_v-1$ if no such index exists).\\
        \Comment{Remove from $\intervs{v}$ intervals preceding $l$ and set $l_v:=l$:%of edges with dep.\,time $< a+\alpha_v$:
        }
        $\processcosts(v,l-1)$\label{ling:callonv}\label{ling:begprocess}\\
        \Comment{Remove from $\intervs{v}$ intervals with cost greater than $c$:}
        Set $l_c:=\max\{l,r_v+1\}$. \Comment{First index in $(l,r)$ after $\intervs{v}$.} \label{ling:setl}
        \While{$\intervs{v}\not=\emptyset$ has last interval $I'=(l',r',c',e')$ satisfying $c\prec c'$ \label{ling:whileinterv}}{
          Remove $I'$ from $\intervs{v}$ and update $l_c:=l'$.\label{ling:delinterv}
        }
        \Comment{Associate cost $c$ and parent $e$ to edges in $E_v^{dep}[l_c:r]$:}
        \lIf{$l_c \le r$}{
          append interval $I=(l_c,r,c,e)$ to $\intervs{v}$. \label{ling:addinterv}% \Comment{%$I$ Associate $c$ to $E_v^{dep}[l_c:r]$.}
        }
        Set $r_v := r$.\label{ling:endupdate}
        %Set $p_v := r$.\label{ling:setrv}\label{ling:endprocess}\\
    }
    \label{ling:endoutfor}
}
\Return{the lists $(A'_{v})_{v\in V}$.}
\caption{Computing, for each node $v$ and each $(s,E^{dep},E^{arr})$-reachable edge $e$ with head $v$, the minimum cost of any $(E^{dep},E^{arr})$-respected $sv$-walk ending with $e$.}
\label{alg:gal}
\end{algorithm}

\begin{algorithm}[ht]
\Procedure{$\processcosts(v,j)$}{
  \While{the first interval $I=(l,r,c,e)$ in $\intervs{v}$ satisfies $l \leq j$ \label{linp:beginterv}}{
    Let $l' = \min \{r,j\}$.\\
    \lFor{each edge $f=(v,w,\tau,\lambda)$ in $E_v^{dep}[l:l']$}{$B[f]:=c$ and $P[f]:=e$.} \label{linp:setB}
    \leIf{$l'=r$}{remove $I$ from $\intervs{v}$}{update $I:=(j+1,r,c,e)$.} \label{linp:left-check}\label{linp:endinterv}
  }
  %\lIf{$\intervs{v}=\emptyset$}{
    Set $l_v:=j+1$. \label{linp:pv-check}
  %}
}
\caption{Set the best extendable cost and parent of edges in $E_v^{dep}[l_v:j]$ and remove corresponding intervals from $\intervs{v}$.
%from $v$ from the $p_v$-th edge to the $i$-th.%Set further best extendable costs of edges from $v$ with departing time less than $a+\alpha_v$.}
}
\label{alg:fixcosts}
\end{algorithm}

Let us denote by $G_k=(V,E^{arr}[1:k],\alpha,\beta)$ the temporal graph induced by the first $k$ temporal edges in $E^{arr}$. We now describe more precisely how the algorithm proceeds when the $k$th edge $e_k=E^{arr}[k]=(u,v,\tau,\lambda)$ is scanned.

It first evaluates the minimum cost of an $(E^{dep},E^{arr})$-respected $sv$-walk ending with $e_k$ at Lines~\ref{ling:begbest}-\ref{ling:add} of Algorithm~\ref{alg:gal}. The call to $\processcosts(u,i)$ at Line~\ref{ling:callonu} allows to definitely set the best extendable cost associated to edges up to $e_k$ in $E_u^{dep}$ and store this value for $e_k$ in $B[e_k]$ with an associated parent edge $P[e_k]$. We can thus test at Line~\ref{ling:edgecondition} if $e_k$ is an $(s,E^{dep},E^{arr})$-reachable edge: it is the last edge of an $(E^{dep},E^{arr})$-respected $sv$-walk when $u=s$ or $e_k$ has best extendable cost $B[e_k]$ different from $\bottom$. In the positive case, we generally obtain the minimum cost $c$ of such a walk by combining $B[e_k]$ with $\gamma(e_k)$ through the $\oplus$ combination function of the cost structure as $c=B[e_k]\oplus\gamma(e_k)$. However, edges from the source have to be handled with special care: if $u=s$ then the $sv$-walk $\langle e_k\rangle$ which has cost $\gamma_{\langle e_k\rangle}=\gamma(e_k)$ must also be taken into account. Note that a single edge walk is always $(E^{dep},E^{arr})$-respected. Lines~\ref{ling:costsource}-\ref{ling:cost} set $c$ accordingly and update $A'_v$ at Line\ref{ling:add}. When the minimum cost $c$ is obtained for walk $\langle e_k\rangle$, we set the parent pointer $P[e_k]:=e_k$ allowing to detect that $e_k$ is the first edge of the walk.

Once the minimum cost $c$ is computed, we update the best extendable cost of edges $f$ extending $e_k$ and such that $\langle e_k,f\rangle$ is $(E^{dep},E^{arr})$-respected at Lines~\ref{ling:begupdate}-\ref{ling:endupdate}.
The reason for considering only these edges is that the $(E^{dep},E^{arr})$-respected walks that are in $G_k$ and not in $G_{k-1}$ must contain edge $e_k$. Moreover $e_k$ must be the last edge of such a walk $Q$ as its edges must appear in order in $E^{arr}$ when $Q$ is $(E^{dep},E^{arr})$-respected.
%It consists in maintaining for each edge $f=(v,w,\tau',\lambda')$ its best extendable cost which is the minimum cost of an $sv$-walk $Q$ in $G_k$ such that $f$ extends $Q$. Edges from $v$ extending the same walks are grouped into consecutive intervals of $E_v^{dep}$ which are stored in a list $\intervs{v}$.
The general idea is to update $\intervs{v}$ so that its intervals span exactly these edges. The edges $f$ extending $e_k$ must have departure time within $a+\alpha_v$ and $a+\beta_v$ where $a=arr(e_k)$ is the arrival time of $e_k$. As $E_v^{dep}$ is node-departure sorted, they indeed correspond to a sub-array $E_v^{dep}[l:r]$ of edges where $l$ and $r$ are the indexes computed respectively at Lines~\ref{ling:left} and~\ref{ling:right}. Recall that $(l_v,r_v)$ is the overall interval spanned by $\intervs{v}$ after the last update at $v$. The reason for forcing $l\ge l_v$ at Line~\ref{ling:left} is that we consider only edges $f$ such that $\langle e_k,f\rangle$ is $(E^{dep},E^{arr})$-respected as detailed in the proof of Lemma~\ref{lem:kthiter}.
%
%This data-structure allows to update in constant time the best extendable cost of all the edges in an interval. We also store in $(l_v,r_v)$ the overall interval spanned by $\intervs{v}$. Note that values $(l_v,r_v)$ such that $r_v<l_v$ indicate that $\intervs{v}$ is empty. We update $\intervs{v}$ and $(l_v,r_v)$ each time the scanned edge $e_k$ has head $v$ and we have obtained the minimum cost $c$ of any $sv$-walk ending with $e_k$. We then associate this cost $c$ to the appropriate interval of edges in $E_v^{dep}$ that extend $e_k$, that is those with departure time in $[a+\alpha_v,a+\beta_v]$ where $a=arr(e_k)$. The ordering of $E_v^{dep}$ by non-decreasing departure time ensures that these edges are indeed consecutive in $E_v^{dep}$. More precisely they are in $E_v^{dep}[l:r]$ where $l$ and $r$ are the indexes computed respectively at Lines~\ref{ling:left} and~\ref{ling:right} relying on the assumptions $l\ge l_v$ and $r\ge r_v$ which can be deduced from the fact that $E^{arr}$ is node-arrival sorted and half-extend-respecting as shown in the proof of Lemma~\ref{lem:cor}.
Some edges may already belong to some previously constructed intervals when $l_v<l$. We first remove intervals of edges $f$ with index less than $l$ as they do not extend $e_k$ or $\langle e_k,f\rangle$ is not $(E^{dep},E^{arr})$-respected. This is done through the call to $\processcosts(v,l-1)$ at Line~\ref{ling:begprocess}. Note that $l_v=l$ after this call. All edges in $E_v^{dep}[l_v:r_v]$ now extend $e_k$ and belong to some interval of $\intervs{v}$. The remaining edges extending $e_k$ are thus in $E_v^{dep}[l_c:r]$ where $l_c=\max\{l,r_v+1\}$ is set at Line~\ref{ling:setl} and we aim at creating an interval $(l_c,r,c,e_k)$ for associating these edges to cost $c$ and parent $e_k$.
However, we first remove intervals associated to a cost $c'$ greater than $c$ at Lines~\ref{ling:whileinterv}-\ref{ling:delinterv}. The reason is that a better cost is obtained by extending $e_k$ for them.
For that, we use the key property that intervals in $\intervs{v}$ are all consecutive and their associated costs are non-decreasing. The intervals with cost greater than $c$ are thus at the end of $\intervs{v}$. While removing such intervals, the left bound $l_c$ is updated to include the corresponding edges in the interval $(l_c,r)$ of $E_v^{dep}$.
%The list $\intervs{v}$ of intervals is stored in a doubly linked list so that it is possible to add and remove an interval both at the beginning or at the end of $\intervs{v}$ in constant time.\lv{move to complexity analysis?}
Finally, the edges in $E_v^{dep}[l_c,r]$ are associated to cost $c$ and parent $e_k$ at Lines~\ref{ling:addinterv}-\ref{ling:endupdate} by adding interval $(l_c,r,c,e_k)$ to $\intervs{v}$ and updating $r_v$ accordingly. As all intervals with cost greater than $c$ have been removed, we maintain the fact $\intervs{v}$ is sorted by non-decreasing costs. The computation of $l_c$ also ensures that all intervals in $\intervs{v}$ remain consecutive. Note that we have $(l_v,r_v)=(l,r)$ at the end of the iteration.

Finally, for each node $v$, and each possible $(s,E^{dep},E^{arr})$-reachable edge $e$ with head $v$, the parent pointers provide a representation of an $(E^{dep},E^{arr})$-respected $sv$-walk $Q$ ending with $e$ and having minimum-cost. The walk $Q$ can be retrieved in $O(|Q|)$ time by computing $P[e],P[P[e]],\ldots$ until reaching the first edge $f$ such that $P[f]=f$.

%The time complexity of Algorithm~\ref{alg:gal} is linear. The main reasons are that each edge $e \in E^{arr}$ is scanned once and each edge $f\in E_v^{dep}$ is scanned at most twice: once for adding it in an interval and once when setting its best extendable cost; furthermore, for a given lists $E_v^{dep}$, we construct at most $|E_v^{dep}|$ intervals which are added to $\intervs{v}$ once and later removed only once. 

\bigskip

The correctness of the algorithm mainly follows from the following lemma.

\begin{lemma}\label{lem:kthiter}
  After the $k$th iteration of Algorithm~\ref{alg:gal}, if an edge $f$ with tail $v$ is associated to cost $c$, either through an interval $(l,r,c,e)\in\intervs{v}$ containing the index of $f$ in $E_v^{dep}$ or by the value $c=B[f]$ when $B[f]\not=\bottom$, then $c$ is the best extendable cost of $f$ with respect to the subset $E^{arr}[1:k]$ inducing graph $G_k$.
  Moreover, the edge $e_k$ scanned at the $k$th iteration gets associated to cost $c$ in $A'_v$ if and only if it is an $(s,E^{dep},E^{arr})$-reachable edge and $c$ is the minimum cost of any $(E^{dep},E^{arr})$-respected $sv$-walk ending with $e_k$.
\end{lemma}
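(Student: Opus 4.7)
The plan is to proceed by induction on $k$, strengthening the statement so that it also asserts \emph{completeness}: at the end of iteration $k$, every edge $f$ whose best extendable cost w.r.t.\ $G_k$ is not $\bottom$ is actually associated to that cost (either through an interval of $\intervs{v}$ or via $B[f]$). The base case $k=0$ holds vacuously since $G_0$ has no edges, every $\intervs{v}$ is empty and every $B$ value is $\bottom$. For the inductive step I would fix iteration $k$ with scanned edge $e_k=(u,v,\tau,\lambda)$ at position $i$ in $E_u^{dep}$ and proceed in two phases: processing $e_k$ itself, then updating $\intervs{v}$.

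For the first phase, the call $\processcosts(u,i)$ at Line~\ref{ling:callonu} copies, for every edge at position $\leq i$ in $E_u^{dep}$, its interval-associated cost into $B$; by the induction hypothesis this yields the correct best extendable cost w.r.t.\ $G_{k-1}$, and in particular sets $B[e_k]$. Because the definition of $(E^{dep},E^{arr})$-respected forces consecutive edges of any such walk to appear in strictly increasing order in $E^{arr}$, the prefix $Q$ in any $(E^{dep},E^{arr})$-respected walk $Q.e_k$ uses only edges of $E^{arr}[1:k-1]$; hence $e_k$'s best extendable cost w.r.t.\ $G_{k-1}$ and w.r.t.\ $G_k$ coincide. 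Isotonicity of $\oplus$ then guarantees that the cost $c$ computed at Lines~\ref{ling:costsource}-\ref{ling:cost}, namely the better of $B[e_k]\oplus\gamma(e_k)$ and (when $u=s$) $\gamma(e_k)$, is exactly the minimum cost of any $(E^{dep},E^{arr})$-respected $sv$-walk ending with $e_k$, proving the second statement of the lemma.

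The hard part is the second phase, where I must show that after the updates at Lines~\ref{ling:begupdate}-\ref{ling:endupdate} the invariant still holds for every edge at every position of $E_v^{dep}$. The crucial claim is: at the start of iteration $k$, for every edge $f$ at position $i'<l_v$ in $E_v^{dep}$ that extends $e_k$, the walk $\langle e_k,f\rangle$ is not $(E^{dep},E^{arr})$-respected. To prove it I would pick the most recent iteration $k^*<k$ at which $l_v$ rose past $i'$. If that change came from the first call to $\processcosts$ (Line~\ref{ling:callonu}), then $e_{k^*}$ has tail $v$ at a position $\geq i'$, so taking $e':=e_{k^*}$ yields $f\leq_{E^{dep}} e'$ and $e'<_{E^{arr}} e_k$, violating the respected property. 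Otherwise the change came from the second call (Line~\ref{ling:callonv}), where $e_{k^*}$ has head $v$ and the edges at positions in $[l_v^{\mathrm{old}},l^{(k^*)}-1]$ have departure times strictly less than $arr(e_{k^*})+\alpha_v$; since $E^{arr}$ is node-arrival sorted and both $e_{k^*}$ and $e_k$ have head $v$, we get $arr(e_{k^*})\leq arr(e_k)$, hence $dep(f)<arr(e_k)+\alpha_v$, contradicting that $f$ extends $e_k$.

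Given this claim, restricting the update to positions $[l,r]$ with $l\geq l_v$ is sound: the edges it would otherwise affect either cannot be extended through $e_k$ in a respected walk (so their invariant is already satisfied) or were already finalized into $B$ with the correct cost. After $\processcosts(v,l-1)$ moves the edges of $[l_v,l-1]$ from intervals into $B$, the surviving intervals still represent correct best extendable costs w.r.t.\ $G_{k-1}$. By isotonicity, the new candidate cost for any edge in $[l,r]$ obtained by extending a walk through $e_k$ is exactly $c$; intervals of $\intervs{v}$ are maintained to have non-decreasing costs along positions, so those with cost strictly greater than $c$ form a suffix and are correctly discarded at Lines~\ref{ling:whileinterv}-\ref{ling:delinterv}, while surviving intervals keep costs at most $c$ and hence their associated best extendable costs w.r.t.\ $G_k$ do not improve. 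Finally the interval $(l_c,r,c,e_k)$ added at Line~\ref{ling:addinterv} covers exactly the edges of $[l,r]$ whose best extendable cost w.r.t.\ $G_k$ becomes $c$, completing the inductive step.
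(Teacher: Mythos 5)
Your overall strategy is the same as the paper's (induction on $k$, first arguing that the value $c$ computed at Lines~\ref{ling:costsource}--\ref{ling:cost} is the minimum cost of a respected walk ending with $e_k$, then analysing the update of $\intervs{v}$), and your ``crucial claim'' correctly establishes one half of the key step, namely that edges skipped to the left of $l_v$ cannot be appended to $e_k$ in an $(E^{dep},E^{arr})$-respected way; your two-case analysis of how $l_v$ rose past such an edge (Line~\ref{ling:callonu} versus Line~\ref{ling:callonv}, the latter using node-arrival sortedness) is exactly the right mechanism. (A minor slip: you measure $l_v$ ``at the start of iteration $k$'', but the relevant value is the one after the call at Line~\ref{ling:callonu} of iteration $k$ itself; when $u=v$ this call may advance $l_v$ further, and those extra positions need $e'=e_k$ itself as the witness.)

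The genuine gap is the converse, \emph{soundness} direction, which you assert but never prove: that every edge that actually receives cost $c$ during the update --- the edges of the intervals removed at Lines~\ref{ling:whileinterv}--\ref{ling:delinterv} and the edges newly covered by the interval $(l_c,r,c,e_k)$ added at Line~\ref{ling:addinterv} --- really admits a respected walk $Q.f$ of cost $c$ with $Q$ ending in $e_k$, so that $c$ is an \emph{achievable} best extendable cost and not an over-optimistic value. Two facts are needed and are missing from your write-up. First, for every $f$ at position at least $l$ in $E_v^{dep}$, the pair $\langle e_k,f\rangle$ is respected; this requires observing that every edge with tail $v$ already scanned (including $e_k$ itself when $u=v$) lies at a position strictly below $l_v\le l$, precisely because of the calls to $\processcosts$ at Line~\ref{ling:callonu} --- you use this mechanism only in the other direction. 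Second, the edges of the removed intervals, which sit at positions up to $r_v$, must actually \emph{extend} $e_k$, i.e.\ one needs $r_v\le r'$ where $r'$ is the last index with departure time at most $arr(e_k)+\beta_v$; this follows from node-arrival sortedness applied to the earlier iterations with head $v$ (their windows end no later than $arr(e_k)+\beta_v$), but without it the reassignment of a removed interval's edges to cost $c$, or the fallback $r:=r_v$ at Line~\ref{ling:right}, could attach cost $c$ to edges whose true best extendable cost is strictly larger, contradicting the first statement of the lemma. The paper's proof devotes a full paragraph to pinning down $l''+1\le l_v\le\max\{l',l''+1\}$ and $r_v\le r'$ so that the updated range coincides \emph{exactly} with the set of edges $f$ for which $Q.f$ is respected; your argument only bounds the range from one side, so the inductive invariant (associated cost $=$ best extendable cost) is not fully established.
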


Note that exactly one of the three following cases occurs: $f$ has no associated cost, or $f$ is in an interval of $\intervs{v}$, or we have $B[f]\not=\bottom$. This is due to the way we maintain the value $l_v$ which is always the leftmost bound of an interval of $\intervs{v}$: once a value $B[f]$ is set for an edge $f$ with tail $v$ by a call to $\processcosts(v,j)$, $l_v$ is updated to a value greater than $j$, and $f$ cannot appear in an interval of $\intervs{v}$ anymore.

\begin{proof} 
  We prove the statement by induction on $k$. As there are no edges and no walks in $G_0$ and no edge has initially an associated cost, the statement holds for $k=0$. Assume that the statement holds for $k-1$ and let us prove it for $k$. Recall that the best extendable cost of an edge $f$ with respect to $E^{arr}[1:k]$ is the minimum cost of an $sv$-walk $Q$ in $G_k$ that $f$ extends and such that $Q.f$ is $(E^{dep},E^{arr})$-respected (where $v$ denotes the tail of $f$). By the induction hypothesis, we must update the cost associated to $f$ only when there is such a walk $Q$ with cost $c$ which is in $G_k$ but not in $G_{k-1}$ and such that $c$ is lower than the cost associated to $f$ after the previous iteration. This can occur only when $Q$ contains the edge $e_k=E^{arr}[k]=(u,v,\tau,\lambda)$ which is scanned at the $k$th iteration. Moreover, as $Q.f$ is $(E^{dep},E^{arr})$-respected, so is $Q$. Its edges thus appear in order in $E^{arr}$ and $e_k$ must be its last edge. It is thus sufficient to consider the minimum cost $c^*$ of an $(E^{dep},E^{arr})$-respected $sv$-walk $Q$ ending with $e_k$ and compare it the cost associated to edges $f$ that extend $e_k$ and such that $Q.f$ is $(E^{dep},E^{arr})$-respected. 

  We first show that the value $c$ computed at Lines~\ref{ling:costsource}-\ref{ling:cost} is indeed $c^*$. Consider an $(E^{dep},E^{arr})$-respected $sv$-walk $Q$ ending with $e_k$ and having cost $c^*$. In the case where $Q$ has at least two edges, let $Q'$ denote the prefix of $Q$ excluding $e_k$. The induction hypothesis and the call to $\processcosts(u,i)$ at Line~\ref{ling:callonu} then ensure that $B[e_k]$ is set to the minimum cost of such a walk $Q'$ that $e_k$ extends and such that $Q'.e_k$ is $(E^{dep},E^{arr})$-respected. By isotonicity, this implies that $B[e_k]\oplus \gamma(e_k)$ is the minimum cost of an $(E^{dep},E^{arr})$-respected $sv$-walk ending with $e_k$ and having at least two edges. In the case where $Q$ has one edge, we have $Q=\langle e_k\rangle$, and $e_k$ must be an edge from $s$ and the cost of $Q$ is $\gamma(e_k)$. In both cases, the test at Line~\ref{ling:edgecondition} passes when $e_k$ is an $(s,E^{dep},E^{arr})$-reachable edge and the computation of $c$ at Lines~\ref{ling:costsource}-\ref{ling:cost} sets $c$ to the minimum of $\gamma(e_k)$ and $B[e_k]\oplus \gamma(e_k)$ when both cases occur, ensuring that $c=c^*$ is the minimum cost of any $(E^{dep},E^{arr})$-respected $sv$-walk ending with $e_k$. Moreover, Line~\ref{ling:add} then ensures that $e_k$ gets associated to cost $c=c^*$ in $A'_v$.

  We now show that the set $F$ of edges $f$ that extend $e_k$ and such that $Q.f$ is $(E^{dep},E^{arr})$-respected is precisely $E_v^{dep}[l:r]$ where $(l,r)$ are the values computed at Lines~\ref{ling:left}-\ref{ling:right}. As these edges extend $e_k$, their departure time lies within $arr(e_k)+\alpha_v$ and $arr(e_k)+\beta_v$ which correspond to an interval $(l',r')$ of $E_v^{dep}$ as $E_v^{dep}$ is node-departure sorted. We further restrict our attention to those edges $f$ such that $Q.f$ is $(E^{dep},E^{arr})$-respected or equivalently $\langle e_k,f\rangle$ is $(E^{dep},E^{arr})$-respected when $Q$ is assumed to be $(E^{dep},E^{arr})$-respected. That is we should consider only edges $f$ so that no edge $e'$ with tail $v$ satisfies both $f\leq_{E^{dep}} e'$ and $e' \leq_{E^{arr}} e_k$. Let $l''$ be the highest index in $E_v^{dep}$ of such an edge $e'$.
% We thus need to show $(l,r)=(\max\{l',l''\}, r')$.
  The call to $\processcosts(u',i')$ at Line~\ref{ling:callonu}, where $u'=v$ is the tail of such an edge $e'$ scanned before $e_k$ and $i'\le l''$ is the index of $e'$ in $E_v^{dep}$, ensures that $l_v$ is at least $i'+1$. When $u=v$, $e_k$ is itself such an edge $e'$, and the call to $\processcosts(u,i)$ at Line~\ref{ling:callonu} where $i\le l''$ is the index of $e_k$ ensures that $l_v$ is at least $i+1$. We thus have $l_v\ge l''+1$. Note also that $l_v$ was updated to $l''+1$ at most in these calls from Line~\ref{ling:callonu}. Moreover, each call to $\processcosts(v,j)$ at Line~\ref{ling:callonv} for an edge $e'<_{E^{arr}}e_k$ with head $v$ was made for an arrival time $arr(e')\le arr(e_k)$ as $E^{arr}$ is node-arrival sorted. This ensures that the argument $j$ of such a call was at most $\max\{l''+1,l'\}$. Similarly, the update of $r_v$ at the end of the corresponding iteration was at most $r'$.
  We thus conclude that we have $l''+1\le l_v\le\max\{l',l''+1\}$ and $r_v\le r'$ at the beginning of the $k$th iteration. The computation of $l$ and $r$ at Lines~\ref{ling:begbest}-\ref{ling:add} thus implies $l=\max\{l',l''+1\}$ and $r=r'$ and the interval $(l,r)$ of $E_v^{dep}$ indeed corresponds to edges of $F$.

  We finally show that each edge $f\in F$ gets associated to its best extendable cost with respect to $E^{arr}[1:k]$. This mainly relies on the induction hypothesis and the fact that $c=c^*$ is the minimum cost of a walk $Q$ in $G_k$ and not in $G_{k-1}$ that $f$ extends and such that $Q.f$ is $(E^{dep},E^{arr})$-respected. Among those edges $f\in F$ which are already associated with a cost $c'$, the removal of intervals at Lines~\ref{ling:setl}-\ref{ling:delinterv} ensures that we modify their associated cost only when $c'$ is greater than $c=c^*$. This relies on the property that $\intervs{v}$ is sorted by non-decreasing cost which is an invariant of the algorithm as the eventual interval $(l_c,r,c,e)$ added at the end of $\intervs{v}$ at Lines~\ref{ling:addinterv} has cost $c$ which is greater or equal to the cost of remaining intervals. The induction hypothesis and the optimality of $c$ ensure that the best extendable cost of these edges is $c$. The update of bound $l_c$ at Line~\ref{ling:delinterv} ensures that these edges get associated to cost $c$. All edges in $F$ that were not previously associated to a cost are those in interval $(\max\{l''+1,l',r_v+1\},r')=(\max\{l,r_v+1\},r)$ which is included in $(l_c,r)$ as $l_c$ is initialized to $\max\{l,r_v+1\}$ at Line~\ref{ling:setl} and can only decrease by the updates at Line~\ref{ling:delinterv}. These edges also get associated to $c$ through interval $(l_c,r,c,e)$, and it is their best extendable cost by optimality of $c$. Finally, all edges in $F$ that were associated to a cost $c'\prec c$ remain associated to the same cost which is their best extendable cost by the induction hypothesis.
\end{proof}

We can now state the following.

\begin{proposition}\label{prop:gal}
  Given a doubly-sorted representation $(E^{dep},E^{arr})$ of a temporal graph $G=(V,E,\alpha,\beta)$ with cost structure $({C},\gamma,\oplus,\preceq)$ satisfying isotonicity, and a source node $s$, Algorithm~\ref{alg:gal} computes in linear time and space, for each node $v$ and each $(s,E^{dep},E^{arr})$-reachable edge $e$ with head $v$, the minimum cost of any $(E^{dep},E^{arr})$-respected $sv$-walk ending with $e$.
\end{proposition}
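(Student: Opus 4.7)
My plan is to observe that correctness of Algorithm~\ref{alg:gal} is an immediate consequence of Lemma~\ref{lem:kthiter} applied at $k=|E|$: for every $(s,E^{dep},E^{arr})$-reachable edge $e_k=E^{arr}[k]$ with head $v$, the lemma guarantees that $e_k$ is appended to $A'_v$ at iteration $k$ together with the minimum cost of any $(E^{dep},E^{arr})$-respected $sv$-walk ending with $e_k$. So after the main loop terminates the lists $(A'_v)_{v\in V}$ contain exactly the desired pairs, and the parent pointers $P$ allow retrieval of an optimal walk by following $P[e], P[P[e]],\ldots$ as described above.

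It then remains to establish the linear space and time bounds. For space I would note that the preprocessing produces lists $E_v^{dep}$ whose total length is $|E|$, that each list $A'_v$ contains at most one entry per edge with head $v$ (hence $\sum_v|A'_v|\le|E|$), and that the arrays $B$, $P$, and the pairs $(l_v,r_v)$ take $O(|E|)$ space. The only delicate point is the collection of interval lists $\intervs{v}$: because at any moment the intervals in $\intervs{v}$ are pairwise disjoint subranges of $E_v^{dep}$, we have $\sum_v|\intervs{v}|\le \sum_v|E_v^{dep}|=|E|$ throughout the execution.

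For the time bound I would observe that preprocessing and initialization take $O(|E|)$, and the main loop executes exactly $|E|$ iterations; it therefore suffices to show that the total non-constant work is $O(|E|)$ by amortized analysis, which rests on three accounting observations. First, each iteration appends at most one interval at Line~\ref{ling:addinterv}, so at most $|E|$ intervals are ever created, and since each interval is removed at most once (either by the while loop at Line~\ref{ling:whileinterv} or by a call to $\processcosts$), the total cost charged to interval removals is $O(|E|)$. Second, the searches for $l$ and $r$ at Lines~\ref{ling:left}-\ref{ling:right} can be implemented by walking forward in $E_v^{dep}$ starting at indices $l_v$ and $r_v$ respectively; since $l_v$ and $r_v$ are monotonically non-decreasing for each fixed $v$ across the execution (they are only ever reassigned to values $\ge$ their current value, and bounded by $|E_v^{dep}|$), the total scanning work at node $v$ is $O(|E_v^{dep}|)$, summing to $O(|E|)$ overall. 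Third, the inner loop of $\processcosts$ at Line~\ref{linp:setB} assigns $B[f]$ for each edge $f\in E_v^{dep}$ at most once over the whole execution, because afterwards $l_v$ advances past the index of $f$ and never revisits it; hence the total work in all $\processcosts$ calls is $O(|E|)$.

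The main obstacle in carrying out the argument is making sure the accounting of $l_v$ and $r_v$ is tight and that the implementation of Lines~\ref{ling:left}-\ref{ling:right} respects it: one must verify that both the rightward scan to find $l$ starting from $l_v$ and the rightward scan to find $r$ starting from $r_v$ take time proportional to the amount by which $l_v$ and $r_v$ advance during the iteration, and that no iteration forces them to revisit already-processed positions. Once these implementation details are checked, the per-iteration cost telescopes into $O(|E_v^{dep}|)$ per node, yielding an overall $O(|E|)$ time complexity and proving the proposition.
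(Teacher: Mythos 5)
Your proposal follows essentially the same route as the paper's proof: correctness is read off from Lemma~\ref{lem:kthiter} applied at each iteration (using that any $(E^{dep},E^{arr})$-respected walk ending with $e=E^{arr}[k]$ has all its edges in $E^{arr}[1:k]$, hence lies in $G_k$), and the linear bounds come from the same amortized accounting: at most one interval created per iteration and each removed once, the scans for $l$ and $r$ charged to the advancement of $l_v$ and $r_v$, each edge finalized at most once inside $\processcosts$, and $\sum_{v}|E_v^{dep}|=|E|$. The one point to be careful about is your explicit claim that $l_v$ never decreases: as literally written, Line~\ref{linp:pv-check} sets $l_v:=j+1$ unconditionally, and the call at Line~\ref{ling:callonu} passes the index of the currently scanned edge in $E_u^{dep}$, which may lie to the left of the current $l_u$; the amortization thus needs this update to be read as $l_v:=\max(l_v,j+1)$ (which does not affect correctness, since such calls touch no interval), a monotonicity assumption the paper's own analysis relies on implicitly as well, so this is a presentational caveat rather than a divergence from the paper.
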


\begin{proof}
  The correctness of the algorithm follows directly from \Cref{lem:kthiter}. The reason is that any $(E^{dep},E^{arr})$-respected $sv$-walk $Q$ ending with an edge $e$ must have edges appearing in order in $E^{arr}$ so that if $k$ is the index of $e$ in $E^{arr}$, all edges of $Q$ are in $E^{arr}[1:k]$, and $Q$ is also a walk in $G_k$.

  Let us turn to the complexity analysis. Each edge $e\in E^{arr}$ is scanned only once. For all nodes $v$, each edge $f\in E_v^{dep}$ with index $i$ is finalized at most once: the first time $\processcosts(v,j)$ is called with a value $j\ge i$. The update of $l_v$ to $j+1$ in $\processcosts(v,j)$ ensures that $f$ is never finalized again. Computing the value of $l$ at Line~\ref{ling:left} takes $O(l-l_v)$ time, which thanks to the update of $l_v$ to $l$ in the call to $\processcosts(v,l-1)$  results in amortized time of $O(|E^{dep}_v|)$. Similarly, computing the value of $r$ takes $O(r-r_v)$ time, which thanks to the update of $r_v$ to $r$ results in amortized time of $O(|E^{dep}_v|)$. In addition, at most one interval is created at each iteration and later removed. The number of times we modify the left bound of an interval is bounded by the number of times we udpate $l_v$ which is $|E_v^{dep}|$ at most. As $\sum_{v\in V}|E_v^{dep}|=|E|$, Algorithm~\ref{alg:gal} runs in linear time assuming that operations with $\oplus$ and $\preceq$ can be computed in constant time. Finally, let us notice that for all nodes $v$, $\intervs{v}$ contains at most $|E_v^{dep}|$ intervals and the set $A'_v$ has size bounded by the number of temporal edges with head $v$. We thus have $\sum_{v\in V}|\intervs{v}|\le |E|$ and $\sum_{v\in V}|A'_v|\le |E|$. The space complexity of Algorithm~\ref{alg:gal} is thus linear.
\end{proof}

\bigskip

We now turn back to the zero-acyclic case with the following lemma.

\begin{lemma}\label{lem:all}
    Let $G$ be a zero-acyclic temporal graph and let $(E^{dep},E^{arr})$ be a half-extend-respecting doubly-sorted representation of $G$. Then any walk in $G$ is $(E^{dep},E^{arr})$-respected and any $s$-reachable edge is an $(s,E^{dep},E^{arr})$-reachable edge.
\end{lemma}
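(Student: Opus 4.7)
The plan is a direct argument exploiting the two sortedness properties of $(E^{dep},E^{arr})$. Let me address the two claims in order.

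For the first claim, consider an arbitrary walk $Q = \langle e_1, \ldots, e_k \rangle$ in $G$ and any pair of consecutive edges $e = e_i$, $f = e_{i+1}$, together with any edge $e' \in E$ having the same tail as $f$ and satisfying $f \leq_{E^{dep}} e'$. The goal is to show $e <_{E^{arr}} e'$. The key observation is to reduce this to the half-extend-respecting property of $E^{arr}$ by proving that $e'$ itself half-extends $e$. First I would use the node-departure sortedness of $E^{dep}$: since $f$ and $e'$ share the same tail, the contrapositive of the sortedness condition gives $dep(f) \leq dep(e')$ (otherwise we would have $e' <_{E^{dep}} f$). Next, because $f$ extends $e$ along the walk $Q$, in particular $f$ half-extends $e$, which means $\mathrm{head}(e) = \mathrm{tail}(f)$ and $arr(e) + \alpha_{\mathrm{tail}(f)} \leq dep(f)$. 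Combining with the inequality above, $arr(e) + \alpha_{\mathrm{tail}(e')} \leq dep(e')$, while $\mathrm{head}(e) = \mathrm{tail}(e')$ by hypothesis. Thus $e'$ half-extends $e$, and the half-extend-respecting property of $E^{arr}$ yields $e <_{E^{arr}} e'$, which is what we needed.

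For the second claim, if $e$ is an $s$-reachable edge with head $v$, then by definition there exists an $sv$-walk $Q$ ending with $e$. By the first part of the lemma, $Q$ is $(E^{dep},E^{arr})$-respected, so $e$ is the terminating edge of an $(E^{dep},E^{arr})$-respected walk from $s$, i.e., an $(s,E^{dep},E^{arr})$-reachable edge.

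The main subtle point I anticipate is the handling of $f \leq_{E^{dep}} e'$ with possible equality of departure times: node-departure sortedness only orders edges with strictly increasing departure time in one direction, so the deduction $dep(f) \leq dep(e')$ must go through the contrapositive, and it is here crucial that the inequality is non-strict because the half-extend condition $arr(e) + \alpha_{\mathrm{tail}(f)} \leq dep(f)$ is also non-strict, allowing the composition of the two $\leq$ inequalities. Note that zero-acyclicity is not needed explicitly in the argument; it is guaranteed implicitly by the existence of the half-extend-respecting ordering (as stated earlier in the paper).
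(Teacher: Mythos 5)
Your proof is correct and follows essentially the same route as the paper: deduce $dep(f)\le dep(e')$ from node-departure sortedness, combine it with the extension inequality $arr(e)+\alpha_{tail(f)}\le dep(f)$ to conclude that $e'$ half-extends $e$, and invoke the half-extend-respecting property of $E^{arr}$; the second claim then follows immediately as you state. Your explicit remark about the contrapositive and the non-strict inequalities is a fine elaboration of the same argument, not a different approach.
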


\begin{proof}
  Consider a walk $Q$ in $G$ and consider two consecutive edges $e,f$ of $Q$. We have to prove that for any edge $e'$ with same tail $v$ as $f$ and satisfying $f\leq_{E^{dep}} e'$, we have $e<_{E^{arr}}e'$. First, we have $arr(e) + \alpha_v \leq dep(f)$ as $f$ extends $e$. Second, $f\leq_{E^{dep}} e'$ implies $dep(f)\le dep(e')$ as $E^{dep}$ is node-departure sorted. Combining both inequalities, we get $arr(e) + \alpha_v \leq dep(e')$, that is $e'$ half-extends $e$. We must thus have $e<_{E^{arr}} e'$ as $E^{arr}$ is half-extend-respecting.
\end{proof}

Theorem~\ref{th:gal} is a direct consequence of the above lemma and Proposition~\ref{prop:gal}.

\section{Solving classical optimal temporal walks problems}
\label{sec:solving}

\paragraph*{Single-source fewest-edges walks}
%\label{sub:fewestedges}
%The one-to-all fewest-edges walk problem consist in computing from a given source a temporal walk to each node $v$ with minimum number of edges. Indeed,
As a very basic example, optimizing the number of edges in Algorithm~\ref{alg:gal} is straightforward: it suffices to consider the cost structure $(\mathbb{N},\gamma,+,\le)$ associated to integers ordered as usual and where each edge $e$ has cost $\gamma(e)=1$, the combination function being addition. It obviously satisfies isotonicity. Using Theorem~\ref{th:gal} 
%and the above extension of Algorithm~\ref{alg:gal}, we thus deduce 
thus implies that the single-source fewest-edges walk problem, that is computing an $sv$-walk with minimum number of edges for all nodes $v$, can be solved in linear time and space.

\paragraph*{Shortest-fastest walks}
A walk with shortest duration is also called a fastest walk, and a fastest walk having a minimum number of edges is called a shortest-fastest walk. For finding such walks, we define a cost structure $({C},\gamma,\oplus,\preceq)$ where $C=\mathbb{R}\times\mathbb{N}$. Given an edge $e=(u,v,\tau,\lambda)$, we define its cost $\delta(e)\in \mathbb{R}\times\mathbb{N}$ as $\gamma(e)=(\tau,1)$. We define the cost combination function $\oplus$ by $(\tau,k)\oplus (\tau',k')=(\tau,k+k')$. A walk departing at time $\tau$ and having $k$ edges thus has cost $(\tau, k)$. We define the cost total order by $(\tau,k)\preceq (\tau',k')$ when $\tau > \tau'$ or $\tau=\tau'$ and $k\le k'$. Among two walks, the one with latest departure is thus always preferred, and among several walks with same departure time, one with fewest edges is always preferred.
Given a source $s$, Algorithm~\ref{alg:gal} now outputs for each destination $v$ the set $A'_v$ of all pairs $(e,c)$ such that $e$ is an $s$-reachable edge with head $v$ and $c=(\tau,k)$ is the minimum cost of an $sv$-walk ending with $e$. Note that our cost definition implies that $\tau$ is the latest departure time of an $sv$ walk ending with $e$ and $k$ is the minimum number of edges among walks with departure time $\tau$ and last edge $e$. We thus obtain the shortest duration of an $sv$-walk as $D^*=\min_{(e,(\tau,k))\in A'_v} arr(e) - \tau$. We then obtain the minimum number of edges in a fastest $sv$-walk as $k^*=\min_{(e,(\tau,k))\in A'_v : arr(e)-\tau = D^*} k$. The edge $e^*$ for which we get the minimum value allows to obtain, through parent pointers, a walk having duration $D^*$ and $k^*$ edges, that is a shortest-fastest walk. Theorem~\ref{th:gal} 
%and the above extension of Algorithm~\ref{alg:gal}, we thus deduce 
thus implies that the single-source shortest-fastest walk problem can be solved in linear time and space.

\paragraph*{Linear combination of classical criteria}
%\label{sub:lincomb}
To exemplify the generality of the algebraic approach, we now give an example of cost structure allowing Algorithm~\ref{alg:gal} to compute optimal temporal walks for the linear combination of criteria used in~\cite{BentertHNN2020}.
% (with an impressive dexterity in handling the complex cost updates required along their algorithm with such an exhaustive combination).
Our formalism enables more modularity as all complex updates required by such an exhaustive combination are then encapsulated in operations $\oplus$ and $\prec$.
Given a walk $Q=\langle e_1=(v_0,v_1,\tau_1,\lambda_1),\ldots,e_k=(v_{k-1},v_k,\tau_k,\lambda_k)\rangle$, we consider the following criteria that we usually seek to minimize:

\smallskip
\begin{tabular}{lll}
    (1) & $\tau_k+\lambda_k$ &   arrival time (or foremost) \\
    (2) & $-\tau_1$ & departure time (or reverse-foremost)\\
    (3) & $\tau_k+\lambda_k-\tau_1$ & duration (or fastest)\\
    (4) & $\sum_{i=1}^k\lambda_i$ & total travel time (or shortest)\\
    (5) & $\sum_{i=1}^k c(e_i)$ & total cost (each edge $e\in E$ is associated to a cost $c(e)\in \mathbb{R}$)\\ %_{\ge 0}$)\\
    (6) & $k$ & number of edges (or fewest-edges)\\
    (7) & $\sum_{i=1}^{k-1}\tau_{i+1} - (\tau_i+\lambda_i)$ & total waiting time\\
\end{tabular}
\smallskip

Given $\delta_1,\ldots,\delta_7\in \mathbb{R}$, %_{\ge 0}$,
the \emph{linear combined cost} of $Q$ %for scalars $(\delta_1,\ldots,\delta_7)$
is defined in~\cite{BentertHNN2020} as:
\begin{multline*}
lin(Q) = \delta_1(\tau_k+\lambda_k) 
+ \delta_2(-\tau_1)
+ \delta_3(\tau_k+\lambda_k-\tau_1) 
\\
+ \delta_4(\sum_{i=1}^k\lambda_i)
+ \delta_5(\sum_{i=1}^k c(e_i))
+ \delta_6 \,\, k
+ \delta_7(\sum_{i=1}^{k-1}\tau_{i+1} - (\tau_i+\lambda_i)).
\end{multline*}

It is simply a linear combination of all classical criteria.
Note that we do not need to assume non-negativity of costs or scalars $\delta_1,\ldots,\delta_7$, enabling a more general framework than~\cite{BentertHNN2020}. % see section zero delay
To optimize such a combined cost, we define the cost structure $({C},\gamma,\oplus,\preceq)$ where $C=\mathbb{R}\times\mathbb{R}$. Given an edge $e=(u,v,\tau,\lambda)$, we define its combined cost $\delta(e)\in \mathbb{R}$ and its cost $\gamma(e)\in \mathbb{R}\times\mathbb{R}$ as:
$$
\delta(e) = (\delta_4-\delta_7)\lambda_i + \delta_5 c(e_i) + \delta_6
\mbox{ and }
\gamma(e) = (\tau,\delta(e)).
$$
Observe that they are linked to the linear combined cost of $Q$ by:
$$
lin(Q) = (\delta_1+\delta_3+\delta_7) \, arr(Q)  - (\delta_2+\delta_3+\delta_7) \, dep(Q) + \sum_{i=1}^k \delta(e_i).
$$
Recall that $arr(Q)=\tau_k+\lambda_k$ and $dep(Q)=\tau_1$ are the arrival time and the departure time of $Q$ respectively.
We define the cost combination function $\oplus$ by
$$
(\tau,\Delta)\oplus(\tau',\Delta') = (\tau,\Delta+\Delta').
$$
This definition implies that the cost of $Q$ is then $\gamma_Q=(\tau_1, \sum_{i=1}^k \delta(e_i))=(\tau,\Delta)$ with $\tau=dep(Q)$ and $\Delta=\sum_{i=1}^k \delta(e_i)$.
We finally define the cost total order $\preceq$ by
$$
(\tau,\Delta)\preceq (\tau',\Delta') \text{ when }  - (\delta_2+\delta_3+\delta_7)\tau + \Delta\le - (\delta_2+\delta_3+\delta_7)\tau' + \Delta'.
$$
This order is related to the minimization of $lin(Q)$ for a fixed arrival time $a$:
for all $sv$-walks $Q$ such that $arr(Q)=a$, minimizing $lin(Q)$ is equivalent to minimizing $- (\delta_2+\delta_3+\delta_7) \, dep(Q) + \sum_{i=1}^k \delta(e_i)=- (\delta_2+\delta_3+\delta_7)\tau + \Delta$ where $(\tau,\Delta)=\gamma_Q$ is the cost of $Q$. A walk $Q$ with minimum cost according to $\preceq$ thus has minimum value for $lin(Q)$ among all walks with same arrival time.

Note that the cost structure satisfies the isotonicity property: for any costs $(\tau_1,\Delta_1),(\tau_2,\Delta_2)$, $(\tau,\Delta)\in \mathbb{R}\times\mathbb{R}$, we have $(\tau_1,\Delta_1)\oplus(\tau,\Delta)=(\tau_1,\Delta_1+\Delta)$
and $(\tau_2,\Delta_2)\oplus(\tau,\Delta)=(\tau_2,\Delta_2+\Delta)$.
If $(\tau_1,\Delta_1)\preceq (\tau_2,\Delta_2)$, then we have
$- (\delta_2+\delta_3+\delta_7)\tau_1 + \Delta_1\le - (\delta_2+\delta_3+\delta_7)\tau_2 + \Delta_2$. By adding $\Delta$ on both sides of the inequality, we obtain $(\tau_1,\Delta_1)\oplus(\tau,\Delta)\preceq (\tau_2,\Delta_2)\oplus(\tau,\Delta)$.

Now running Algorithm~\ref{alg:gal} with this cost structure from a source node $s$ allows to compute for each destination $v$ the set $A_v'$ of all pairs  $(e,c)$ such that $e$ is an  $s$-reachable edge with head $v$ and $c=(\tau,\Delta)$ is the minimum cost of any $sv$-walk end with $e$ according to our cost structure. The minimum linear combination cost of an $sv$-walk can then be obtained through a linear scan of $A_v'$ as:
$$
\min \{ lin(Q) : Q \mbox{ is an $sv$ walk} \} = \min_{(e,(\tau,\Delta))\in A_v'} (\delta_1+\delta_3+\delta_7) \, arr(e) - (\delta_2+\delta_3+\delta_7) \, \tau + \Delta.
$$
This is due to the fact that for a given arrival time $arr(e)$, minimizing $lin(Q)$ is equivalent to minimizing $\gamma_Q$ according to $\preceq$, as discussed above, and that $A_v'$ contains a pair for all $s$-reachable edges with head $v$.
Using the above cost structure, we thus obtain the following corollary.

\begin{corollary}\label{cor:lincomb}
Given a half-extend-respecting doubly-sorted representation $(E^{dep},E^{arr})$ of a zero-acyclic temporal graph $G=(V,E,\alpha,\beta)$, a source node $s$, and $\delta_1,\ldots,\delta_7\in \mathbb{R}$,
the single-source minimum-combined-cost walk problem, that is computing  for all nodes $v$ an $sv$-walk with minimum linear combined cost for $(\delta_1,\ldots,\delta_7)$, can be solved in linear time and space.
\end{corollary}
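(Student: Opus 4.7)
The plan is to invoke Theorem~\ref{th:gal} directly with the cost structure $(C,\gamma,\oplus,\preceq)$ described just before the corollary statement, and then recover the minimum-$lin$ walk at each destination by a single linear pass over the output. First I would verify the hypotheses of Theorem~\ref{th:gal}: the isotonicity of $\preceq$ with respect to $\oplus$ has already been checked in the preceding discussion, and both $\oplus$ and the comparison $\preceq$ are evaluated by a constant number of arithmetic operations on pairs of reals, so the linear-time guarantee of Theorem~\ref{th:gal} translates without loss into a concrete linear-time, linear-space bound for Algorithm~\ref{alg:gal} on this cost structure.

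Next, Algorithm~\ref{alg:gal} produces, for every node $v$, the list $A'_v$ of pairs $(e,(\tau,\Delta))$ where $e$ ranges over the $s$-reachable edges with head $v$ and $(\tau,\Delta)$ is the $\preceq$-minimum cost of any $sv$-walk ending with $e$. Because the cost of a walk $Q=\langle e_1,\ldots,e_k\rangle$ satisfies $\gamma_Q=(dep(Q),\sum_{i=1}^{k}\delta(e_i))$, the identity
$$lin(Q)=(\delta_1+\delta_3+\delta_7)\,arr(Q)-(\delta_2+\delta_3+\delta_7)\,dep(Q)+\sum_{i=1}^{k}\delta(e_i)$$
shows that for all $sv$-walks $Q$ ending with a fixed last edge $e$, the value $arr(Q)=arr(e)$ is constant. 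Minimizing $lin(Q)$ among such walks is therefore equivalent to minimizing $-(\delta_2+\delta_3+\delta_7)\,dep(Q)+\sum_i\delta(e_i)$, which is precisely the quantity compared by $\preceq$. Consequently, for each node $v$ I would scan $A'_v$ once and take
$$\min_{Q\textrm{ is an }sv\textrm{-walk}} lin(Q)=\min_{(e,(\tau,\Delta))\in A'_v}\bigl((\delta_1+\delta_3+\delta_7)\,arr(e)-(\delta_2+\delta_3+\delta_7)\,\tau+\Delta\bigr),$$
and record the pair achieving the minimum so as to report an optimal walk via the parent pointers built by Algorithm~\ref{alg:gal}.

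The total running time is linear: Theorem~\ref{th:gal} gives the first phase in $O(|E|)$, and the post-processing scans $\sum_v|A'_v|\le|E|$ pairs and reconstructs a single optimal walk whose length is at most the number of its edges. The one subtle point, which I see as the main conceptual obstacle rather than a technical one, is precisely the choice of $\preceq$: because $lin(Q)$ mixes a term depending on $arr(Q)$ with a term depending on $dep(Q)$ and the internal structure of $Q$, one cannot directly apply an isotone order to $lin$ itself (fixing the last edge is required to freeze $arr$). Defining $\preceq$ to compare only the $dep$-dependent and internal terms isolates exactly the isotone part of the objective, and the arrival-time contribution is then added in the final scan. Once this separation is made explicit via the identity above, the remainder of the argument is routine.
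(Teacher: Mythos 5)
Your proposal is correct and follows essentially the same route as the paper: it uses the same cost structure $(\mathbb{R}\times\mathbb{R},\gamma,\oplus,\preceq)$ with the isotonicity check, invokes Theorem~\ref{th:gal}, and recovers $\min_Q lin(Q)$ by the same final linear scan of $A'_v$ via the identity $lin(Q)=(\delta_1+\delta_3+\delta_7)\,arr(e)-(\delta_2+\delta_3+\delta_7)\,\tau+\Delta$. The only cosmetic difference is that you freeze the arrival time by fixing the last edge rather than fixing the arrival time directly, which is an equivalent justification.
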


\paragraph*{Minimum-overall-waiting-time walks.}
%
%In particular, the \emph{one-to-all minimum-waiting walk problem}, that is computing an $sv$-walk with minimum total waiting time for all nodes $v$, can be solved in linear time and space by setting $\delta_1=\cdots=\delta_6=0$ and $\delta_7=1$.
%
Setting $\delta_7=1$ and $\delta_i=0$ for $i\not=7$, the above corollary implies in particular the existence of an algorithms that, given a half-extend-respecting doubly-sorted representation $(E^{dep},E^{arr})$ of a zero-acyclic temporal graph $G=(V,E,\alpha,\beta)$, and a source node $s$, can find  in linear time and space
%\begin{description}
%\item[one-to-all fewest-edges walks:]  $sv$-walks with minimum number of edges for all nodes $v$,
%\item[
\emph{single-source minimum-overall-waiting-time walks}, i.e.  $sv$-walks with minimum overall waiting time for all $v$.
%\end{description} 

%% \begin{corollary}\label{cor:fewestedges}
%% Given a doubly-sorted representation $(E^{arr},E^{dep})$ of a strict temporal graph $G=(V,E,\alpha,\beta)$, a source node $s$, and $\delta_1,\ldots,\delta_7\in \mathbb{R}$,
%% the one-to-all minimum-combined-cost walk problem, that is computing  for all nodes $v$ an $sv$-walk with minimum linear combined cost for $(\delta_1,\ldots,\delta_7)$, can be solved in linear time and space.
%% \end{corollary} 

\paragraph*{Profiles.} The profile function from a source node $s$ to a destination node $v$ associate to any starting time $t$ the earliest arrival time of a walk from $s$ to $v$ departing at time $t$ or later. The single-source profile problem consists in computing, for a given source $s$ and for every node $v$, a representation of the profile function from $s$ to $v$. %TODO check \textcolor{blue}
{A classical representation consists in listing all pairs $(d,a)$ where $a$ is a possible arrival time at $v$ and $d$ is the latest departure time of a walk arriving at time $a$ or before, such that $a$ is also the earliest arrival time at $v$ when departing from $s$ at time $d$ or later.} %\fb{I think we have to say something more about $a$. If $a$ is just possible arrival time there might not exists a piecewise-constant non-decreasing function passing through all the points.}
The profile function is indeed the only piecewise-constant non-decreasing function passing through these points. Note that this classical representation was introduced in the setting where waiting is unrestricted.

This extends naturally to waiting constraints when waiting at the source is unrestricted at starting time, i.e. we impose the waiting constraint only between two consecutive edges of a walk, and consider any walk from $s$ departing at time $t'\ge t$ as a valid walk when starting at time $t$ from $s$.
By setting $\delta_2=1$ and $\delta_i=0$ for $i\not=2$ in the cost structure for a linear combination of classical criteria described above,  Algorithm~\ref{alg:gal} then outputs for each $v\in V$ and for each $s$-reachable edge with head $v$ the pair $(e,d)$, where $d$ is the latest departure time of a walk from $s$ to $v$ ending with edge $e$. These pairs are ordered in $A'_v$ by non-decreasing arrival time of $e$. A linear scan allows to replace each pair $(e,d)$ by $(d, arr(e))$. Finally, it suffices to remove pairs that are Pareto dominated, that is those pairs $(d,a)$ for which a pair $(d',a')$ satisfies either $a'<a$ and $d'\ge d$ or $a'\le a$ and $d' > d$. As the pairs are sorted by non-decreasing arrival time, a single scan in reverse order allows to filter out those dominated pairs by comparing each pair with the last non dominated pair. The list of pairs $(a,d)$ we obtain this way is ordered both by increasing arrival time and by increasing departure time. One can easily see that it is a representation of the profile function. Note that filtering out dominated pairs relies on the assumption that waiting is unrestricted at the source $s$ when starting from it. 
%\lv{Filtering out dominated pairs relies on unrestricted waiting at source! We should put a remark about this and that it can be adressed by using a symetric version of our algorithm. Proposal:}

%\fb{Maybe should make a more clear distinction of resitrcited/unrestricted waiting at the source}
We now consider the setting where waiting at $s$ should be bounded by $\beta_s$ also when starting from it, that is a walk from $s$ departing at time $t'$ is considered as valid walk when starting at time $t$ from $s$, only if we have $t+\alpha_s\le t'\le t+\beta_s$.
Note that the profile function might not be non-decreasing in this setting. However, we can still solve the single-destination profile problem as follows. We run our algorithm on the reverse temporal graph where time is reversed, and which is obtained by turning each edge $(u,v,\tau,\lambda)$ into $(u,v,-(t+\lambda),\lambda)$.
This is equivalent to running a symmetric version of our algorithm for solving the single-destination problem by scanning edges backwards and computing for each departing edge at a node the earliest arrival time at the destination (which corresponds to the latest departure time in the reverse temporal graph).
For a given destination $x$, this allows to obtain for each node $v$, and for each edge $e$ departing from $v$, the earliest arrival time $a$ of walks from $v$ to $x$ starting with $e$. As this list is sorted by departure time, we can obtain in linear time the pairs $(d,a)$ where $a$ is the minimum arrival time among edges departing at time $d$. Each pair $(d,a)$ provides the earliest arrival time when starting in interval $(d-\beta_v,d-\alpha_v)$ and using an edge departing at time $d$.
A representation of the profile function from $v$ to $x$ can be obtained by keeping for each window of time covered by multiple overlapping intervals, the lowest earliest arrival time corresponding to such intervals. It can be computed  by merging the list of the left bounds of these intervals with the list of their right bounds: scanning the resulting list, while maintaining a queue of currently open left bounds with their associated arrival time, allows to compute for each consecutive interval of starting times, the earliest arrival time.
We omit the details of how to get efficiently the minimum arrival time associated to open intervals in the queue. 
%\lv{Check how to get the minimum arrival time in the queue in constant amortized time.} 
%LV the following is WRONG!!!! : we still consider waiting constraint at nodes rather than the source.
%Finally, we can solve the single-source profile problem by running this symmetric algorithm on the reverse temporal graph where time is reversed and which is obtained by turning each edge $(u,v,\tau,\lambda)$ into $(u,v,-t-\lambda,\lambda)$.

\begtodolater

\subsection{Single-source shortest-duration walks}
\label{sub:shortestduration}

The duration of a walk $Q$ is defined as the time $arr(Q)-dep(Q)$ elapsed between its departure and its arrival. It is one of the classical criteria for evaluating the quality of walks, shortest duration being usually preferred.
As an example, we now show how to obtain shortest-duration walks with Algorithm~\ref{alg:gal} and obtain the following result as a corollary of Theorem~\ref{th:gal}.

\begin{corollary}\label{cor:shortestduration}
Given a doubly-sorted representation $(E^{arr},E^{dep})$ of a strict temporal graph $G=(V,E,\alpha,\beta)$, and a source node $s$,
the single-source shortest-duration walk problem, that is computing a shortest duration $sv$-walk for all nodes $v$, can be solved in linear time and space.
\end{corollary}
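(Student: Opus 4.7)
The plan is to instantiate Theorem~\ref{th:gal} with a cost structure tailored to duration, then extract the optimum by a single linear scan of the per-node output lists. Since the duration of a walk $Q$ ending with an edge $e$ is $arr(e) - dep(Q)$, and $arr(e)$ is determined by $e$ alone, minimizing duration among $sv$-walks ending with $e$ is equivalent to maximizing the departure time of such walks. This motivates the following cost structure: let $C = \mathbb{R} \cup \{-\infty\}$, set $\gamma(e) = \tau$ for every edge $e = (u,v,\tau,\lambda)$, define the combination $\tau \oplus \tau' = \tau$ (the departure time of a walk is the departure time of its first edge), and define $\tau \preceq \tau'$ iff $\tau \geq \tau'$ (so a later departure has smaller cost). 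With this definition $\gamma_Q = dep(Q)$ for every walk $Q$. Alternatively, one can simply invoke Corollary~\ref{cor:lincomb} with $\delta_3 = 1$ and all other $\delta_i = 0$; but the direct definition is lighter and its isotonicity is immediate since $\oplus$ ignores its right operand: if $\tau_1 \preceq \tau_2$ then $\tau_1 \oplus \tau = \tau_1 \preceq \tau_2 = \tau_2 \oplus \tau$.

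Next, I would deal with the representation hypothesis. A strict temporal graph has positive travel times, hence no zero-travel-time edges at all and thus no zero-cycle, so it is zero-acyclic. The paragraph following Theorem~\ref{th:gal} notes that for zero-acyclic graphs any doubly-sorted representation can be turned into a half-extend-respecting one in linear time and space, so we may assume $(E^{dep}, E^{arr})$ is half-extend-respecting. Running Algorithm~\ref{alg:gal} with the cost structure above then returns, for each node $v$ and each $(s,E^{dep},E^{arr})$-reachable edge $e$ with head $v$, the minimum cost of an $(E^{dep},E^{arr})$-respected $sv$-walk ending with $e$, which by Lemma~\ref{lem:all} coincides with the maximum value of $dep(Q)$ over all $sv$-walks $Q$ with last edge $e$.

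To finish, I would scan each list $A'_v$ once to compute
\[
  D_v^* = \min_{(e,\tau) \in A'_v} \bigl( arr(e) - \tau \bigr),
\]
which by the argument above equals the shortest duration of any $sv$-walk, and to identify the pair $(e^*, \tau^*)$ achieving the minimum. Following the parent pointers from $e^*$ then yields an actual shortest-duration walk in time proportional to its length. Summed over all $v$, this post-processing is $O(\sum_v |A'_v|) = O(M)$, preserving the overall linear time and space bound from Theorem~\ref{th:gal}.

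The only subtle point, and the place I would take most care, is verifying that maximizing $dep(Q)$ per reachable ending edge truly delivers the minimum duration: this is trivial for walks ending with a fixed edge because the arrival time is then fixed, but one must observe that Theorem~\ref{th:gal} gives an output pair for every $s$-reachable edge with head $v$, so taking the minimum of $arr(e) - \tau$ over $A'_v$ ranges over a set that is guaranteed to contain the last edge of some duration-optimal walk. Everything else — isotonicity, the linear preprocessing to obtain a half-extend-respecting representation, and the single scan to extract $D_v^*$ — is routine.
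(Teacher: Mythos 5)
Your proposal is correct and follows essentially the same route as the paper: the paper also encodes the departure time of a walk via a left-projection combination ($\gamma(e)=-\tau$, $t\oplus t'=t$, usual $\le$), which is isomorphic to your $\gamma(e)=\tau$ with the reversed order, and then extracts the shortest duration by a linear scan of each $A'_v$ using $arr(e)-dep(Q)$. Your extra care about converting the doubly-sorted representation into a half-extend-respecting one (via the remark after Theorem~\ref{th:gal}) and about why the minimum over $A'_v$ cannot undershoot the true optimum matches the paper's reasoning, so there is nothing to add.
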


For a given destination $v$ and a given arrival time $a$, the duration of any $sv$-walk $Q$ arriving at time $a$ is $a-dep(Q)$. It is thus natural to consider $-dep(Q)$ as the cost of $Q$. This is indeed possible with the following cost structure $(C,\gamma,\oplus,\preceq)$ with $C=\mathbb{R}$, for each edge $e=(u,v,\tau,\lambda)\in E$ its cost is defined as $\gamma(e)=-\tau$, the combination function is defined by $t\oplus t'=t$ for all $t,t'\in\mathbb{R}$, and $\preceq$ is the usual $\le$ total order for reals. Note that the $\oplus$ definition implies that the cost of a walk $Q=\langle e_1,\ldots,e_k\rangle$ depends only on its first edge $e_1=(u_1,v_1,\tau_1,\lambda_1)$ and is given by $\gamma_Q=\gamma(e_1)=-\tau_1=-dep(Q)$.
The isotonicity property is obviously satisfied as for any $t_1,t_2,t\in \mathbb{R}$, we have $t_1\oplus t=t_1$ and $t_2\oplus t=t_2$.

Running Algorithm~\ref{alg:gal} using parent pointers as explained in Section~\ref{sub:parent}, thus results in a set $A'_v$ for each node $v$ where every possible arrival time $a$ is associated to a walk $Q$ minimum cost, that is minimizing $-dep(Q)$, or equivalently maximizing $dep(Q)$. Indeed, $Q$ is thus an $sv$-walk arriving at time $a$ with last departure time. Its duration $arr(Q)-dep(Q)=a-dep(Q)$ is thus guaranteed to be minimum among all $sv$-walks with arrival time $a$. We can finally scan all possible arrival times $a$ in $A_v'$ to obtain an $sv$-walk with overall shortest duration. Corollary~\ref{cor:shortestduration} is thus a consequence of Theorem~\ref{th:gal}.

\smallskip
Interestingly, when waiting is unrestricted, that is $\beta_v=+\infty$ for all $v\in V$, it is always possible to find for any shortest duration $sv$-walk $Q$, a temporal path $P$ from $s$ to $v$ with same duration, where a temporal path from $s$ to $v$ is an $sv$-walk visiting each node at most once.
It indeed suffices to wait instead of following a loop in $Q$, and $Q$ can easily be transformed into a temporal path by removing its loops (when assuming unrestricted waiting).
However, Algorithm~\ref{alg:gal} may still produce walks with loops. If temporal paths are desired, it is possible to post-process the output or use a modified cost structure preventing such loops. A possible such modification consists in preferring walks with less edges among those with same departure time, and is left to the reader.

\subsection{Pareto optimal walks}
\label{sub:pareto}

We now show how our algorithm generalizes the Pareto set computation described in~\cite{BrunelliCV2021}. We first recall the definition of the Pareto problem.
% Pareto
%Given a temporal graph $G=(V,E)$ with a cost structure $\mathcal{C}=({C},\gamma,\oplus,\preceq)$ over $E$, 
We say that a pair $(a_1,c_1)\in\mathbb{R}\times{C}$ \emph{dominates} a pair $(a_2,c_2)\in\mathbb{R}\times{C}$ if $a_1 < a_2$ and $c_1\preceq c_2$, or $a_1\leq a_2$ and $c_1\prec c_2$. 
%\fb{CHECK}\lv{the sequel} %Moreover, for any two nodes $u,v\in V$ and for any $t\in\mathbb{R}$, let $\allwalks uv(t)$ denotes the set of all walks from $u$ to $v$ with starting time at least $t$.  
%Given a time $t_0\in\mathbb{R}$, a walk $Q\in\allwalks sd(t_0)$ is \emph{Pareto $t_0$-optimal} among all walks in $\allwalks sd(t_0)$, if there is no walk $Q'\in\allwalks sd(t_0)$ such that $(arr(Q),\gamma_{Q})$ dominates $(arr(Q'),\gamma_{Q'})$. 
Consider a strict temporal graph $G=(V,E,\alpha,\beta)$ with cost structure $({C},\gamma,\oplus,\preceq)$.
An $sv$-walk $Q$ is \emph{Pareto optimal} if there is no $sv$-walk $Q'$ such that $(arr(Q'),\gamma_{Q'})$ dominates  $(arr(Q'),\gamma_{Q'})$.
The \emph{single-source Pareto} problem is then defined as follows. Given  a source node $s\in V$, compute, for each destination $v\in V$, the set $P_v$ containing all pairs $(a,c)\in\mathbb{R}\times{C}$ for which there exists a Pareto optimal walk $Q$ such that $a=arr(Q)$ and $c=\gamma_{Q}$.
%\lv{In the special case where $\alpha_v=0$ and $\beta_v=\infty$ for all $v$, we call it the \emph{unconstrained Pareto problem}.}
The problem defined in~\cite{BrunelliCV2021} is slightly more general as a time $t_0$ is also given, and only walks departing no earlier than $t_0$ are considered. An instance of this more general problem can easily be reduced to our version by discarding edges departing before $t_0$.

%Algorithm~\ref{alg:gal} indeed solves the one-to-all Pareto problem. Given a source $s$, consider the list $A_v'$ returned for node $v$. It contains all pairs $(a,c)$ such some $sv$-walks arrive at time $a$ and such that $c$ is the minimum cost among them.
The single-source Pareto problem can easily be solved by Algorithm~\ref{alg:gal} as it computes for each destination $v$ the set $A_v'$ of all pairs $(a,c)$ such that some $sv$-walks arrive at time $a$ and such that $c$ is the minimum cost among them. It thus suffices to remove from $A'_v$ dominated pairs to obtain $P_v$. As $A_v'$ is ordered by non-decreasing value of arrival time, this can easily be done in linear time.

\lv{Note that when waiting is unrestricted, we indeed have $A_v'=P_V$. (Are you sure?)}

\endtodolater

\section{Lower bound for the single-source optimal walk problem}
% \section{Lower bound for one-to-all shortest-duration walks}
\label{sec:lb}
% We now show that computing optimal temporal walks from an unsorted input cannot be performed in linear time by a \emph{comparison-only} algorithm, that is an algorithm using only comparisons for testing if an edge extends another. This holds even if the algorithm requires \emph{sorted arrivals as input} \lv{TODO: change to \emph{arrival-sorted representation}}, that is when it is required that the input contains the list of temporal edges sorted by non-decreasing arrival times. \lv{Use later? : We similarly say that it requires \emph{sorted departures as input} when the input must contain the list of temporal edges sorted by non-decreasing departure times.}

We now show that computing optimal temporal walks in linear time somehow requires both orderings needed by our algorithm. More precisely, we define an arrival-sorted representation (resp. a departure-sorted representation) of a temporal graph as a list of its temporal edges sorted by non-decreasing arrival times (resp. non-decreasing departure times). 
\todolater{Compare with node-arrival (resp. node-departure) sorted.}
We say that an algorithm is \emph{comparison-only} when it uses only comparisons for deciding whether an edge extends another one, or for deciding which walk has minimum cost among several walks. We show that any comparison-only algorithm optimizing general costs that can encompass overall waiting time, and taking as input either a departure-sorted representation or an arrival-sorted representation, must be slower than linear time by a logarithmic factor at least for some inputs.

% \begin{theorem}\label{th:lb}
% There exists a family of strict temporal graphs $(G_n)_{n\ge 0}$ with $O(n)$ nodes and $O(n)$ temporal edges with a marked source node, such that any comparison-only and sorted-arrival-restricted algorithm computing one-to-all shortest-duration walks in $G_n$ from the marked node has time complexity $\Omega(n\log n)$. Furthermore, this bound holds also for any such algorithm requiring sorted arrivals as input.
% \end{theorem}

%\fb{define minimum waiting walk}

\begin{theorem}\label{th:lb}
For each integral $n$ there exists a family of instances $\mathcal{I}_n$ (resp. $\mathcal{I}'_n$) of temporal graphs with unrestricted waiting and strictly positive travel times, given as departure-sorted representations (resp. arrival-sorted representations) with $O(n)$ nodes and $O(n)$ temporal edges, such that any comparison-only  deterministic algorithm computing single-source minimum-\-overall-\-waiting-\-time walks from instances in $\mathcal{I}_n$ (resp. $\mathcal{I}'_n$) has time complexity $\Omega(n\log n)$. Moreover, for any comparison-only randomized algorithm computing single-source minimum-\-overall-\-waiting-\-time walks from instances in $\mathcal{I}_n$ (resp. $\mathcal{I}'_n$), there exists an instance in $\mathcal{I}_n$ (resp. $\mathcal{I}'_n$) for which the expected running time is $\Omega(n\log n)$.
\end{theorem}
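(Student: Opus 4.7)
The plan is to reduce sorting to the problem and exploit the classical $\Omega(n \log n)$ comparison lower bound for sorting $n$ distinct elements. For each $n$, the family $\mathcal{I}_n$ will contain one instance per permutation $\sigma$ of $\{1, \ldots, n\}$. Fix the reference sequence $z_k = 2k-1$; then $I_\sigma$ has nodes $\{s, u, v_1, \ldots, v_n\}$ and the $2n$ edges $e_i^\sigma = (s, u, 0, z_{\sigma(i)})$ and $f_j = (u, v_j, 2j, 1)$ for $i, j = 1, \ldots, n$. Waiting is unrestricted and all travel times are strictly positive. The adversary produces the departure-sorted representation in $O(n)$ arithmetic steps and no comparison involving $\sigma$: list the $e$'s in index order (all share departure $0$, so this ordering is unconstrained), then the $f$'s in index order (already sorted by the $\sigma$-independent key $2j$).

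Next I would analyze optimal walks. The candidate $e_i^\sigma \cdot f_j$ is valid iff $z_{\sigma(i)} \leq 2j$, which is equivalent to $\sigma(i) \leq j$, and its total waiting at $u$ equals $2j - z_{\sigma(i)} = 2(j-\sigma(i))+1$, which is $1$ exactly when $i = \sigma^{-1}(j)$ and is at least $3$ otherwise. Hence the unique minimum-overall-waiting walk from $s$ to $v_j$ is $\langle e_{\sigma^{-1}(j)}^\sigma, f_j\rangle$, so reading the $n$ output walks recovers $\sigma^{-1}$, and therefore $\sigma$, in $O(n)$ additional work. In a comparison-only decision tree the leaf reached by the algorithm determines the position in the input array used in each reported walk; since this position must match $\sigma^{-1}(j)$ for every $\sigma$ compatible with the comparisons along the path to that leaf, at most one $\sigma$ is compatible with any given leaf. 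The tree therefore has at least $n!$ leaves, hence depth $\log_2(n!) = \Omega(n \log n)$, which is the deterministic part of the statement.

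For the randomized lower bound I would apply Yao's minimax principle: the expected cost of any deterministic algorithm on the uniform distribution over $\{I_\sigma : \sigma \in S_n\}$ lower-bounds the worst-case expected cost of any randomized algorithm, and a standard information-theoretic argument yields $\Omega(\log n!) = \Omega(n \log n)$ in expectation. For the arrival-sorted family $\mathcal{I}'_n$, I would mirror the construction by hiding $\sigma$ in the heads of edges sharing $\sigma$-independent arrival times: take nodes $\{s, u_1, \ldots, u_n, v_1, \ldots, v_n\}$, edges $e_i^\sigma = (s, u_{\sigma(i)}, 0, i)$ (arrivals $1, 2, \ldots, n$, sorted by the $\sigma$-independent index $i$) and $f_j = (u_j, v_j, n+j, 1)$ (arrivals $n+j+1$, sorted by $j$). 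The unique walk from $s$ to $v_j$ must use the $e$-edge with head $u_j$, which is $e_{\sigma^{-1}(j)}^\sigma$, so the same $n!$-leaves counting applies.

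The main obstacle is verifying that the adversary's preprocessing genuinely avoids sorting. The key trick is to place the $\sigma$-dependent data on an attribute irrelevant to the singly-sorted input's sorting key --- travel times of tied-departure edges in $\mathcal{I}_n$, heads of edges with sorted $\sigma$-independent arrivals in $\mathcal{I}'_n$ --- so that the singly-sorted representation is emitted in $O(n)$ time without any comparison between $\sigma$-dependent values. Consequently, the $\Omega(n \log n)$ cost is fully attributable to the algorithm rather than to the reduction.
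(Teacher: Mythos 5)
Your departure-sorted family $\mathcal{I}_n$ is sound and is essentially the paper's argument: the permutation is hidden in time values (here travel times of edges tied at departure $0$; in the paper, arrival times $\tau_{\pi(i)}$ of edges with distinct departures $-i$), the singly-sorted input order and all structural data (tails, heads) are identical across instances, the unique optimal walk to $v_j$ reveals $\sigma^{-1}(j)$, and the $n!$-outputs/decision-tree count plus Yao gives both the deterministic and randomized bounds. Since behaviour of a comparison-only algorithm on these instances can depend on $\sigma$ only through comparison outcomes, the counting is legitimate there.

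The arrival-sorted half, however, has a genuine gap. In your $\mathcal{I}'_n$ you encode $\sigma$ in the \emph{heads} of the edges $e_i^\sigma=(s,u_{\sigma(i)},0,i)$, i.e.\ in node identities rather than in time values. The comparison-only restriction (as defined in the paper) limits only how time and cost values may be used; node identifiers are ordinary indexable data, and indeed your instances all have the same multiset of times, differing only in adjacency. Consequently a correct algorithm can, for each $j$, read off the unique in-edge of $u_j$ (e.g.\ by bucketing the $e$-edges by head in $O(n)$ time), check the single time constraint $\mathrm{arr}(e)\le \mathrm{dep}(f_j)$ with one comparison, and output the unique $s$-$v_j$ walk: $O(n)$ time and $n$ comparisons on your whole family, contradicting the claimed $\Omega(n\log n)$ bound for it. The underlying counting step also fails: since instances in your $\mathcal{I}'_n$ differ in non-time attributes, two executions with identical comparison outcomes need not produce identical outputs, so ``at most $2^c$ outputs'' is no longer valid. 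The fix is the one the paper uses: keep the graph structure fixed (a single bottleneck node $u$ with $n$ in-edges and $n$ out-edges $f_j=(u,v_j,t_{\pi'(j)},t_n+j-t_{\pi'(j)})$) and hide the permutation in the \emph{departure} times of the $f_j$, which are invisible in an arrival-sorted list because the arrivals $t_n+j$ are permutation-independent; then many candidate walks reach each $v_j$ and discriminating the waiting-time-optimal one genuinely requires time comparisons.
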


% The proof follows from the reduction from a sorting problem to the one-to-all shortest-duration walks problem. 
% Note that the result can be extended to randomized algorithms by using Yao's principle.
% The construction does not use any waiting-time constraints, and the walks of the solution are indeed temporal paths.

%Note that the result holds both for deterministic and randomized algorithms, by using Yao's principle \fb{here and in the proof?}.
%Moreover, the construction does not use any waiting-time constraints, and the walks of the solution are indeed temporal paths.
Recall that unrestricted waiting is equivalent to $\alpha_u=0$ and $\beta_u=+\infty$ for all $u$ as in the classical setting for temporal graphs.
This result also holds when restricting the temporal graph model to integer times and $O(n)$ lifetime, that is when the union of departure and arrival times of all edges is included in an interval of length $O(n)$.

\begin{proof}
The first step of the proof is to build $\mathcal{I}_n$ and $\mathcal{I}'_n$.
Let us fix $2n$ integer times $\tau_1,\ldots,\tau_n$ and $t_1,\ldots,t_n$ such that $0<\tau_1<t_1<\tau_2<t_2<\cdots<\tau_n<t_n<3n$. Consider the set of vertices $V = \{s,u,v_1, \dots, v_n\}$.
For any two permutation $\pi$ and $\pi'$ of $[n]=\{1,\ldots,n\}$, we define the temporal edges $E_{\pi} = \{ e_i^{\pi} = (s,u, -i, \tau_{\pi(i)} + i) : i=1,\dots,n\}$, and $F_{\pi'} = \{f_j^{\pi'} = (u,v_j,t_{\pi'(j)},t_n+j-t_{\pi'(j)}) : j =1, \dots, n \}$. We can now define the temporal graphs $G_{\pi,\pi'} = (V, E_{\pi} \cup F_{\pi'}, \alpha, \beta)$, where $\alpha_x = 0$ and $\beta_x = +\infty$ for all $x \in V$. 

The family of instances $\mathcal{I}_n$ is given by the temporal graphs $\bigcup_{\pi} G_{\pi,id}$ given as departure-sorted representations, where $s$ is marked as the source node and $id$ denotes the identity permutation. Notice that the ordering of its temporal edges $E_{\pi} \cup F_{id}$ by non-decreasing departure time, is $(e_1^{\pi}, \dots, e_n^{\pi}, f_1^{id}, \dots, f_n^{id})$ for any $\pi$.
Similarly, the family of instances $\mathcal{I}'_n$ is given by the temporal graphs $\bigcup_{\pi'} G_{id,\pi'}$ given as arrival-sorted representations. Note also that the ordering of its temporal edges $E_{id} \cup F_{\pi'}$ by non-decreasing arrival time, is $(e_1^{id}, \dots, e_n^{id}, f_1^{\pi'}, \dots, f_n^{\pi'})$ for any $\pi'$. 

Now suppose that we are given a deterministic algorithm $A$ for computing minimum-\-overall-\-waiting-\-time walks from $s$ to all nodes. In the temporal graph $G_{\pi,id}$ the possible temporal walks from $s$ to $v_j$ are given by $\langle e_i^{\pi},f_j^{id} \rangle$ such that $\tau_{\pi(i)} \leq t_j$, and the overall waiting time of such walk is $t_j - \tau_{\pi(i)}$. The minimum overall waiting time is thus obtained for the largest $\tau_{\pi(i)} \leq t_j$, that is for $\pi(i) = j$. This means that, there is a one to one correspondence between the outputs of $A$ and the permutations of $[n]$.  In particular, if algorithm $A$ is correct then there are at least $n!$ possible different outputs. Since $A$ is a deterministic comparison only algorithm and the input instance order $(e_1^{\pi}, \dots, e_n^{\pi}, f_1^{id}, \dots, f_n^{id})$ does not depend on $\pi$, two executions of $A$ with same comparisons lead to the same output. This means that, if we denote with $c$ the maximum number of comparisons made by $A$, there are at most $2^c$ different outputs. The correctness of $A$ thus implies $2^c \geq n!$. We then get $c \geq n \ln{n} -n$ and conclude that the time complexity of $A$ is $\Omega(n \log n)$.

More precisely, consider the decision tree corresponding to each time comparison. We have just argued that this tree has depth $n(\ln n - 1)$ at least. Consider the permutations where the execution terminates after $\frac{n}{2}\ln n$ comparisons only. As the subtree corresponding to such executions has at most $n^{n/2}$ leaves, there are at most $n^{n/2}$ such permutations. On instances built according to other permutations, algorithm $A$  requires at least $\frac{n}{2}\ln n$ comparisons. With uniform distribution over the inputs in $\mathcal{I}_n$, the average complexity of $A$ is thus at least $\frac{n! - n^{n/2}}{n!}\frac{n}{2}\ln n \ge  (1 - \exp(n-\frac{n}{2}\ln n))\frac{n}{2}\ln n = \Omega(n\ln n)$.
Yao's principle then implies that for any randomized algorithm solving the single-source minimum-\-overall-\-waiting-\-time walk problem, there exists an instance in $\mathcal{I}_n$ on which its average running time is $\Omega(n\log n)$.

Similarly, in a temporal graph $G_{id,\pi'}$ the possible temporal walks from $s$ to $v_j$ are given by $\langle e_i^{id},f_j^{\pi'} \rangle$ such that $\tau_{i} \leq t_{\pi'(j)}$, and the overall waiting time of such walk is $t_{\pi'(j)} - \tau_{i}$. The minimum overall waiting time is thus obtained for the largest $\tau_{i} \leq t_{\pi'(j)}$, that is for $i = \pi'(j)$. This, again, means that, there is a one to one correspondence between the outputs of $A$ and permutations of $[n]$ and we can conclude similarly to the previous case.
\end{proof}
\todolater{Put the remark for shortest duration}

% \todolater{What if the edges are ordered by dep time?}

\section{Equivalence between space-time and doubly-sorted representations}
\label{sec:representations}

We now show that the doubly sorted representation is equivalent to the classical ``space-time'' representation~\cite{PallottinoS1997}. The latter consists in transforming a temporal graph into a static graph by introducing a copy of each node for each possible time instant. Each temporal edge is then turned into a static edge from the two corresponding copies of its tail and head. We consider here a variant where we introduce copies of a node only for time instants corresponding to a departure time of an edge from that node, or an arrival time of an edge to that node, following the approach of~\cite{SchulzWW2000}.

Formally, given a temporal graph $G=(V,E)$, its \emph{space-time  representation} is a directed graph $D=(W,F^c \cup F^w)$, where:
\vspace{-.5\baselineskip}
\begin{itemize}
\addtolength{\itemsep}{-.5\baselineskip}
    \item The nodes in $W$ are labeled nodes $v_{\tau}$, where $v\in V$ refers to a node of $G$ and $\tau$ is a time label. 
    More precisely, $v_{\tau} \in W$ if and only if there exists a temporal edge in $E$ with tail $v$ and departure time $\tau$ or a temporal edge with head $v$ and arrival time $\tau$. We will also refer to such nodes as \emph{copies of $v$}. Let us denote with $Pred^w(v_\tau)$ the copy of $v$ in $W$ with maximum time label less than $\tau$, if it exists.
  \item We distinguish two types of arcs $F^c$ and $F^w$ called connection arcs and waiting arcs respectively. The set $F^c$ contains an arc $(u_\tau,v_{\tau+\lambda})$ for each temporal edge $e=(u,v,\tau,\lambda)\in E$. These arcs represent a temporal connection between nodes in $V$ and are called \emph{connection arcs}. Note that each arc $(v_{\tau}, w_{\nu})$ in $F^c$ satisfies $\tau \leq \nu$, since travel times are non-negative.
    The set $F^w$ is defined to contain an arc $(Pred^w(v_\tau),v_\tau)$ for each $v\in V$ and for each copy $v_\tau$ of $v$ such that $Pred^w(v_\tau)$ is defined. These arcs represent the possibility to wait at a node in $v \in V$ during a walk in $G$ and are called \emph{waiting arcs}. Note that each arc $(v_\tau,v_\nu)$ in $F^w$ satisfies $\tau < \nu$.
    As we allow temporal edges  which are self loops (i.e. edges $(v,v,\tau,\lambda)$), there might exist two copies of an arc in $D$, one in $F^c$ and one in $F^w$. Formally, $D$ is thus a directed multigraph as we distinguish arcs in $F^c$ from those in $F^w$ and assume $F^c\cap F^w=\emptyset$. % which belong both to $F^c$ and $F^w$. 
\end{itemize}

%\fb{Split the following into two propositions for the sake of clairty of the proof}
%Now we show that there is some sort of equivalence between the two representations. More formally:
We can now state the following equivalence.

\begin{proposition}\label{prop:eq}
Let $G=(V,E,\alpha,\beta)$ be a temporal graph.
\vspace{-.5\baselineskip}
\begin{itemize}
\addtolength{\itemsep}{-.5\baselineskip}
    \item[A.] If $G$ is zero-acyclic and a space-time representation of $G$ is given, it is possible to compute in linear time and space a doubly-sorted representation $(E^{dep},E^{arr})$ of $G$ such that $E^{dep}$ and $E^{arr}$ are both half-extend-respecting.
    \item[B.] Given a doubly-sorted representation $(E^{dep},E^{arr})$ of $G$ it is possible to compute in linear time and space a space-time representation of $G$.
\end{itemize}
\end{proposition}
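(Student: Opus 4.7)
For Part A the plan is to build an auxiliary DAG $\tilde D$ from the given space-time graph by splitting each copy $v_\tau\in W$ into an in-vertex $v_\tau^{\mathrm{in}}$ and an out-vertex $v_\tau^{\mathrm{out}}$: the connection arc of each edge $(u,v,\tau,\lambda)$ becomes $u_\tau^{\mathrm{out}}\to v_{\tau+\lambda}^{\mathrm{in}}$; between consecutive copies of every $v$ (read off the $F^w$-chains in linear time) I add forward arcs $v_{t_i}^{\mathrm{in}}\to v_{t_{i+1}}^{\mathrm{in}}$ and $v_{t_i}^{\mathrm{out}}\to v_{t_{i+1}}^{\mathrm{out}}$; and at each $v$ I add an arc $v_{t_i}^{\mathrm{in}}\to v_{t_j}^{\mathrm{out}}$ where $t_j$ is the smallest time label at $v$ with $t_j\ge t_i+\alpha_v$, located by a two-pointer sweep per node. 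The splitting is essential because the given $D$ is $\alpha$-blind and may be cyclic even when $G$ is zero-acyclic: for instance a cycle of zero-travel-time edges passing through a node with $\alpha>0$ fails to be a zero-walk in $G$ yet still traces a cycle in $D$, so a topological sort of $D$ itself would be incorrect.

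Next I would check that $\tilde D$ is a DAG whenever $G$ is zero-acyclic. Every arc of $\tilde D$ weakly increases the time coordinate, and only zero-travel-time connection arcs and in-to-out arcs at nodes with $\alpha_v=0$ preserve it, so a directed cycle in $\tilde D$ must alternate exactly between these two arc types at a constant time $\tau$, which traces a zero-cycle of $G$ and yields a contradiction. A Kahn topological sort on the $O(M)$ vertices and arcs of $\tilde D$ then runs in linear time, and I obtain $E^{\mathrm{arr}}$ (resp.\ $E^{\mathrm{dep}}$) by bucket-sorting the edges by the topological position of their head in-vertex (resp.\ tail out-vertex). Node-arrival and node-departure sortedness come from the in-to-in and out-to-out chains. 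Half-extend-respectingness of each list is verified by exhibiting, whenever $f=(v,w,\tau',\lambda')$ half-extends $e=(u,v,\tau,\lambda)$, a path in $\tilde D$ from the head (resp.\ tail) copy of $e$ to that of $f$ that uses the connection arc of $e$, the in-to-out arc at $v$ (which reaches some out-copy at time $\le\tau'$, since $\tau'$ is a time at $v$ with $\tau'\ge arr(e)+\alpha_v$), out-to-out arcs up to $v_{\tau'}^{\mathrm{out}}$, and finally the connection arc of $f$.

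For Part B the construction is more direct. I would bucket-sort $E^{\mathrm{dep}}$ and $E^{\mathrm{arr}}$ by tail and head to obtain, for every $v$, the lists $E_v^{\mathrm{dep}}$ and $E_v^{\mathrm{arr}}$ sorted by departure (resp.\ arrival) time; a linear two-pointer merge of the two corresponding time sequences yields the sorted deduplicated list of time labels of $v$, from which I instantiate the copies $v_\tau$ and add the waiting arcs $F^w$ between consecutive copies. Since the merge walks through $E_v^{\mathrm{dep}}$ and $E_v^{\mathrm{arr}}$ in order, each edge can be attached on the fly to pointers to its tail- and head-copies, so the connection arcs $F^c$ are emitted in a single additional scan; the total work and space are $O(|E|)$.

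The main obstacle is Part A: the naive idea of topologically sorting $D$ directly is wrong precisely because $D$ ignores $\alpha$, and the $\alpha$-aware split together with the per-node in-to-out arc is the technical device that makes both the acyclicity argument and the extraction of the two sortings work. Everything else in both directions reduces to standard linear-time primitives (bucket sort, sorted merge, and Kahn's algorithm).
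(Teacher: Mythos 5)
Your proposal is correct, and your Part~B is essentially the paper's own construction (per-node bucket sorts, a merge of arrival and departure times into one event list per node, then emission of waiting and connection arcs). For Part~A you take a genuinely different route. The paper never builds an auxiliary acyclic graph: it runs a modified Kahn-style elimination directly on the possibly cyclic multigraph $D$ (Algorithm~\ref{alg:sttds}), ordering the \emph{arcs} via an ``arc-extend'' relation; when no vertex of out-degree zero remains, any blocking cycle lives at a fixed time label and consists of connection arcs only, and zero-acyclicity supplies a copy $v_\nu$ on it with $\alpha_v>0$ and no outgoing waiting arc whose incoming connection arcs can be safely removed; a symmetric run then produces $E^{dep}$. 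You instead make the $\alpha$-awareness structural: the in/out split with a single in-to-out arc to the earliest label $\ge t_i+\alpha_v$ turns the space-time graph into a DAG $\tilde D$ precisely under zero-acyclicity, after which one ordinary topological sort suffices and \emph{both} lists are read off the same sort (head in-vertices for $E^{arr}$, tail out-vertices for $E^{dep}$). Your acyclicity argument and the witnessing path for half-extend-respectingness are sound; two details are worth making explicit: (i) ties within a bucket are harmless, since if $f$ half-extends $e$ and the two relevant copies coincided, your nonempty witnessing path would be a cycle of $\tilde D$, contradicting its acyclicity (equivalently, such a tie would force a zero-cycle of $G$); and (ii) a doubly-sorted representation also includes the cross pointers between the two lists, which you obtain for free by recording each edge's index in both bucket sorts. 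As for what each approach buys: the paper's arc-elimination avoids any auxiliary graph and certifies the arc-extend-respecting order directly, while your reduction to an honest DAG makes the correctness argument more modular (one acyclicity lemma plus standard topological sorting) and yields the two orderings from a single pass instead of two symmetric runs.
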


\begin{proof}
In order to prove Proposition~\ref{prop:eq}.A, we design an algorithm that computes a node-arrival-sorted half-extend-respecting list from a space-time representation of a zero-acyclic temporal graph. It is inspired by Kahn's algorithm for computing a topological ordering of a directed acyclic graph~\cite{Kahn62}.

We first define a notion of extending for arcs in $D$. Given two arcs $f_1,f_2$ in $F^c\cup F^w$, we say that $f_2$ \emph{arc-extends} $f_1$ when the head $v_\nu$ of $f_1$ is also the tail of $f_2$ and we have $f_1\in F^w$ or $f_2\in F^w$ or $\alpha_v=0$ ($v$ is the node whose copy $v_\nu$ is the head of $f_1$). In particular, when $f_1$ and $f_2$ are both connection arcs, we must have $\alpha_v=0$.
Note that for every pair $e_1,e_2$ of temporal edges corresponding respectively to two connection arcs $f_1,f_2$ in $F^c$, and such that $e_2$ extends $e_1$, there must exist a path $P$ in $D$ starting with $f_1$, ending with $f_2$, and containing possibly intermediate waiting arcs in $F^w$. When $arr(e_1)=dep(e_2)$, $P$ contains only $f_1$ and $f_2$, and the minimum waiting time $\alpha_v$ at the head $v$ of $e_1$ must be zero since $e_2$ extends $e_1$. In all cases, each arc in $P$ arc-extends the preceding one according to our new definition.  We say that an ordering $F^{ord}$ of the arcs of $D$ is \emph{arc-extend-respecting} when we have $f_1<_{F^{ord}} f_2$ whenever $f_2$ arc-extends $f_1$ for any pair of arcs $f_1,f_2\in F^c\cup F^w$. It is thus sufficient to produce an arc-extend-respecting ordering $F^{ord}$ of $F^c\cup F^w$ to obtain a half-extend-respecting ordering of the temporal edges according the respective positions of their corresponding arcs in $F^{ord}$.

The main idea of the algorithm is to produce such an ordering by iteratively removing an arc from $D$ so that no other remaining arc arc-extends it. Each time an arc is removed, it is prepended to the list $F^{ord}$ which is initially empty. When a node $v_\nu$ has out-degree zero, we can safely remove all the arcs entering it. Repeating this would suffice when $D$ is acyclic. However, it may contain a cycle. This can only occur when all nodes in the cycle have same time label $\tau$ as each arc $(u_\tau,v_\nu)$ satisfies $\tau\le\nu$. This implies that all the arcs of the cycle must be in $F^c$. In such a case, the zero-acyclicity of $G$ ensures that at least one node $v_\nu$ of the cycle is a copy a node $v$ of the temporal graph with waiting-time constraint $\alpha_v>0$ as otherwise this cycle would correspond to a zero-cycle in $G$. When no node has out-degree zero, the algorithm thus selects any node $v_\nu$ having no out-arc in $F^w$ and satisfying $\alpha_v>0$, and then removes all its in-arcs that are in $F^c$. Note that all out-arcs of $v_\nu$ are then in $F^c$ and none of them arc-extends these in-arcs by the choice of $v_\nu$ such that $\alpha_v\not=0$.
Such a node must exist when $G$ is zero-acyclic and no remaining node has out-degree zero as there must then exist a cycle among nodes with maximum time label $\tau$ while no remaining waiting arc can lead to a copy with time label greater than $\tau$. As the algorithm can always progress, it terminates when all arcs have been removed.  See Algorithm~\ref{alg:sttds} for a formal description, where $\delta_{out}^{c}(v_{\tau})$ is the number of out-neighbors of $v_\tau$ through an arc in $F^c$, $\delta_{out}^{w}(v_{\tau})$ is the number of out-neighbors of $v_\tau$ through an arc in $F^w$. Furthermore, we denote with $N_{in}^{c}(v_{\tau})$ the set of in-neighbors $w_\nu$ of $v_\tau$ such that $(w_\nu,v_\tau) \in F^c$.

\begin{algorithm}[t]
\DontPrintSemicolon
\Input{A space-time representation $D=(W, F^c \cup F^w)$ of a zero-acyclic temporal graph $G=(V,E)$ with waiting-time constraints $(\alpha,\beta)$, given as adjacency lists $N_{in}^{c}(v_{\tau}), Pred^w(v_\tau)$ for each $v_{\tau}\in W$.}
\Output{An arc-extend-respecting ordered list $F^{ord}$ of the arcs in $D$.}

Compute $\delta_{out}^{c}(v_{\tau})$ and $\delta_{out}^{w}(v_{\tau})$ for each node $v_{\tau} \in W$.\label{lins:prebeg}

Compute the set $S$ of nodes $v_{\tau}$ such that $\delta_{out}^{c}(v_{\tau}) = 0$ and $  \delta_{out}^{w}(v_{\tau})= 0$.\label{lins:sets} \\

Compute the set $S'$ of nodes $v_{\tau}$ such that $\delta_{out}^{w}(v_{\tau}) = 0$ and $\alpha_v > 0$. \label{lins:setsp}\\
Set $F^{ord}:=\emptyset$.\label{lins:preend} \Comment{Arc-extend-respecting ordered list.}

\While{$S \cup S' \neq \emptyset$}{
    \eIf{$S \neq \emptyset$}{
        Select any $v_{\tau}\in S$.\\
        \lFor{each node $u_{\nu}$ in $N_{in}^{c}(v_{\tau}) \cup Pred^w(v_\tau)$}{
          $\removearc(u_{\nu},v_{\tau})$.
        }
        Remove $v_{\tau}$ from $D$.
    }{
      Select any $v_{\tau}\in S'$.\\
      \lFor{each node $u_{\nu}$ in $N_{in}^{c}(v_{\tau})$}{
        $\removearc(u_{\nu},v_{\tau})$.
      }
    }

    % Choose any $v_{\tau}\in S\cup S'$. \label{lins:choose} \\
    % \For{each node $u_{\nu}$ in $N_{in}^{c}(v_{\tau})$}{
    %     prepend  $(u,v, \nu,\tau -\nu)$ to $E^{arr}$.\\
    %     Remove $(u_{\nu},v_{\tau})$ from $D$.\\
    %     Update accordingly the degrees of $u_{\nu}$, set $S$ and set $S'$.
    % }
    % \If{$v_{\tau} \in S$}{
    %     Remove $(Pred^w(v_\tau),v_\tau)$ from $D$.\\
    %     Update accordingly the degrees of $Pred^w(v_\tau)$, set $S$ and set $S'$.\\
    %     Remove $v_{\tau}$ from $D$.
    % }
}
\Return{$F^{ord}$}

\medskip
\Procedure{$\removearc(u_{\nu},v_{\tau})$}{
  Prepend $(u_{\nu},v_{\tau})$ to $F^{ord}$ and remove it from $D$.\\
  Update accordingly the degrees of $u_{\nu}$ and the sets $S,S'$.
}

\caption{Computing an arc-extend-respecting arc ordering of a space-time representation of a zero-acyclic temporal graph.}
\label{alg:sttds}
\end{algorithm}

This algorithm runs in linear time by maintaining $\delta_{out}^{c}(v_{\tau})$, $\delta_{out}^{w}(v_{\tau})$, and $N_{in}^{c}(v_{\tau})$ for each node $v_\tau$. This allows to maintain the set $S$ of nodes with out-degree zero, and the set $S'$ of nodes $v_\nu$ having no out-edges in $F^w$ and satisfying $\alpha_v>0$. A node in $S$ or $S'$ can then be selected in constant time. Each arc removal is performed with constant-time updates of out-degrees, in-neighbors, sets $S$ and $S'$. As each node is considered at most twice, the overall execution thus takes linear time.

Each time an arc is prepended to $F^{ord}$ by the algorithm, all arcs that arc-extend it must have already been removed and are thus already in $F^{ord}$. The resulting ordering $F^{ord}$ is thus arc-extend-respecting. We then obtain a half-extend-respecting ordering $E^{arr}$ of the temporal edges by removing waiting arcs from $F^{ord}$ and replacing each remaining connection arc $(u_\tau,v_\nu)$ by its corresponding temporal edge $(u,v,\tau,\nu-\tau)$. Note that $E^{arr}$ is also node-arrival sorted. This is due to the fact that for any node $v$ and any time labels $\tau$ associated to $v$, arcs entering the copy $v_\tau$ cannot be removed as long as it has an out-edge in $F^w$, and this out-edge is removed only when its next copy $v_\nu$ has out-degree zero. This implies that all arcs entering a copy $v_\mu$ with $\mu > \tau$ are prepended to $F^{ord}$ before all arcs entering $v_\tau$.

A similar algorithm can be symmetrically designed to obtain an ordering $E^{dep}$ which is half-extend-respecting and node-departure sorted. When producing a temporal edge in $E^{arr}$ and in $E^{dep}$, we can associate it to the index of its corresponding arc in $F^c$ so that we can easily construct pointers linking each edge in $E^{arr}$ to its copy in $E^{dep}$.

\medskip
  
On the other side, to prove Proposition~\ref{prop:eq}.B, we can use the following procedure. We  obtain for each $v \in V$, the lists $E^{dep}_v$ and $E^{arr}_v$ through bucket sorting. We merge these two lists into a single list $E^{event}_v$ for each node $v$. A linear scan of $E^{event}_v$ then produces all the sorted copies of $v$ in $W$, and associates to each edge having $v$ as head or tail the corresponding copy of $v$. A linear scan of all temporal edges then allows to construct $F^c$ in linear time. Finally, we construct $F^w$ by scanning $E^{event}_v$ for each node $v$.
\end{proof}

%\medskip
Interestingly, the above result implies that any zero-acyclic temporal graph admits a half-extend-respecting ordering. Conversely, the existence of a half-extend-respecting ordering obviously prevents the presence of zero-cycles and implies zero-acyclicity. We thus obtain the following statement.

\begin{proposition}\label{prop:zeroacyclic}
A temporal graph $G=(V,E,\alpha,\beta)$ is zero-acyclic if and only if there exists an half-extend-respecting ordering of its edges.
\end{proposition}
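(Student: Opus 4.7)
The plan is to prove the two implications separately. The forward direction will be an immediate consequence of \Cref{prop:eq}.A, while the converse is a short contradiction argument using directly the definitions of zero-cycle and half-extend.

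For the forward implication, assume $G$ is zero-acyclic. From $G$ alone, a space-time representation $D=(W,F^c\cup F^w)$ can be built directly from the definition (one copy $v_\tau$ per node $v$ and relevant time instant $\tau$, one connection arc per temporal edge, and waiting arcs between consecutive copies of the same node). Applying \Cref{prop:eq}.A to $G$ and $D$ then yields a doubly-sorted representation $(E^{dep},E^{arr})$ of $G$ in which $E^{arr}$ is half-extend-respecting, which is the desired ordering.

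For the converse, suppose that $G$ is not zero-acyclic, so that there is a zero-cycle $\langle e_1,\ldots,e_k\rangle$ with $u_1=v_k$, common departure time $\tau$, travel times $\lambda_i=0$, and $\alpha_{u_i}=0$ for each $i\in[k]$. I claim that $e_{i+1}$ half-extends $e_i$ for $1\le i<k$ and that $e_1$ half-extends $e_k$. Indeed, the walk property, together with the wrap-around condition $u_1=v_k$, gives $u_{i+1}=v_i$ (indices taken cyclically). Moreover, $arr(e_i)+\alpha_{u_{i+1}}=\tau+0+0=\tau=dep(e_{i+1})$, so the half-extend condition holds in each case. Any half-extend-respecting ordering $E^{ord}$ of $E$ would therefore force the cyclic chain $e_1<_{E^{ord}}e_2<_{E^{ord}}\cdots<_{E^{ord}}e_k<_{E^{ord}}e_1$, a contradiction. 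Hence no such ordering exists when $G$ is not zero-acyclic.

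There is essentially no technical obstacle: once \Cref{prop:eq}.A is available, the forward direction is immediate, and the converse relies only on unwinding the definitions. As a side remark, the argument above actually shows that the ``half-extends'' relation on $E$ contains a directed cycle if and only if $G$ admits a zero-cycle; an alternative, self-contained proof of the forward direction could therefore bypass \Cref{prop:eq}.A and simply take any topological ordering of this acyclic relation.
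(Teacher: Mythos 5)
Your proof is correct and follows essentially the same route as the paper: the forward direction is obtained by applying \Cref{prop:eq}.A to a space-time representation of the zero-acyclic graph, and the converse is the observation (which the paper leaves as ``obvious'') that a zero-cycle would force a cyclic chain $e_1 <_{E^{ord}} \cdots <_{E^{ord}} e_k <_{E^{ord}} e_1$ in any half-extend-respecting ordering. Your closing remark that the half-extends relation is acyclic exactly when $G$ is zero-acyclic, so a topological ordering would give a self-contained alternative, is also sound, but it does not change the substance of the argument.
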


Note that for a given set $E$ of temporal edges, zero-acyclicity depends only  on which nodes $v$ have non-zero minimum waiting time $\alpha_v$ as our algorithm for computing an half-extend-respecting ordering depends only on this.

\medskip
Proposition~\ref{prop:eq} also implies that a half-extend-respecting doubly-sorted representation $(E^{dep},E^{arr})$ can be computed in linear time and space from any doubly-sorted representation of a zero-acyclic temporal graph by constructing its space-time representation as an intermediate step. Interestingly, our single-source minimum-cost walk algorithm can be adapted to take as input either any doubly-sorted representation or a space-time representation. We thus obtain the following corollary of Theorem~\ref{th:gal}.

\begin{corollary}\label{cor:gal}
Given either a doubly-sorted representation or a space-time representation of a zero-acyclic temporal graph $G=(V,E,\alpha,\beta)$ with cost structure $({C},\gamma,\oplus,\preceq)$ satisfying isotonicity, and a source node $s\in V$, the single-source all-reachable-edge minimum-cost problem can be solved in linear time and space.
\end{corollary}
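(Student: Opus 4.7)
The plan is to derive \Cref{cor:gal} as a direct consequence of \Cref{th:gal} combined with the two directions of \Cref{prop:eq}, since those results already give us all the required machinery.

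First, I would handle the case where the input is a space-time representation of $G$. The idea is to reduce to the setting of \Cref{th:gal} by invoking \Cref{prop:eq}.A: in linear time and space, we can produce from the space-time representation a doubly-sorted representation $(E^{dep},E^{arr})$ in which $E^{arr}$ (and in fact also $E^{dep}$) is half-extend-respecting. This is precisely the hypothesis needed by \Cref{th:gal}, so applying it yields the minimum cost of an $sv$-walk ending with $e$, for every $s$-reachable edge $e$ with head $v$, in linear time and space. Adding the two linear-time reductions preserves overall linearity.

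Next, I would handle the case where the input is a doubly-sorted representation $(E^{dep},E^{arr})$ that is not necessarily half-extend-respecting. Here the plan is to compose the two directions of \Cref{prop:eq}: apply part B to build a space-time representation of $G$ in linear time from $(E^{dep},E^{arr})$, and then apply part A to this representation to obtain a \emph{half-extend-respecting} doubly-sorted representation $(E^{dep\,\prime},E^{arr\,\prime})$, again in linear time. Once we have $(E^{dep\,\prime},E^{arr\,\prime})$, \Cref{th:gal} solves the single-source all-reachable-edge minimum-cost problem in linear time and space. Since \Cref{lem:all} guarantees that in a zero-acyclic graph \emph{every} $s$-reachable edge is $(s,E^{dep\,\prime},E^{arr\,\prime})$-reachable, the output covers all $s$-reachable edges, which is exactly what the statement requires.

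The only subtlety worth spelling out is that the output must be indexed by the edges as they are represented in the \emph{input}. When we rebuild a new doubly-sorted representation in the second case, we must preserve explicit pointers back to the original quadruples of $(E^{dep},E^{arr})$; this is straightforward because both reductions in \Cref{prop:eq} are produced by linear scans and bucket sorts, and we can carry identifiers along at no extra asymptotic cost. Since no step introduces more than linear time or space, the overall algorithm is linear in $|E|$ as claimed. I do not expect any genuine obstacle: all hard work has already been done in \Cref{th:gal} and \Cref{prop:eq}, and the corollary is essentially their composition.
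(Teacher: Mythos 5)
Your proposal is correct and follows essentially the same route as the paper: for a space-time input, apply Proposition~\ref{prop:eq}.A and then Theorem~\ref{th:gal}; for a general doubly-sorted input, compose Proposition~\ref{prop:eq}.B with Proposition~\ref{prop:eq}.A to obtain a half-extend-respecting doubly-sorted representation and again invoke Theorem~\ref{th:gal}, all in linear time and space. The appeal to Lemma~\ref{lem:all} and the bookkeeping of pointers back to the input edges are harmless additions already implicit in the paper's argument.
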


\section{Handling zero travel-times}
\label{sec:zero}

We now show how our approach can be adapted to handle instances containing zero-cycles. Note that a temporal graph containing a zero-cycle fails to admit any half-extend-respecting ordering of its edges.
We will give an algorithm based on Algorithm \ref{alg:gal}, that solves the single-source all-reachable-edge minimum-cost problem in $O(|E| \log |V|)$ in this setting, where $E$ is the set of temporal edges and $V$ is the set of nodes. It still requires a doubly-sorted representation as input. The key point of this algorithm is to compute an ordering of the edges that can be handled correctly by Algorithm \ref{alg:gal} for a given source as long as the temporal graph satisfies the following property generalizing non-negative weights.

Consider a temporal graph $G=(V,E,\alpha,\beta)$ with a cost structure $\mathcal{C}=({C},\gamma,\oplus,\preceq)$. %, and let us denote $\lambda(e)$ the travel time of any temporal edge $e\in E$.
A cost $d\in C$ is said to be \emph{$\mathcal{C}$-non-negative} when it satisfies $c \preceq c \oplus d$ for all $c\in C$. This can be seen as a generalization of classical non-negativity.
We consider instances satisfying the following \emph{right-absorption} property (\emph{absorption} for short) which is a restriction to zero travel time edges of a property similarly considered in~\cite{Griff2010}: 
% \begin{equation}\tag{right absorption}\label{eq:absorption}  
%     \forall e \in E, \text{ if } \lambda(e) = 0 \implies \forall c \in C,\, c \preceq c \oplus \gamma(e). 
% \end{equation}
\begin{equation}\tag{absorption}\label{eq:absorption}
  \begin{split}
    \text{for any } e=(u,v,\tau,\lambda) \in E & \text{ such that } \lambda = 0 \text{ and } \alpha_u=\alpha_v=0, \\
    & \gamma(e) \text{ is $C$-non-negative, that is }
    c \preceq c \oplus \gamma(e) \text{ for all } c\in C.
  \end{split}
    %\text{for any } e=(u,v,\tau,\lambda) \in E \text{ such that } \lambda = 0 \text{ and } \alpha_u=0, \text{ for any } c \in C,\,  c \preceq c \oplus \gamma(e). 
\end{equation}
Notice that under absorption and isotonicity, there cannot exist any zero-walk $Q$ in $G$ such that $c \oplus \gamma_{Q} \prec c$, for any cost $c\in C$. Indeed, this absorption property captures a property similar to non-negativity of weights in classical shortest path computation. 
%\fb{Here I can just skip the problematic sentence of walks with no cycles.}
%In particular, this means that, in our problem, we can restrict to optimal walks $Q$ in $G$ that do not contain sub-sequences of temporal edges that are cyclic walks in $G_{\tau}$ for some $\tau$. \fb{rewrite handling source case}

Let us define the following property on a doubly sorted representation of a temporal graph. Let $G=(V,E,\alpha,\beta)$ be a temporal graph, $(E^{dep},E^{arr})$ be a doubly-sorted representation of $G$ and $s\in V$ be a source node. We say that $(E^{dep},E^{arr})$ is \emph{$s$-optimal-respecting} if for each $s$-reachable edge $e$ there exists a $(E^{dep},E^{arr})$-respected walk $Q$ from $s$ having last edge $e$ and with cost $c$, where $c$ is the minimum cost of any walk from $s$ ending with $e$. We can now state the following result.

% Let us define the following property on temporal edge subsets. Let $G=(V,E,\alpha,\beta)$ be a temporal graph, given a subset $E'$ of the temporal edges $E$, let us denote with $G'=(V,E',\alpha,\beta)$. For a given source $s \in V$ we say that $E'$ is $s$-optimal if for any $v \in V$, and any $sv$-walk $Q$ in $G$ with cost $c$, there exists an $sv$-walk $Q'$ in $G'$ with cost $c' \preceq c$. We can now state the following result.

% Let us define the following property on temporal edge orderings.  Given a temporal graph $G=(V,E,\alpha,\beta)$ and an ordered list $E^{ord}$ of the temporal edges, we say that a $uv$-walk $Q$ is \emph{well-ordered} in $E^{ord}$ if the edges of the walk appear in order in $E^{ord}$.
% %In particular, when $E^{ord}$ is walk-respecting, all walks are well-ordered in $E^{ord}$.
% For a given source node $s$, we say that $E^{ord}$ is \emph{$s$-optimal-walk-respecting} if for any $v \in V$ and any $s$-reachable edge $e\in A_v$ at $v$, there exists a minimum-cost $sv$-walk ending with $e$, that is well-ordered in $E^{ord}$. We can now state the following result.

\begin{theorem}\label{th:zero}
Given a doubly-sorted representation $(E^{arr},E^{dep})$ of a temporal graph $G=(V,E,\alpha,\beta)$ with cost structure $({C},\gamma,\oplus,\preceq)$ satisfying isotonicity and absorption, and a source node $s$,
the single-source all-reachable-edge minimum-cost problem can be solved in $O(|E| \log |V|)$ time and space.
\end{theorem}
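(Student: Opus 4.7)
The plan is to reduce to the setting of Theorem~\ref{th:gal} by constructing, in $O(|E|\log|V|)$ time, an $s$-optimal-respecting doubly-sorted representation $(E^{dep},E^{arr})$ of $G$ (which by Proposition~\ref{prop:zeroacyclic} cannot be half-extend-respecting once $G$ carries a zero-cycle), then invoking Proposition~\ref{prop:gal} so that Algorithm~\ref{alg:gal} produces the desired minimum costs in further $O(|E|)$ time.

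The key structural observation is that every zero-cycle lives inside a single \emph{zero-subgraph} $Z_t$: the directed graph on $V$ whose edges are those $(u,v,t,0)\in E$ with $\alpha_u=\alpha_v=0$, for some time $t\in\mathbb{R}$. The absorption property forces any zero-walk to be cost-monotone, so among all minimum-cost $s$-walks ending with a prescribed edge $e$, some representative walk intersects each $Z_t$ in a simple path which is in fact a shortest $\oplus$-path in $Z_t$ from its entry point under the edge weights $\gamma$. What we need within $Z_t$ is therefore an ordering of its edges that is consistent with a multi-source Dijkstra exploration of $Z_t$ seeded by the best known costs of walks reaching $V(Z_t)$ from outside of $Z_t$.

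I would interleave Algorithm~\ref{alg:gal} with Dijkstra computations along a chronological sweep of the events. Between successive time instants, run Algorithm~\ref{alg:gal} unchanged on the edges outside zero-subgraphs, using a local doubly-sorted order derived from the input (Lemma~\ref{lem:all} applies there). When the sweep reaches a time $t$ with non-trivial $Z_t$, freeze the best-extendable costs that Algorithm~\ref{alg:gal} has accumulated for the edges of $E_v^{dep}$ that depart from nodes $v\in V(Z_t)$ at time exactly $t$, and launch a Dijkstra on $Z_t$ with a priority queue keyed by these costs combined via $\oplus$ with $\gamma$ of each candidate outgoing edge. The extraction order of edges determines their relative positions inside $E^{arr}$ and inside each $E_v^{dep}$; they are spliced into the main sweep at time $t$, after which Algorithm~\ref{alg:gal} resumes on the next batch. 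By isotonicity the Dijkstra order is faithful to optimal-walk propagation inside $Z_t$, and outside zero-subgraphs the half-extend-respecting ordering of the input already guarantees faithfulness; together this yields $s$-optimal-respectingness of $(E^{dep},E^{arr})$.

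The running time decomposes into the $O(|E|)$ amortised cost of Algorithm~\ref{alg:gal}'s sweep and $O\bigl(\sum_t (|Z_t|+|V(Z_t)|)\log|V(Z_t)|\bigr)=O(|E|\log|V|)$ for the Dijkstra runs, since every zero-travel-time edge lies in exactly one $Z_t$ and $|V(Z_t)|\le|V|$. The main obstacle, requiring delicate bookkeeping, is the interaction between Algorithm~\ref{alg:gal}'s interval data structures $\intervs{v}$ and the Dijkstra splicing: each Dijkstra must see, at its seeds, exactly the best-extendable costs finalised by the preceding part of the sweep, and the splicing must preserve the invariants that $E^{dep}$ stays node-departure sorted and that each $\intervs{v}$ still spans consecutive indices of $E_v^{dep}$. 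Scheduling each $Z_t$'s Dijkstra only after Algorithm~\ref{alg:gal} has scanned all edges of $E^{arr}$ with arrival time $<t$ together with all non-$Z_t$ edges arriving at $t$ is what provides the synchronisation needed for these invariants to hold.
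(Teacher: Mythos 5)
Your overall strategy coincides with the paper's: sort the edges chronologically, run Algorithm~\ref{alg:gal}, and whenever a batch of simultaneous zero-travel-time edges between zero-$\alpha$ nodes (the paper's ``zero-block'', your $Z_t$) is reached, run a multi-source Dijkstra seeded with the best extendable costs accumulated so far, use its extraction order to reorder the block inside $E^{arr}$ (adjusting $E^{dep}$), and finally argue that the reordered pair is $s$-optimal-respecting so that Proposition~\ref{prop:gal} applies; your complexity accounting also matches. However, there is a concrete flaw in your synchronisation rule: launching the Dijkstra on $Z_t$ only after scanning \emph{all} non-$Z_t$ edges arriving at $t$ is wrong. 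Take $f=(u,w,t,0)$ with $\alpha_u=0$ but $\alpha_w>0$: it is not in $Z_t$, yet it extends any walk that traverses $Z_t$ and arrives at $u$ at time $t$. Scanning $f$ before the Dijkstra finalises its cost before such walks are known, so the cost reported for $f$ (and everything later propagated from it) can be suboptimal; equivalently, in the spliced ordering the last two edges of such a walk appear out of order in $E^{arr}$, so the representation fails to be $s$-optimal-respecting at $f$. This is exactly why Algorithm~\ref{alg:0del} splits each arrival time into four blocks and scans zero-travel-time edges with positive minimum waiting at the head \emph{after} the zero-block, not before.

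A second gap concerns your correctness argument outside the zero-blocks: you appeal to ``the half-extend-respecting ordering of the input'' and to Lemma~\ref{lem:all}, but Theorem~\ref{th:zero} assumes only a doubly-sorted representation, and once $G$ contains a zero-cycle no half-extend-respecting ordering exists at all (Proposition~\ref{prop:zeroacyclic}); so this justification is unavailable. The paper instead establishes $s$-optimal-respectingness directly: it first re-sorts $E^{arr}$ globally by arrival time (an $O(|E|\log|V|)$ step that your chronological sweep needs but never states), arranges the four blocks within each time, proves a correspondence between paths of the Dijkstra digraph and temporal walks (using absorption via Lemma~\ref{lem:paths} and Lemma~\ref{lem:dijkstra}), and then runs a case analysis on the predecessor of the last edge, relying on the specific swaps in $E^{dep}$ that move each block's entry edges $P'[v]$ ahead of other edges with the same tail and departure time. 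You explicitly flag this verification and the $E^{dep}$/interval bookkeeping as ``delicate'' but leave them unresolved; since this is precisely where the difficulty of the theorem lies, the proposal as written is a correct plan in outline but not yet a proof, and its stated scheduling must be repaired along the lines above before the plan can be carried out.
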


%\lv{Sort $E^{arr}$}
To prove this, we design an algorithm that reorders edges with zero travel time and same arrival time in both $E^{arr}$ and $E^{dep}$, producing a different doubly-sorted representation $(\Bar{E}^{arr},\Bar{E}^{dep})$ while performing a linear scan like in  Algorithm~\ref{alg:gal}. The doubly-sorted representation obtained through the reordering is $s$-optimal-respecting.
%To prove this, we design an algorithm that consists in a linear scan of a reordered set $E^{arr}'$ of the temporal edges in $E^{arr}$ and operates as in Algorithm~\ref{alg:gal}. %with special care for sequences of edges with zero travel time and same arrival time.
%The ordered subset $E'$ which is scanned is $s$-optimal, half-extend-respecting and node-arrival sorted.

To identify efficiently such sequences of zero travel time edges, we first sort $E^{arr}$ according to non-decreasing arrival times as follows. We obtain through bucket sorting the lists $E_v^{arr}$ of edges with head $v$ ordered by non-decreasing arrival time for all nodes $v$, and then merge them back using a priority queue in $O(|E|\log |V|)$ time. 
We obtain the ordered list $A^{arr}$ of all the arrival times of the edges by scanning $E^{arr}$. We can now refine the ordering of the edges with same arrival time in the following way. By bucket sorting, we separate edges with same arrival time into four blocks and then merge them back together one after the other. The first block are those edges having positive travel time, the second the edges with zero travel time and positive minimum waiting-time at the tail, the third the edges with zero travel time and zero minimum waiting-time both at the source and the head, and finally the fourth are the edges with zero travel time and positive minimum waiting-time at the head. The reason for this separation is that an edge arriving at time $a$ can be extended by a zero-walk at time $a$ only if its head has zero minimum waiting-time, and similarly, a zero-walk at time $a$ can be extend by an edge departing at time $a$ only if its tail has zero minimum waiting-time. More precisely, the notation we use to indicate the four blocks is the following. We denote by $E^{arr}_{a,\lambda>0}$ the sub-array of edges of $E^{arr}$ with arrival time $a$ and with positive travel time.
Among the edges with arrival time $a$ and with zero travel time we distinguish the following three blocks. We denote by $E^{arr}_{a,\lambda= 0,\alpha_{tail}>0}$ the sub-array of edges $e=(u,v,\tau,0)$ such that $\alpha_u > 0$, by $E^{arr}_{a,\lambda= 0,\alpha=0}$ the sub-array of edges $e=(u,v,\tau,0)$ that $\alpha_u = \alpha_v = 0$, and finally by $E^{arr}_{a,\lambda= 0,\alpha_{head}>0}$ the sub-array of edges $e=(u,v,\tau,0)$ such that $\alpha_v > 0$. In particular we refer to $E^{arr}_{a,\lambda= 0,\alpha=0}$ as a \emph{zero-block}. Before scanning each zero-block, its edges are reordered as described next.

Our goal is to identify certain minimum-cost walks that contain edges in the zero-block, and preserve the ordering of their edges. We represent the edges of the zero-block through a weighted static graph; indeed the the time labels and the waiting constraints in this case play a marginal role, since all the edges in the zero-block have same departure time, zero travel time and no minimum waiting constraints on their head and tail. In particular, walks in the digraph correspond to walks in the temporal graph.
We first identify edges that terminate minimum-cost walks containing exactly one edge in the zero-block. The heads of such edges will serve as sources in the static graph, and are associated to the cost of the aforementioned walks. From these sources, with their initial associated starting costs, we run Dijkstra algorithm~\cite{Dijkstra59} and build a shortest path forest from them. We then reorder the edges of the zero-block in the following way: first the edges terminating minimum cost walks to the sources, then edges corresponding to arcs in the shortest path forest so as to preserve path order for all paths in the forest, and finally the remaining edges of the zero-block. Moreover, we will also partially reorder the edges in $E^{dep}$: among the edges with same departure time we will put first those edges that we identified preceding the sources.
%We generate a weighted static graph that takes into account the minimum cost of walks found so far and the edges in that block. 
%First of all we identify which are the first edges of the block that will appear in walks with minimum cost. The head of such edges will be a suitable set of sources to run Dijkstra algorithm~\cite{Dijkstra59} from and build a shortest path forest from these nodes. 
%We reorder the edges of the block in the following way: first the edges we identified preceding the sources, then edges according to the positions of their corresponding arcs in the shortest path forest so as to preserve path order for all paths in the forest, and finally the remaining edges of the block.
%We run Dijkstra algorithm~\cite{Dijkstra59} on this graph from an appropriate set of source nodes, build a shortest path forest from these nodes, and we reorder the edges of the block according to the positions of their corresponding arcs in the shortest path forest so as to preserve path order for all paths in the forest.  \fb{this previous sentence can be a bit misleading because we add also first the parent of the roots}
%Moreover, we will also partially reorder the edges in $E^{dep}$: among the edges with same departure time we will put first those edges that we identified preceding the sources.

The algorithm then scans the reordered zero-block and the following edges again as in Algorithm~\ref{alg:gal} up to the next zero-block. We will prove that the two reordered lists obtained are indeed a doubly-sorted representation which is $s$-optimal-respecting.

%We will prove that the reordered subset of edges scanned is $s$-optimal and that the resulting global ordering is half-extend-respecting and node-arrival sorted.
% For each $sv$ minimum cost walk, there exists an $sv$ walk with same arrival time and same cost that is well-ordered in the resulting global order of the edges. 
% This order of the edges will guarantee that they are scanned accordingly with the order they appear in the optimal walks sequences.
% More precisely, we denote by $E^{arr}_{a,\lambda>0}$ the sub-array of edges of $E^{arr}$ with arrival time $a$ and with positive travel time and by $E^{arr}_{a,\lambda=0}$ the edges with zero travel time.
% %
% We denote by $E^{arr}_{a,\lambda= 0,\alpha_{tail}>0}$ the sub-array of edges $e=(u,v,\tau,0)$ in $E^{arr}_{a,\lambda=0}$ such that $\alpha_u > 0$ and by $E^{arr}_{a,\lambda= 0,\alpha_{head}>0}$ the sub-array of edges $e=(u,v,\tau,0)$ in $E^{arr}_{a,\lambda=0}$ such that $\alpha_v > 0$. Moreover, we denote by $E^{arr}_{a,\lambda= 0,\alpha=0}$ the \emph{block} of edges $e=(u,v,\tau,0)$ in $E^{arr}_{a, \lambda=0}$ such that $\alpha_u = \alpha_v = 0$.
% %We denote by $E^{arr}_{a,\lambda= 0,\alpha=0}$ the block of edges of $E^{arr}$ with arrival time $a$, zero travel time and zero minimum waiting. Similarly, we denote by $E^{arr}_{a,\lambda= 0,\alpha>0}$ the sub-array of edges of $E^{arr}$ with arrival time $a$, zero travel time and non-zero minimum waiting. 
% Finally, we denote with $A^{arr}$ the ordered list of all arrival times of edges in $E^{arr}$. See Algorithm~\ref{alg:0del} for a formal description.

\begin{algorithm}[H]
\DontPrintSemicolon %otherise print the semicolon in \lIf
\Input{A doubly-sorted representation $(E^{dep},E^{arr})$ of a temporal graph $G$ with waiting-time constraints $(\alpha,\beta)$ and cost structure $({C}, \gamma,\oplus,\preceq)$ satisfying isotonocity and absorption, and a source node $s$.}
\Output{Minimum cost of an $sv$-walk for each node $v$ and for each $s$-reachable edge $e$ with head $v$.}

Sort $E^{arr}$ by non-decreasing arrival time.\\
Scan $E^{arr}$ to compute $A^{arr}$ and blocks $E^{arr}_{a,\lambda>0}$, $E^{arr}_{a,\lambda=0,\alpha_{tail}>0}$, $E^{arr}_{a,\lambda=0,\alpha=0}$, $E^{arr}_{a,\lambda=0,\alpha_{head}>0}$.\\ Rebuild $E^{arr}$ by concatenating these blocks for increasing $a\in A^{arr}$.\label{ling:ord}\\

% Sort $E^{arr}$ by non-decreasing arrival time.\\
% Scan $E^{arr}$ to compute the lists $A^{arr}$, $E^{arr}_{a,\lambda>0}$, $E^{arr}_{a,\lambda=0,\alpha_{tail}>0}$, $E^{arr}_{a,\lambda=0,\alpha=0}$, $E^{arr}_{a,\lambda=0,\alpha_{head}>0}$. \\
%\textcolor{blue}{Concatenate $E^{arr}_{a,\lambda>0}$, $E^{arr}_{a,\lambda=0,\alpha_{tail}>0}$, $E^{arr}_{a,\lambda=0,\alpha=0}$, $E^{arr}_{a,\lambda=0,\alpha_{head}>0}$ for $a\in A^{arr}$ into $E^{arr}$.}\\

Initialize variables as in Algorithm~\ref{alg:gal} (Lines \ref{ling:prebegpar} - \ref{ling:preendpar}).\\
% For each node $v$, generate the list $E^{dep}_v$ by bucket sorting $E^{dep}$. \label{line:prebeg}\\
% \For{each node $v$}{
%   Set $A_v:=\emptyset$. \Comment{List of pairs of arrival time and cost.}
%   Set $r_v:=0$. \Comment{Index of the last processed edge in $E_v^{dep}$.}
%   Set $\intervs{v}:=\emptyset$. \Comment{Doubly linked list of consecutive intervals of $E_v^{dep}$.}
% }
% Set all the edges in $E^{arr}$ as unmarked.\\
% Initialize the best walk cost $B[e]$ of each edge $e\in E^{arr}$ to $\bottom$.\\ 

\For{each arrival time $a$ in $A^{arr}$}{ \label{ling:begscan}
\lFor{$e\in E^{arr}_{a,\lambda>0}$}{scan $e$ as in Algorithm \ref{alg:gal} (line \ref{ling:begoutfor} - \ref{ling:endoutfor}).}
\lFor{$e\in E^{arr}_{a,\lambda=0, \alpha_{tail}>0}$}{scan $e$ as in Algorithm \ref{alg:gal} (line \ref{ling:begoutfor} - \ref{ling:endoutfor}).} 
$E^{ord} := Reorder(E^{arr}_{a,\lambda=0,\alpha=0},a)$.\\
Replace $E^{arr}_{a,\lambda=0,\alpha=0}$ by $E^{ord}$ in $E^{arr}$.\\
\lFor{$e\in E^{ord}$}{scan $e$ as in Algorithm \ref{alg:gal} (line \ref{ling:begoutfor} - \ref{ling:endoutfor}).} \lFor{$e\in E_{a,\lambda=0,\alpha_{head}>0}^{arr}$}{scan $e$ as in Algorithm \ref{alg:gal} (line \ref{ling:begoutfor} - \ref{ling:endoutfor}).} \label{ling:endscan}
}
\Return{the sets $(A'_{v})_{v\in V}$}
\caption{Adaption of Algorithm~\ref{alg:gal} to handle edges with zero travel time.}
\label{alg:0del}
\end{algorithm}

\begin{algorithm}[ht]
\Function{$\reorder(E^{arr}_{\tau,\lambda=0,\alpha=0},\tau)$}{

   \Comment{Identify nodes appearing in edges of $E^{arr}_{\tau,\lambda=0,\alpha=0}$:}
   Let $V'_{tail}:=\{ u : \exists (u,v,\tau,0)\in E^{arr}_{\tau,\lambda=0,\alpha=0}\}$ and $V'_{head}:=\{v : \exists (u,v,\tau,0)\in E^{arr}_{\tau,\lambda=0,\alpha=0}\}$.\label{linr:nodes}\\
    %   Let $V'_{tail}=\{ u : \exists (u,v,\tau,0)\in E^{arr}_{\tau,\lambda=0}\}$.\\
    %   Let $V'_{head}=\{ v : \exists (u,v,\tau,0)\in E^{arr}_{\tau,\lambda=0}\}$.\\
    %   Let $V' := V'_{tail} \cup V'_{head}$\\
   
    \Comment{Set $B_{tail}[u]$ to the best extendable cost of edges from $u$ in the zero-block.}
    \For{each node $u\in V'_{tail}$}{\label{linr:btail-beg}
    Let $I=(l,r,c,e)$ be the first interval in $\intervs{u}$ containing an edge $e$ with $dep(e) \geq \tau$.\label{linr:findint}\\
    Let $i$ be the index of the first edge $e$ in $I$ such that $dep(e)=\tau$.\label{linr:findindex}\\
    Set $p[u]:=i$ and $B_{tail}[u]:=c$.\label{linr:btail-end}
    %Initialize $B'[u]=B_{tail}[u]:=\bottom$.\\
    %Let $l[u]$ bet the index of the first edge $e$ in $E_u^{dep}$ such that $dep(e) \geq \tau$.
    }
    \Comment{Set $B'[v]$ to the minimum cost of an $sv$-walk ending with exactly one edge in the zero-block.}
    \lFor{each node $v\in V'_{head}$}{\label{linr:beg-reach-cost}
    Initialize $B'[v] = \bottom$.}
    Initialize a set $S:=\emptyset$ of reachable nodes.\\
    \For{each edge $e=(u,v,\tau,0)\in E^{arr}_{\tau,\lambda=0,\alpha=0}$}{
      %\lIf{$e$ has an associated cost $c'$ according to $\intervs{u}$}{$B_{tail}[u] := c'$}
      %\lElse{$B_{tail}[u] := \bottom$}
      \If{$u=s$ or $B_{tail}[u]\neq \bottom$}{ \label{linr:if}
        $S:=S\cup\{v\}$ \label{linr:set} \Comment{$v$ can be reached from $s$.}
        \lIf{$u=s$ and ($B_{tail}[u]=\bottom$ or $\gamma(e)\prec B_{tail}[u]\oplus\gamma(e)$)}{$c:=\gamma(e)$} \label{linr:c1}
        \lElse{$c:=B_{tail}[u]\oplus\gamma(e)$} \label{linr:c2}
        %\Comment{Minimum cost for reaching $v$ through $e$.}
        \If{$B'[v]=\bottom$ or $c\prec B'[v]$}{ \label{linr:ifbprime}
          $B'[v]:=c$ \label{linr:bprime}\\ %\Comment{Minimum cost of an $sv$-walk ending with exactly one edge in the zero-block.}
          $P'[v]:=e$ \label{linr:endS} \Comment{Last edge of a walk defining $B'[v]$.}
        }
      }
    }

    \Comment{Define a weighted static digraph $D=(V'_{tail}\cup V'_{head},A')$.}
   $A':=\{(u,v,\gamma(e)) : e=(u,v,\tau,0)\in E^{arr}_{\tau,\lambda=0,\alpha=0}\}$ \label{linr:begdigr}\\
   Construct a weighted digraph $D$ with vertex set $V'_{tail} \cup V'_{head}$ and arc set $A'$.\label{linr:enddigr}\\
    
    \Comment{Reorder the sets of edges $E^{arr}_{\tau,\lambda=0,\alpha=0}$ and $E^{dep}$.} 
    Compute a minimum-cost forest $F:=\dijkstra(D, S, B')$.\label{linr:dijkstra}\label{linr:reorder-beg}\\
    $E^{ord}:=\emptyset$\\
    \For{$v\in S$ such that $F[v]=\bottom$}{
      Let $e=(u,v,\tau,0):=P'[v]$ and append $e$ to $E^{ord}$.\\
      Swap in $E_u^{dep}$ edge $e$ with the edge that has index $p[u]$ and set
      $p[u] := p[u] + 1$.\label{linr:swap}\label{linr:dep}\\
      Compute a BFS ordering $F_v$ of the arcs of the tree rooted at $v$ in $F$. \label{linr:begbfs} \\
      For each arc $(u,v,c)\in F_v$, append the associated edge $(u,v,\tau,0)$ to $E^{ord}$.   \label{linr:endbfs}
    }
    Append to $E^{ord}$ the remaining edges in $E^{arr}_{\tau,\lambda=0,\alpha=0}$. \label{linr:remain}\label{linr:reorder-end} \\
    \Return{$E^{ord}$.}
}

\caption{Reorder the temporal edges in a zero-block so that a sufficiently big set of minimum-cost walk are $(E^{dep},E^{arr})$-respected.}
\label{alg:order}
\end{algorithm}

\begin{algorithm}[ht]
  \Function{$\dijkstra((V',A'), S, B')$}{
    \lFor{$v\in V'$}{initialize a key $K[v]:=\bottom$ and a parent pointer $F[v]:=\bottom$.}
    Initialize a Fibonacci heap $H:=\emptyset$.\\
    \lFor{$v\in S$}{add $v$ to $H$ with key $K[v]:=B'[v]$.} \label{lind:sources}
    \While{$H\not=\emptyset$}{
      $u:=\popmin(H)$ \label{lind:pop}\\
      \For{$(u,v,c)\in A'$}{
        \If{$K[v]=\bottom$ or $K[u]\oplus c\prec K[v]$}{ \label{lind:upd1}
          $K[v]:=K[u]\oplus c$\\
          $F[v]:=(u,v,c)$\\
          \lIf{$v\notin H$}{add $v$ to $H$ with key $K[v]$}\label{lind:add}
          \lElse{decrease key of $v$ in $H$ to $K[v]$.} \label{lind:upd2}\label{lind:decr}
        }
      }
    }
    \return{$F$.}
}
\caption{Algebraic version of Dijkstra from a set $S$ of sources with initial costs $B'$ in a digraph $(V',A')$.}
\label{alg:dijkstra}
\end{algorithm}

%\lv{Check line numbers bellow!}

%Let us make three observations on the behaviour of the $\reorder()$ function.
We now describe more precisely the call to $\reorder(E^{arr}_{\tau,\lambda=0,\alpha=0},\tau)$ which is formally described in Algorithm~\ref{alg:order}.
We first identify the set $V' = V'_{tail} \cup V'_{head}$ of nodes appearing as tail or head of edges in the zero-block $E^{arr}_{\tau,\lambda=0,\alpha=0}$ at Line~\ref{linr:nodes}. 
We notice that edges in the zero-block sharing the same tail have the same associated cost according to Algorithm~\ref{alg:gal}: as they have same departure time $\tau$ they extend the same set of walks and they belong to the same interval. This allows us to assign to each node $u$ in $V'_{tail}$  the cost $B_{tail}[u]$ associated to edges of the zero-block with tail $u$ at the moment of the call to $\reorder()$, namely just before scanning the edges of the zero-block (see Lines~\ref{linr:btail-beg}-\ref{linr:btail-end}). 
%Then we scan the edges of the block $E^{arr}_{\tau,\lambda=0,\alpha=0}$. For each edge $e$, we assign to its tail $u$ the cost $B_{tail}[u]$ associated to edge $e$ in the moment of the call to $\reorder$, namely before the edges of the block will be scanned.
%We then retrieve, for each one the nodes $u$ in $V'_{tail}$, the cost $B_{tail}[u]$ that is associated to them in the moment of the call to $\reorder$, namely before the edges of the block will be scanned. 
This is the minimum cost of any $su$-walk composed of edges scanned so far, excluding in particular walks containing edges in $E^{arr}_{\tau,\lambda=0,\alpha=0}$, that can be extended with edges departing at time $\tau$.
We also associate to each $u$ in $V'_{tail}$ the index of the first edge in $E_u^{dep}$ that has departure time $\tau$ at Lines~\ref{linr:findindex}-\ref{linr:btail-end}. Notice that this is not necessarily an edge in the zero-block.

Next, we compute for each edge $e$ in the block the minimum cost to reach its head considering both the case when $e$ extends a walk from the source, and the case when starts itself a walk from the source at Lines~\ref{linr:beg-reach-cost}-\ref{linr:endS}. In particular, by keeping the minimum among the edges with same head, we compute for each node $v$ in $V'_{head}$ the minimum cost $B'[v]$ of $sv$-walks that contain one, and only one, edge of the zero-block and terminate with it. We store in $P'[v]$ the last edge of such a walk with minimum cost at Line~\ref{linr:endS}. All nodes $v\in V'_{head}$ that can be reached by such an $sv$-walk are stored in a set of sources $S$ at Line~\ref{linr:set}.
%Similarly as in Algorithm~\ref{alg:gal}, when $e$ ends an $sv$-walk, we compute the minimum cost $c$ of such a walk at Lines~\ref{linr:if}-\ref{linr:c2}. We then update the minimum cost $B'[v]$ of an $sv$-walk reaching $v$ through an edge in $e\in E^{arr}_{\tau,\lambda=0}$, and store in $P'[v]$ the corresponding edge $e$ at Lines~\ref{linr:ifbprime}-\ref{linr:endS}. All nodes $v\in V'$ that can be reached by such an $sv$-walk are stored in a set $S$ at Line~\ref{linr:set}.

We then build a weighted directed graph $D$ with node set $V'_{tail}\cup V'_{head}$ and arc set $A'$ which is defined by associating an arc $(u,v,\gamma(e))$ to each edge $e = (u,v,\tau,0) \in E^{arr}_{\tau,\lambda=0,\alpha=0}$ at Lines~\ref{linr:begdigr}-\ref{linr:enddigr}. 

Finally, we reorder the edges in $E^{arr}_{\tau,\lambda=0,\alpha=0}$ and $E^{dep}$ at Lines~\ref{linr:reorder-beg}-\ref{linr:reorder-end} based on a minimum-cost forest $F$ from $S$ in $D$ which is computed at Line~\ref{linr:dijkstra} through an algebraic version of Dijkstra algorithm which is described in detail in the next paragraph. Note that this Dijkstra computation uses the fact that all arcs of $D$ have $\mathcal{C}$-non-negative costs. The forest $F$ is given by parent pointers where $F[v]$ provides for each node $v$ an arc allowing to reach $v$ from $S$ through a path with minimum cost in $D$. The pointer $F[v]$ has value $\bottom$ if $v$ is the root of a tree or if it is not reachable from $S$ in $D$. For each tree $T$ rooted at a node $v$ in $F$, we use a BFS ordering of $T$ to make sure at Lines~\ref{linr:begbfs}-\ref{linr:endbfs} that any path in $T$ corresponds to a walk whose edges appear in order in $E^{ord}$, and after $P'[v]$.
%that is a sequence of half-respecting-edges in $E^{ord}_{\tau,\lambda=0,\alpha=0}$ that appear after $P'[v]$. 
% \lv{I would remove this (over explanation): This is why $P'[v]$ is added before the edges corresponding to the tree rooted at $v$ as edge $P'[v]$ is the last edge of an $sv$-walk with arrival time $\tau$ that can be extended with such a walk as arcs in $D$ correspond to edges $(x,y,\tau,0)$ such that $\alpha_x=0$ and $\alpha_y=0$ .}
Note that such a walk extends edge $P'[v]$ which is added first.
Note also that the edge $P'[v]$ is added only once to $E^{ord}$: since it has head $v$ which is a root, it cannot be associated to an arc of $F$ (we have $F[v]=\bottom$). Additionally, we move edges $P'[v]$ with tail $u$ in $E_u^{dep}$ at Line~\ref{linr:dep} so that they appear in $E_u^{dep}$ before other edges of the zero-block.
%we swap edge $P'[v]$ with the first edge in $E_u^{dep}$ that has departure time $\tau$ and has not be swapped already during the procedure, where $u$ is the tail of $P'[v]$. 
This guarantees that any walk resulting from concatenating an edge $P'[v]$ and a walk corresponding to a path in the tree rooted at $v$ will be $(\Bar{E}^{arr},\Bar{E}^{dep})$-respected.
%at the end of the execution, all the edges $P'[v]$ for $v\in V'$ that share the same tail $u$, will appear as the first ones in $E_u^{dep}$ among the edges with departure time $\tau$. %We also perform the same swap between the two edges in $E^{dep}$, in order to guarantee consistency between the lists $E^{dep}$ and $E_u^{dep}$.
%Finally, we drop edges in $E^{arr}_{\tau,\lambda=0,\alpha=0}$ corresponding to arc in $D$ that do not appear in the forest $F$ or is not $P'[v]$ for a root $v$. Indeed, these edges cannot enable a walk with a strictly lower cost than the walks composed by the edges already added. 
%
%for each edge $e = (u,v,\tau,0) \in E^{arr}_{\tau,\lambda=0}$, arc $(u^1,v^2,\gamma(e))$ and $(u^2,v^2,\gamma(e))$ cannot be both in $T$: they have  same head $v^2$ which has only one entering arc in $T$ as $T$ is an out-branching from $s^1$. This means that $E^T$ corresponds to a list of edges in $E^{arr}_{\tau,\lambda=0}$ without any repetition.

For the sake of completeness, we also include an algebraic version of Dijkstra algorithm as detailed in Algorithm~\ref{alg:dijkstra}. It is similar to the version of~\cite{Griff2010} with slightly different hypothesis and the mild generalization of computing minimum-cost paths in a weighted digraph $(V',A')$ from a set $S$ of sources where each source $v\in S$ is associated to an initial cost $B'[v]$. More precisely, any walk $W=\langle (u_1,v_1,c_1),\ldots, (u_k,v_k,c_k)\rangle$ of $k\ge 1$ arcs from a source $u_1\in S$ in the digraph is associated to a cost $\gamma^D(B',W)=(\cdots(B'[u_1]\oplus c_1)\cdots\oplus c_{k-1})\oplus c_k$. A node $v$ is said to be \emph{reachable} from $S$ in $(V',A')$ if there exists a walk from a node $u\in S$ to $v$.
We consider that all nodes $v\in S$ are reachable through an empty path with cost $B'[v]$. A \emph{minimum-cost walk} from $S$ to $v\in V'$ is defined as a walk $W$ from any node $u\in S$ to $v$ such that $\gamma^D(B',W)\preceq \gamma^D(B',W')$ for any walk $W'$ from $u'\in S$ to $v$. The following elementary lemma allows us to focus on paths rather than walks.

\begin{lemma}\label{lem:paths}
    Let $W$ be a walk in $D$ from $u\in S$ to $v\in V'$, then there exists a path $P$ in $D$ from $u$ to $v$ such that $\gamma^D(B',P) \preceq \gamma^D(B',W)$. 
\end{lemma}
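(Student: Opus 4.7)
The plan is to proceed by induction on the number $k$ of arcs in $W$. If $W$ is already a path, the claim holds by taking $P=W$. Otherwise, writing $W=\langle (u_1,v_1,c_1),\ldots,(u_k,v_k,c_k)\rangle$, the sequence of visited vertices $u_1,v_1,\ldots,v_k$ must contain a repetition, so there exist indices $0\le p<q\le k$ with $v_p=v_q$ (using the convention $v_0=u_1$). The sub-walk $C=\langle(u_{p+1},v_{p+1},c_{p+1}),\ldots,(u_q,v_q,c_q)\rangle$ is then closed (it starts at $v_p$ and returns to $v_q=v_p$), and excising it yields a walk $W'$ from $u_1$ to $v_k$ with strictly fewer arcs than $W$.

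The crux of the argument is to show $\gamma^D(B',W')\preceq\gamma^D(B',W)$. Let $c_0$ denote the cost accumulated just before $C$, namely $c_0=B'[u_1]$ if $p=0$ and $c_0=(\cdots(B'[u_1]\oplus c_1)\cdots)\oplus c_p$ otherwise, and let $c^\ast=(\cdots(c_0\oplus c_{p+1})\cdots)\oplus c_q$ be the cost right after $C$. Every arc of $D$ is constructed from an edge of the zero-block $E^{arr}_{\tau,\lambda=0,\alpha=0}$, whose definition guarantees zero travel time together with zero minimum waiting at both endpoints, so absorption applies and gives that each $c_{p+1},\ldots,c_q$ is $\mathcal{C}$-non-negative. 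Applying this step by step yields
\[
c_0\preceq c_0\oplus c_{p+1}\preceq(c_0\oplus c_{p+1})\oplus c_{p+2}\preceq\cdots\preceq c^\ast.
\]
Then a straightforward induction using right-isotonicity with the remaining arc weights $c_{q+1},\ldots,c_k$ propagates this inequality through the tail of the walk, giving $\gamma^D(B',W')\preceq\gamma^D(B',W)$. Combining this with the inductive hypothesis applied to $W'$ produces the desired path $P$ from $u$ to $v$ with $\gamma^D(B',P)\preceq\gamma^D(B',W)$.

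The main thing to verify carefully is that absorption really applies to every arc of $D$, which is exactly where the structural constraints on the zero-block come in: by construction, each such arc corresponds to an edge with $\lambda=0$ and $\alpha_{tail}=\alpha_{head}=0$, meeting the hypothesis of the \eqref{eq:absorption} property. Once $\mathcal{C}$-non-negativity of the arc weights is secured, the remainder is a routine cycle-removal argument combining it with \eqref{eq:isotonicity}, entirely analogous to the classical reduction from shortest walks to shortest paths in graphs with non-negative weights.
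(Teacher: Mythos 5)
Your proof is correct and follows essentially the same route as the paper's: decompose the walk around a cycle, use the \eqref{eq:absorption} property (valid since every arc of $D$ comes from a zero-block edge with $\lambda=0$ and $\alpha_{tail}=\alpha_{head}=0$) to drop the cycle without increasing the accumulated cost, then propagate the inequality through the remaining arcs via \eqref{eq:isotonicity}, iterating until a path remains. Your version merely spells out the chained inequalities and the induction on the number of arcs more explicitly than the paper does.
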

\begin{proof}
We show that we can iteratively remove any cylce from $W$ without increasing the cost of the walk. Let us decompose $W$ into $W_1.W_c.W_2$, where $W_c$ is a cycle. Then we have $\gamma^D(B',W_1) \preceq \gamma^D(B',W_1.W_c)$ by absorption. We then obtain $\gamma^D(B',W_1.W_2) \preceq \gamma^D(B',W_1.W_c.W_2)=\gamma^D(B',W)$ by isotonocity.
\end{proof}

%Notice that, due to to the non-negativity of the costs of the arcs and the isotonicty property, for each minimum-cost walk from $S$ to $v$ there exists a path from $S$ to $v$ with the same cost, as it possible to remove any loops without increasing the cost of the walk\fb{ I dont think we need to be more formal}. 

%If $V''$ denotes the set of nodes that are reachable from $S$ in $D$, 
We define a \emph{minimum-cost forest} $F$ from $S$ as a union of minimum-cost paths from $S$ to all reachable nodes from $S$, that forms a forest, that is where each node $v$ has at most one entering arc $F[v]$. Such a forest can be computed through Dijkstra algorithm. Recall that it consists in visiting nodes according to a non-decreasing cost order. More precisely, each node $v$ is associated to a key $K[v]$ storing the minimum-cost of a path reaching $v$ from $S$ that has been identified so far. Initially, only nodes in $S$ have a defined key which is initialized according to $B'$, see Line~\ref{lind:sources} in Algorithm~\ref{alg:dijkstra}. The next node $u$ to visit, that is a node with minimum key, can be found efficiently through a Fibonacci heap $H$ at Line~\ref{lind:pop}. We can then update the keys of each out-neighbor $v$ of $u$ according to Lines~\ref{lind:upd1}-\ref{lind:upd2}.
The correctness of Algorithm~\ref{alg:dijkstra} follows from the following lemma.

\begin{lemma}\label{lem:dijkstra}
  Given a weighted digraph $(V',A')$ and a cost structure $({C},\gamma,\oplus,\preceq)$ satisfying isotonicity, suppose that for every arc $(u,v,c)\in A'$ the cost $c$ is $C$-non-negative. Given a set $S\subseteq V'$ associated with initial costs $B'[v]\in C$ for $v\in S$, Algorithm~\ref{alg:dijkstra} returns a minimum-cost forest $F$ from $S$.
\end{lemma}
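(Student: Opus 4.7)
The plan is to prove Lemma~\ref{lem:dijkstra} by adapting the classical correctness argument for Dijkstra's algorithm to the algebraic setting, using isotonicity and $C$-non-negativity of arc costs as the two key ingredients. By Lemma~\ref{lem:paths}, it suffices to reason about paths rather than walks, so I will show that for every node $v$ reachable from $S$, when $v$ is popped from the heap $H$ at Line~\ref{lind:pop}, the key $K[v]$ equals the minimum cost $K^*[v]$ of a path from $S$ to $v$, and that $F[v]$ is the last arc of such a minimum-cost path (and undefined for $v\in S$ since no update at Line~\ref{lind:upd1} can strictly improve $K[v]$ below $B'[v]$ along any path, which is where $C$-non-negativity is invoked).

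The proof proceeds by induction on the order in which nodes are extracted from $H$. For the base case, the first extracted node $u$ has minimum key among all $v\in S$, and any path of length at least one from $S$ to $u$ has cost $B'[u']\oplus c_1\oplus \cdots\oplus c_k$ for some $u'\in S$; by repeated application of $C$-non-negativity and isotonicity this is $\succeq B'[u']\succeq K[u]$, so $K[u]=B'[u]=K^*[u]$. For the inductive step, assume the claim for the first $k-1$ extractions and let $u$ be popped at step $k$. Consider any path $P=v_0,v_1,\ldots,v_m=u$ from $S$ to $u$ and let $v_i$ be the last node on $P$ that has already been popped (such a node exists because $v_0\in S$ was inserted into $H$ at Line~\ref{lind:sources} and either was popped before $u$, or equals $u$). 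By induction, $K[v_i]=K^*[v_i]\preceq\gamma^D(B',P[0{:}i])$. If $i<m$, when $v_i$ was popped the update at Lines~\ref{lind:upd1}--\ref{lind:decr} ensured $K[v_{i+1}]\preceq K[v_i]\oplus c$, where $c$ is the cost of arc $(v_i,v_{i+1})$; by isotonicity this is $\preceq\gamma^D(B',P[0{:}i+1])$. Since $v_{i+1}$ is still in $H$ when $u$ is popped, and $u$ has minimum key, we get $K[u]\preceq K[v_{i+1}]$. Finally, applying $C$-non-negativity and isotonicity $m-i-1$ times to the suffix of $P$ gives $\gamma^D(B',P[0{:}i+1])\preceq \gamma^D(B',P)$, hence $K[u]\preceq\gamma^D(B',P)$. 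As $P$ was arbitrary, $K[u]\preceq K^*[u]$, and the reverse inequality follows by tracing back $F$ from $u$ to a source (each update at Line~\ref{lind:upd1} produces a combined cost achievable by some concrete path).

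The main obstacle is the subtle role of $C$-non-negativity: in the classical scalar setting it translates into monotonicity of the prefix-cost along a path, which justifies both the "greedy" correctness (the popped node is optimal) and the fact that $K[u]\preceq K[v_{i+1}]$ implies $K[u]\preceq \gamma^D(B',P)$. Here I must apply it arc by arc along the unexplored suffix $v_{i+1}\to\cdots\to v_m$, combining it with isotonicity at each step, so I will write a short auxiliary observation: if $c_1,\ldots,c_j$ are $C$-non-negative, then $c\preceq c\oplus c_1\oplus\cdots\oplus c_j$ for every $c\in C$, proved by a trivial induction. Once this is in hand, the argument above goes through. The forest structure of $F$ is immediate from the algorithm: $F[v]$ is written at most for each $v\in V'\setminus S$ (and stays $\bot$ for $v\in S$), yielding at most one incoming arc per node; acyclicity follows since each update sets $F[v]$ only when $K[v]$ strictly decreases, and the keys of popped nodes are non-decreasing, so a cycle would contradict the induction invariant. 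This completes the proof.
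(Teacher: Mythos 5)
Your overall plan---the classical Dijkstra induction on extraction order, adapted with isotonicity and $C$-non-negativity---is the right one and is essentially the argument the paper gives (phrased there as a minimal-counterexample contradiction). However, two steps as written are incorrect, and both stem from the same oversight: the initial costs $B'[v]$ are arbitrary and, unlike the arc costs, are \emph{not} assumed $C$-non-negative (the paper stresses this right after the lemma statement), so a source may be reachable strictly more cheaply through another source. First, your parenthetical claim that $F[v]$ stays $\bottom$ for $v\in S$ because ``no update at Line~\ref{lind:upd1} can strictly improve $K[v]$ below $B'[v]$'' is false, and $C$-non-negativity does not yield it: take $C=\mathbb{R}$ with $\oplus=+$ and $\preceq$ the usual $\le$, let $S=\{u,v\}$ with $B'[u]=1$, $B'[v]=100$, and a single arc $(u,v,0)$; after $u$ is popped the relaxation sets $K[v]=1\prec B'[v]$ and $F[v]=(u,v,0)$. ($C$-non-negativity only says $c\preceq c\oplus d$, i.e.\ appending an arc cannot cheapen the path being extended; it compares nothing across different sources' initial costs.) The algorithm's behaviour in this situation is correct---the minimum-cost path to a source need not be the empty path---so the fix is to drop that clause from your invariant rather than defend it, but as stated your induction hypothesis is false.

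Second, in the inductive step the existence of $v_i$, ``the last node on $P$ already popped'', is justified by the dichotomy that $v_0$ ``either was popped before $u$, or equals $u$''. That dichotomy is false: $v_0\in S$ may still be sitting un-popped in $H$ when $u$ is extracted (again because $B'[v_0]$ may be large), in which case no node of $P$ other than $u$ itself has been popped, your case analysis never produces the bound $K[u]\preceq\gamma^D(B',P)$, and taking $v_i=u$ would make the appeal to the induction hypothesis circular. This is precisely one of the cases the paper treats separately, and it needs its own short argument: since $v_0$ is still in $H$ and keys only decrease, $K[u]\preceq K[v_0]\preceq B'[v_0]\preceq\gamma^D(B',P)$, the last inequality by applying $C$-non-negativity arc by arc along $P$. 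Finally, your forest and trace-back arguments use two facts that you assert but never prove---nodes are popped in non-decreasing key order, and a popped node is never updated again; both follow in one line from $C$-non-negativity (any key inserted or decreased while processing the popped node $u$ equals $K[u]\oplus c\succeq K[u]$), and they are needed so that the path obtained by following $F$ back from $v$ really has cost $K[v]$, i.e.\ so that $F$ is a union of minimum-cost paths and not merely that the final keys equal the optimal costs. With these repairs your proof goes through and essentially coincides with the paper's.
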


Before proving this lemma, note that $C$-non-negativity is not required for initial costs $B'[v]$ of nodes $v\in S$.
Despite its algebraic abstraction, the proof is nevertheless similar to the one found in algorithm textbooks~\cite{CormenLRS2001} for the classical version.  

\begin{proof}
  As usual with Dijkstra algorithm, we can prove by induction that the nodes are popped from the heap $H$ at Line~\ref{lind:pop} by non-decreasing order of keys. The reason is that all nodes $v$ remaining in $H$ when we pop $u$ have a key $K[v]$ satisfying $K[u]\preceq K[v]$ by the correctness of the heap operations which rely on the fact that $\preceq$ is a total order. Second, each node $v$ added to the heap at Line~\ref{lind:add}, or whose key is decreased at Line~\ref{lind:decr}, has key $K[v]=K[u]\oplus c$ and we have $K[u]\preceq K[u]\oplus c=K[v]$ as $c$ is $C$-non-negative. This non-decreasing order of popped keys together with the $C$-non-negativity of arc costs also imply that once a node has been popped, it is never re-inserted in the heap later. As the parent pointer $F[v]$ of a node always correspond to an arc $(u,v,c)$ such that $u$ has been popped before $v$, $F$ cannot induce any cycle and is indeed a forest. Moreover, following recursively the pointer $F[u]$ as long as $F[u]\not=\bottom$, we obtain a path $P_F^v$ with nodes ordered according to popping order. Note that the first node of $P_F^v$ must have been inserted in $H$ initially and is thus in $S$. The cost $\gamma^D(B',P_F^v)$ of $P_F^v$ is thus defined and it equals the value of $K[v]$ when $v$ is popped (this directly results from the mutual updates of $K[v]$ and $F[v]$). 
  
  Suppose for the sake of contradiction that there exist nodes $v\in V'$ which are reachable from $S$ and such that $F$ does not contain a minimum-cost path from $S$ to $v$. Without loss of generality, we can choose such a node $v$ so that no other such node is popped from $H$ before $v$. Consider a minimum cost path $P$ from $S$ to $v$. Either $P$ is non-empty and we let $u\in S$ denote the tail of its first arc, or we have $v\in S$ and no path from $S$ to $v$ has cost less than $B'[v]$, in which case we set $u:=v$. As $u\in S$ implies that $u$ is initially added to $H$, it must be popped at some point.
  
  First assume that $P$ is empty. We then have $v=u\in S$, and $v$ has been added to $H$, implying that $v$ is popped at some point. The path $P_F^v$ from $S$ to $v$ in $F$ has cost $\gamma^D(B',P_F^v)=K[v]$. As the key of $u=v$ can only decrease, we get $K[v]\preceq B'[v]$ and thus $\gamma^D(B',P_F^v)=B'[v]$ as no path from $S$ to $v$ has cost less than $B'[v]$ in that case. This is in contradiction with our hypothesis on $v$. 
  
  From now on, we assume that $P$ is non-empty.
  Suppose additionally that $v$ is popped before $u$. This implies that the path $P_F^v$ from $S$ to $v$ in $F$ has cost $\gamma^D(B',P_F^v)=K[v]\preceq K[u]$. As the key of $u$ can only decrease, we have $K[u]\preceq B'[u]$. Since the arcs of $P$ have $C$-non-negative costs, we have $B'[u]\preceq \gamma^D(B',P)$ and we get $\gamma^D(B',P_F^v)\preceq \gamma^D(B',P)$, contradicting again the choice of $v$.
 
  Otherwise, we can consider the last node $u'$ in $P$ such that all nodes from $u$ to $u'$ in $P$ have been popped before $v$. Let $(u',v',c)$ be the arc following $u'$ in $P$. The update of $K[v']$ according to arc $(u',v',c)$ at Lines~\ref{lind:upd1}-\ref{lind:upd2} implies $K[v']\preceq K[u']\oplus c$ and $v'\in H$. In particular, $v'$ will be popped at some point. Our choice of $v$ implies $\gamma^D(B',P_F^{u'})\preceq \gamma^D(B',P[u:u'])$ where $P[u:u']$ denotes the subpath of $P$ from $u$ to $u'$. As discussed previously, we have $K[u']=\gamma^D(B',P_F^{u'})$ when $u'$ is popped, and we thus get $K[v']\preceq \gamma^D(B',P[u:u'])\oplus c=\gamma^D(B',P[u:v'])$ by isotonicity. In the case $v'=v$, we thus get $\gamma^D(B',P_F^{v})\preceq \gamma^D(B',P)$. In the case $v'\not=v$, $v$ is popped before $v'$ according to the choice of $u'$, implying $\gamma^D(B',P_F^{v})=K[v]\preceq K[v']\preceq \gamma^D(B',P[u:v'])\preceq \gamma^D(B',P)$ as the arcs of $P[v':v]$ have $C$-non-negative costs. In all cases, we get $\gamma^D(B',P_F^v)\preceq \gamma^D(B',P)$, in contradiction with our hypothesis on $v$.
\end{proof}

We now prove Theorem~\ref{th:zero}.

\begin{proof}[Proof of \Cref{th:zero}]

Our proof mainly relies on the correctness and the complexity of Algorithm~\ref{alg:0del} thanks to the following observation.

\begin{claim}\label{cla:algequiv}
The execution of Algorithm~\ref{alg:0del} corresponds to an execution of Algorithm~\ref{alg:gal} with input $(\Bar{E}^{dep},\Bar{E}^{arr})$ where $\Bar{E}^{dep}$ and $\Bar{E}^{arr}$ are the orderings resulting from the calls to Algorithm~\ref{alg:order}.
\end{claim}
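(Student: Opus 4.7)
The plan is to establish three points: (i) the ordered lists $(\Bar{E}^{dep},\Bar{E}^{arr})$ produced along the way by Algorithm~\ref{alg:0del} form a valid doubly-sorted representation of $G$; (ii) the scanning order used by Algorithm~\ref{alg:0del} coincides with the order of edges in $\Bar{E}^{arr}$; (iii) the data-structure updates performed at every step are exactly those performed by Algorithm~\ref{alg:gal} on input $(\Bar{E}^{dep},\Bar{E}^{arr})$.

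For (i), we argue that node-arrival and node-departure sortedness are preserved by all modifications. The initial sort at Line~\ref{ling:ord} puts $E^{arr}$ in non-decreasing arrival-time order, which a fortiori ensures node-arrival sortedness, and the further splitting into the four blocks $E^{arr}_{a,\lambda>0}$, $E^{arr}_{a,\lambda=0,\alpha_{tail}>0}$, $E^{arr}_{a,\lambda=0,\alpha=0}$, $E^{arr}_{a,\lambda=0,\alpha_{head}>0}$ is a bucket-sort among edges with the same arrival time, so it cannot break node-arrival sortedness. The reordering of a zero-block $E^{arr}_{a,\lambda=0,\alpha=0}$ by $\reorder$ also only rearranges edges with the same arrival time $a$, so $\Bar{E}^{arr}$ remains node-arrival sorted. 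For $\Bar{E}^{dep}$: the swap at Line~\ref{linr:swap} only exchanges two entries of some $E_u^{dep}$ that both have departure time exactly $\tau$ (the head of $P'[v]$ is in the zero-block, hence has departure time $\tau$, and the swapped entry is chosen among edges from $u$ with departure time $\tau$ via the index $p[u]$ set at Lines~\ref{linr:findint}-\ref{linr:btail-end}); hence node-departure sortedness is preserved.

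For (ii), we observe that the sequence of edges scanned by the nested \textbf{for} loops at Lines~\ref{ling:begscan}-\ref{ling:endscan} of Algorithm~\ref{alg:0del} is, by construction, precisely the concatenation, for increasing $a\in A^{arr}$, of $E^{arr}_{a,\lambda>0}$, then $E^{arr}_{a,\lambda=0,\alpha_{tail}>0}$, then the reordered zero-block $E^{ord}$ that replaces $E^{arr}_{a,\lambda=0,\alpha=0}$, then $E^{arr}_{a,\lambda=0,\alpha_{head}>0}$. This is exactly the content of $\Bar{E}^{arr}$ at the end of the execution.

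For (iii), the main point is that each scanned edge $e$ is treated by the very code block Lines~\ref{ling:begoutfor}-\ref{ling:endoutfor} of Algorithm~\ref{alg:gal}. We only need to check that $\reorder$, besides producing $E^{ord}$ and swapping some entries in $E^{dep}$, does \emph{not} modify the global state on which Algorithm~\ref{alg:gal} relies: the reads at Lines~\ref{linr:btail-beg}-\ref{linr:btail-end} and~\ref{linr:beg-reach-cost}-\ref{linr:endS} only inspect the intervals $\intervs{u}$ and the temporary values $B_{tail},B',P'$ local to the call, and the Dijkstra call at Line~\ref{linr:dijkstra} operates on the auxiliary digraph $D$. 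The only persistent side effects are the choice of $E^{ord}$ and the swaps inside the relevant $E_u^{dep}$'s, both of which are exactly the differences between $(E^{dep},E^{arr})$ and $(\Bar{E}^{dep},\Bar{E}^{arr})$. The main subtlety, which we highlight as the key obstacle, is that Algorithm~\ref{alg:gal} accesses $E_u^{dep}$ through the running index $p_u$ and through the intervals in $\intervs{u}$: the $\reorder$ procedure only swaps entries with index $\ge p[u]$ (that is, with departure time $\tau$ not yet finalized), so the previously computed intervals remain consistent with the rearranged $E_u^{dep}$; and since the swapped entries all have departure time $\tau$, any subsequent search for an index with $dep(\cdot)\ge a+\alpha_v$ or $\le a+\beta_v$ (Lines~\ref{ling:left}-\ref{ling:right} of Algorithm~\ref{alg:gal}) returns the same interval of $E_u^{dep}$ whether or not the swap took place. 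Consequently, iteration by iteration, the state of all variables $(A'_v,\intervs{v},l_v,r_v,B,P)$ after scanning the first $k$ edges of $\Bar{E}^{arr}$ in Algorithm~\ref{alg:0del} matches exactly the state obtained by running Algorithm~\ref{alg:gal} on $(\Bar{E}^{dep},\Bar{E}^{arr})$ for its first $k$ iterations, which proves the claim.
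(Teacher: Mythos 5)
Your overall strategy is the same simulation argument the paper uses: show that the scan order of Algorithm~\ref{alg:0del} is exactly $\Bar{E}^{arr}$, and that every state update coincides, iteration by iteration, with a run of Algorithm~\ref{alg:gal} on $(\Bar{E}^{dep},\Bar{E}^{arr})$. Your points (i) and (ii) are fine (the paper states (i) separately, just after the claim). The problem is in (iii): the parenthetical ``with departure time $\tau$ not yet finalized'' asserts precisely the fact on which the whole correspondence rests, and you never establish it. The swap at Line~\ref{linr:swap} exchanges $P'[v]$ with the edge at index $p[u]$ of $E_u^{dep}$; for the states to match a fresh run on $(\Bar{E}^{dep},\Bar{E}^{arr})$ you need that no edge with tail $u$ and departure time $\tau$ has already been \emph{finalized}, i.e.\ had its value $B[\cdot]$ written by a call to $\processcosts$ and been pushed below the boundary $l_u$. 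If some such edge were finalized, $P'[v]$ could sit at an index $<l_u\le p[u]$ (so even your claim ``only swaps entries with index $\ge p[u]$'' would be false), and the swap would move an edge across the finalized/unfinalized boundary: in the hypothetical run on $(\Bar{E}^{dep},\Bar{E}^{arr})$ the finalizing call would have written $B$ for a different edge and kept a different edge inside the intervals, so the two states would genuinely diverge. Note that a call to $\processcosts(u,i)$ at Line~\ref{ling:callonu} finalizes up to the index $i$ of the scanned edge and can, in general, split a group of equal-departure-time edges, so non-finalization is not automatic.

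The paper's proof of the claim is devoted almost entirely to this missing step, and it uses the block decomposition in an essential way: since $\alpha_u=0$ for a tail $u$ of a zero-block edge, every call to $\processcosts$ at Line~\ref{ling:callonv} triggered by an edge with head $u$ scanned so far (arrival time $\le\tau$) stops strictly before the edges of $E_u^{dep}$ with departure time $\ge\tau$; and every call at Line~\ref{ling:callonu} so far was made for an edge with tail $u$ of departure time $<\tau$, because the block $E^{arr}_{\tau,\lambda=0,\alpha_{tail}>0}$ scanned just before the zero-block contains no edge with tail $u$. You need this argument (or an equivalent one) to license both ``index $\ge p[u]$'' and ``not yet finalized''. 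Two smaller imprecisions: Algorithm~\ref{alg:gal} has no running index $p_u$ (that belongs to Algorithm~\ref{alg:eat}; $p[u]$ is local to $\reorder$), so the consistency of previously computed state must be argued with respect to $l_u$, $r_u$ and $\intervs{u}$; and the reason the swap leaves $\intervs{u}$ coherent is the paper's observation that edges with the same tail and same departure time extend exactly the same walks and therefore can never be separated by an interval boundary --- a fact you use implicitly but do not state.
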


Recall that,
after line~\ref{ling:ord}, the list $E^{arr}$ is a sequence of blocks $E^{arr}_{a,\lambda>0}$, $E^{arr}_{a,\lambda=0,\alpha_{tail}>0}$, $E^{arr}_{a,\lambda=0,\alpha=0}$, $E^{arr}_{a,\lambda=0,\alpha_{head}>0}$ by increasing value of $a$, and the ordering $\Bar{E}^{arr}$ is obtained from it by replacing each zero-block $E^{arr}_{a,\lambda=0,\alpha=0}$ with the local order $E^{ord}$ computed through the call to $\reorder(E^{arr}_{a,\lambda=0,\alpha=0})$. The ordering $\Bar{E}^{dep}$ is obtained as a concatenation of the lists $\Bar{E}_u^{dep}$, where $\Bar{E}_u^{dep}$ is the list obtained from $E_u^{dep}$ through the swaps made at Line~\ref{linr:swap} by Algorithm~\ref{alg:order}. 

We prove Claim~\ref{cla:algequiv} through the following observations. The zero-blocks of edges $E^{arr}_{\tau,\lambda=0,\alpha=0}$ are reordered before any of its edges has been scanned, and edges are indeed scanned according to the ordering of $\Bar{E}^{arr}$. Concerning $\Bar{E}^{dep}$, we first note that none of the edges in a zero-block $E^{arr}_{\tau,\lambda=0,\alpha=0}$ has been finalized when $\reorder(E^{arr}_{\tau,\lambda=0,\alpha=0})$ is called for the following two reasons.  First, all edges with head $u$ scanned so far have arrival time at most $\tau$, and since $\alpha_u=0$ edges with departure time greater or equal to $\tau$ in $E_u^{dep}$ were not considered by any call to $\processcosts()$ at Line~\ref{ling:callonv} of Algorithm~\ref{alg:gal}. Second, $E^{arr}_{a,\lambda=0,\alpha_{tail}>0}$ does not contain any edge with tail $u$ since $\alpha_u=0$, and all edges from $u$ that have been scanned so far have departure time less than $\tau$. They thus appear before edges from $u$ in $E^{arr}_{\tau,\lambda=0,\alpha=0}$ since $E_u^{dep}$ is sorted by non-decreasing departure time, and these edges have not been finalized by any call at Line~\ref{ling:callonu} of Algorithm~\ref{alg:gal} either. 
%And no edge from $u$ in  appear before them in $E_u^{dep}$ and no. This is because $E_u^{dep}$ is sorted by non-decreasing departure time and such edges belong to a block preceding $E^{arr}_{\tau,\lambda=0,\alpha=0}$ in $E^{arr}$.
Finally, the swaps in $E_u^{dep}$ concern edges with same departure time $\tau$ and same tail $u$. This means that they can extend exactly the same set of walks and thus belong to the same interval in $\intervs{u}$. We thus have exactly the same intervals in $\intervs{u}$ for each node $u$ as if the algorithm had been run with input $(\Bar{E}^{dep},\Bar{E}^{arr})$ from the beginning. As the subsequent processing occurs with edges ordered according to $(\Bar{E}^{dep},\Bar{E}^{arr})$ until the next-zero block, this concludes the proof of Claim~\ref{cla:algequiv}

\bigskip
\noindent
\textit{Correctness.}

The core of the proof of correctness consists in showing that the reordered lists $(\Bar{E}^{dep},\Bar{E}^{arr})$ are an $s$-optimal-respecting doubly-sorted representation as we can then conclude by Proposition~\ref{prop:gal}.

First note, that the lists $\Bar{E}^{dep}$ and $\Bar{E}^{arr}$ are still node departure and node arrival sorted respectively. The list $\Bar{E}^{dep}$ is node departure sorted, as it is a reordering of $E^{dep}$ obtained by swapping edges with same tail and departure time. On the other side, the reordering of $E^{arr}$ concerns only sets of edges within the same zero-block which all have same arrival time, thus $\Bar{E}^{arr}$ is still ordered by non-decreasing arrival time of the edges. 
%The lists $\Bar{E}^{dep}$ and $\Bar{E}^{arr}$ are still respectively node departure and node arrival sorted. The reordering of $E^{dep}$ is obtained by swapping edges with same tail and departure time. On the other side, the reordering of $E^{arr}$ concerns only sets of edges within the same block, thus $\Bar{E}^{arr}$ is still ordered by non-decreasing arrival time of the edges.

The rest of the correctness part is dedicated to proving that $(\Bar{E}^{dep},\Bar{E}^{arr})$ is $s$-optimal-respecting. Suppose for the sake of contradiction that there exists an $s$-reachable edge $e$ such that there exists no minimum cost walk from $s$ ending with $e$ that is $(\Bar{E}^{dep},\Bar{E}^{arr})$-respected. Without loss of generality, let $e$ be the first edge in $\Bar{E}^{arr}$ such that this happens, and let $Q=\langle e_1, \dots, e_k \rangle$ be a walk with minimum cost among the walks from $s$ ending with $e=e_k=(u_k,v_k,\tau_k,\lambda_k)$. As our assumption implies that $Q$ itself is not $(\Bar{E}^{dep},\Bar{E}^{arr})$-respected, it must have at least two edges, we thus assume $k\ge 2$.
\paragraph{Note.} In the following, whenever we have to verify if $e <_{\Bar{E}^{arr}} f$, for some edges $e$ and $f$, it is sufficient to show that the block containing $e$ precedes the block that contains $f$ according to the ordering of $E^{arr}$ computed at Line~\ref{ling:ord}. This comes from the fact that $\Bar{E}^{arr}$ follows the same sequence of blocks and differs only by swaps within the zero-blocks.

\medskip

\paragraph{Case A: edge $e_{k-1}$ does not belong to a zero-block or $arr(e_k-1) < dep(e_k)$.} 
%
%\fb{We cannot start with the case $e_{k-1} <_{\Bar{E}^{arr}} e_k$, as to prove $Q'.e_k$ is $(\Bar{E}^{dep},\Bar{E}^{arr})$-respected we exploit $e_{k-1}$ being not in the zero-block (I think I tried starting the proof this way, but I couldn't make it work). But I don't think this is a problem.}
%

Let us first consider the case in which $e_{k-1}$ does not belong to a zero-block. This implies $e_{k-1} <_{\Bar{E}^{arr}} e_k$. The reason is that we have $arr(e_{k-1})\le dep(e_k)$ since $e_k$ extends $e_{k-1}$ and $\Bar{E}^{arr}$ is ordered by non-decreasing arrival time. Hence, $e_{k} <_{\Bar{E}^{arr}} e_{k-1}$ would imply $arr(e_{k})\le arr(e_{k-1})$ and thus $arr(e_{k-1})=dep(e_k)=arr(e_k)$. As $e_k$ extends $e_{k-1}$, we then must have $\alpha_{u_k}=0$. If $e_{k-1}$ does not belong to a zero-block we then have either $\lambda_{k-1}>0$ or $\alpha_{u_{k-1}}>0$, and thus $e_{k-1}$ belongs to a block before the block of $e_k$ according to the ordering of $E^{arr}$ computed at Line~\ref{ling:ord}.
%\lv{Begin with this case and then exclude it as it implies (justification needed) that $e_k$ and $e_{k-1}$ are both in the same zero-block.}
As $e_{k-1} <_{\Bar{E}^{arr}} e_k$, our choice of $e=e_k$ implies that
there exists a walk $Q'$ from $s$ ending with $e_{k-1}$ that is $(\Bar{E}^{dep},\Bar{E}^{arr})$-respected and has minimum cost among the walks from $s$ ending with $e_{k-1}$. Because of isotonocity we obtain that $\gamma(Q'.e_{k}) \preceq \gamma(Q)$, and proving that $Q'.e_k$ is $(\Bar{E}^{dep},\Bar{E}^{arr})$-respected would raise a contradiction. Since $Q'$ is $(\Bar{E}^{dep},\Bar{E}^{arr})$-respected, we just need to check that for each edge $e'$ with tail $u_k$ such that $e_k \leq_{\Bar{E}^{dep}} e'$, then $e_{k-1} <_{\Bar{E}^{arr}} e'$. Due to the fact that $e_k$ extends $e_{k-1}$ and $e_k \leq_{\Bar{E}^{dep}} e'$, we have $arr(e_{k-1}) \leq dep(e_k) \leq dep(e') \leq arr(e')$. If $arr(e_{k-1}) < arr(e')$ we can conclude that $e_{k-1} <_{\Bar{E}^{arr}} e'$, since $\Bar{E}^{arr}$ is non-decreasing arrival time sorted. Otherwise, $arr(e_{k-1}) = arr(e')$ implies that the travel time of $e'$ is zero and that $\alpha_{v_{k-1}} = 0$ as wet get $arr(e_{k-1}) = dep(e_k)$ and $e_k$ extends $e_{k-1}$. 
If $e_{k-1}$ has positive travel time, we can again conclude, as $e'$ has zero travel time and the two edges have same arrival time, the block of $e_{k-1}$ precedes the block of $e'$. Finally, suppose that $e_{k-1}$ has also zero travel time. 
As $\alpha_{v_{k-1}} = 0$ and $e_{k-1}$ does not belong to a zero-block, we must have $\alpha_{u_{k-1}} > 0$. On the other hand,
$v_{k-1}$ is the tail of $e'$ and we have $\alpha_{v_{k-1}} = 0$. Also in this case  the block of $e_{k-1}$ precedes the block of $e'$.

We consider now the case in which $arr(e_{k-1}) < dep(e_k)$. This implies $arr(e_{k-1}) < arr(e_k)$, and thus $e_{k-1} <_{\Bar{E}^{arr}} e_k$. We can choose $Q'$ as above: a walk from $s$ ending with $e_{k-1}$ that is $(\Bar{E}^{dep},\Bar{E}^{arr})$-respected and has minimum cost among the walks from $s$ ending with $e_{k-1}$. If we prove that $Q'.e_k$ is $(\Bar{E}^{dep},\Bar{E}^{arr})$-respected we can conclude by isotonicity. In particular, we just need to check that for each edge $e'$ with tail $u_k$ such that $e_k \leq_{\Bar{E}^{dep}} e'$, then $e_{k-1} <_{\Bar{E}^{arr}} e'$. Due to the fact that $arr(e_{k-1}) < dep(e_k)$ and $e_k \leq_{\Bar{E}^{dep}} e'$,  we have $arr(e_{k-1}) < dep(e_k) \leq dep(e') \leq arr(e')$. As $\Bar{E}^{arr}$ is ordered by non-decreasing arrival time we obtain $e_{k-1} <_{\Bar{E}^{arr}} e'$.

\medskip
\paragraph{Case B: edge $e_{k-1}$ belongs to a zero-block.}
We can now focus on the case where $e_{k-1}$ belongs to a zero-block, namely $e_{k-1}=(u_{k-1},v_{k-1},\tau,0) \in E^{arr}_{\tau,\lambda=0,\alpha=0}$ and $\alpha_{u_{k-1}}=\alpha_{v_{k-1}}=0$.
%\fb{If $e_{k-1}$ is not in a zero-block $->$ ... ok. We can now focus on the case $e_{k-1}=(u_{k-1},v_{k-1},\tau,0) \in E^{arr}_{\tau,\lambda=0,\alpha=0}$. }
Consider the call to $\reorder(E^{arr}_{\tau,\lambda=0,\alpha=0})$ from Algorithm~\ref{alg:0del}. Let $D=(V'_{tail}\cup V'_{head},A',A')$ be the digraph constructed at Lines~\ref{linr:begdigr}-\ref{linr:enddigr} in Algorithm~\ref{alg:order}, and let $S\subseteq V'_{head},A'$ be the set of source nodes computed according to Line~\ref{linr:set}.
The proof of correctness follows from the last of the three following claims.

\begin{claim}\label{cla:pathwalk}
Any path $P$ in $D$ from a node in $S$ corresponds to a walk $Q_P$ in $G$ with cost $\gamma^D(B',P)$ arriving at time $\tau$. %\lv{Using $\gamma^D(B',P)$ would be more clear than $\gammaB{P}$!}
\end{claim}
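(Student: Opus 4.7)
My plan is to unroll the path $P$ arc-by-arc, associate each arc with its corresponding zero-block edge, and prepend a witness $sv_0$-walk obtained from the source node $v_0\in S$. Concretely, write $P = \langle (v_0,v_1,c_1),\ldots,(v_{k-1},v_k,c_k)\rangle$. By the construction of $D$ at Lines~\ref{linr:begdigr}--\ref{linr:enddigr} of Algorithm~\ref{alg:order}, each arc $(v_{i-1},v_i,c_i)$ comes from some edge $e_i=(v_{i-1},v_i,\tau,0)\in E^{arr}_{\tau,\lambda=0,\alpha=0}$ with $c_i=\gamma(e_i)$. Since $v_0\in S$, by the way $B'$ and $P'$ are set at Lines~\ref{linr:if}--\ref{linr:endS}, there is an $sv_0$-walk $W_0$ ending with $P'[v_0]$, arriving at time $\tau$, with cost exactly $B'[v_0]$. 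The candidate walk is then $Q_P := W_0.e_1.e_2\cdots e_k$, and I will verify both that $Q_P$ is a valid walk in $G$ and that $\gamma_{Q_P}=\gamma^D(B',P)$.

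The core verification is that each concatenation respects the waiting-time constraints. Every edge $e_i$ of the zero-block has $dep(e_i)=arr(e_i)=\tau$, and by definition of $E^{arr}_{\tau,\lambda=0,\alpha=0}$ the nodes $v_{i-1},v_i$ all satisfy $\alpha_{v_{i-1}}=\alpha_{v_i}=0$. Thus for $2\le i\le k$, the waiting time at $v_{i-1}$ between $e_{i-1}$ and $e_i$ is $\tau-\tau=0\in[\alpha_{v_{i-1}},\beta_{v_{i-1}}]$. For the junction between $W_0$ and $e_1$, I note that $P'[v_0]$ arrives at $v_0$ at time $\tau$ (as it lies in the zero-block), and $v_0\in V'_{head}$ ensures $\alpha_{v_0}=0$ (since it is the head of a zero-block edge), so $e_1$ indeed extends $P'[v_0]$. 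Hence $Q_P$ is a walk in $G$, and its last edge is $e_k$, which has arrival time $\tau$ as required.

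For the cost, I will apply the definition of walk cost iteratively. We have $\gamma_{W_0}=B'[v_0]$ by the choice of $W_0$, and by the inductive definition $\gamma_{W_0.e_1\cdots e_i}=\gamma_{W_0.e_1\cdots e_{i-1}}\oplus \gamma(e_i)$. Unfolding gives $\gamma_{Q_P}=(\cdots(B'[v_0]\oplus c_1)\cdots\oplus c_{k-1})\oplus c_k$, which is precisely $\gamma^D(B',P)$ as defined just before Lemma~\ref{lem:paths}.

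The only subtle point I anticipate is justifying $\alpha_{v_0}=0$ and that $W_0$ really exists as described. Both follow from the definitions used in Algorithm~\ref{alg:order}: $V'_{head}$ collects heads of zero-block edges (whose heads have $\alpha=0$ by membership in $E^{arr}_{\tau,\lambda=0,\alpha=0}$), and $S$ is built from exactly those heads $v$ for which either the tail $u$ of some zero-block edge $e$ is the source $s$ (so $\langle e\rangle$ alone is a valid $sv$-walk of cost $\gamma(e)$), or $B_{tail}[u]\neq\bottom$ (so by correctness of Algorithm~\ref{alg:gal} applied to the prefix scanned so far, there is an $su$-walk whose extension by $e$ has cost $B_{tail}[u]\oplus\gamma(e)$). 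Taking the minimum over these possibilities is exactly what Lines~\ref{linr:ifbprime}--\ref{linr:endS} record into $B'[v]$ and $P'[v]$, so a walk $W_0$ with cost $B'[v_0]$ ending with $P'[v_0]$ indeed exists.
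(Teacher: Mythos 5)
Your proposal is correct and follows essentially the same route as the paper: it builds the prefix $sv_0$-walk of cost $B'[v_0]$ ending with $P'[v_0]$ from Lines~\ref{linr:if}--\ref{linr:endS}, appends the zero-block edges corresponding to the arcs of $P$ (valid because all these edges depart and arrive at time $\tau$ at nodes with $\alpha=0$), and folds the costs to obtain $\gamma^D(B',P)$. The verification of the junction via $\alpha_{v_0}=0$ and the existence of the witness walk via the correctness of Algorithm~\ref{alg:gal} match the paper's argument.
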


The reason is twofold. First, each node $v\in S$ is associated to the cost $B'[v]$ of an $sv$-walk $Q_v$ which ends with edge $P'[v]$ as set in Lines~\ref{linr:if}-\ref{linr:endS} in Algorithm~\ref{alg:order}. More precisely, suppose $P'[v]=e=(u,v,\tau,0)\in E^{arr}_{\tau,\lambda=0,\alpha=0}$. In the case where $u=s$ and the cost $c=B'[v]$ is  indeed $\gamma(e)$, we define $Q_v=\langle e\rangle$.  Otherwise, $e$ must have an associated cost $c'=B_{tail}[u]$ computed using Algorithm~\ref{alg:gal} and we have $c=B_{tail}[u]\oplus\gamma(e)$. The correctness of Algorithm~\ref{alg:gal} implies that $c'$ is the cost of an $su$-walk $Q_e$ that $e$ can extend. We then define $Q_v=Q_e.e$ whose cost is precisely $\gamma_{Q_e.e}=B_{tail}[u]\oplus\gamma(e)=c$.
%where $e$ is the edge with head $v$ providing the smallest cost $c$. We denote by $P'[v]$ such $e$. % notation P'[v] is already defined !!!

Second, as edge $P'[v]$ has arrival time $\tau$, $Q_v$ also has arrival time $\tau$ and any edge $(v,w,\tau,0)$ can extend it as long as $\alpha_v=0$. More generally each arc $(x,y,c)$ in a path $P$ from $v$ in $D$ corresponds to an edge $f=(x,y,\tau,0)$ with cost $\gamma(f)=c$ and such that $\alpha_x=\alpha_y=0$ according to the construction of $A'$ at Line~\ref{linr:begdigr}. The condition $\alpha_x=0$ implies that $f$ extends the edge associated to the arc preceding $(x,y,c)$ in $P$ or extends $P'[v]=e$ if it is the first arc. The path $P$ is thus associated to a walk $Q_P$ of edges in $E^{arr}_{\tau,\lambda=0,\alpha=0}$ such that $Q_v.Q_P$ is a walk.
Moreover, the cost of $P$ in $D$ is obtained as $\gamma^D(B',P)=(\cdots(B'[v]\oplus c_1)\cdots\oplus c_{k-1})\oplus c_k$ where $c_1,\ldots,c_k$ denote the respective costs of arcs in $P$. As $B'[v]=\gamma_{Q_e}$, we get $\gamma^D(B',P)=\gamma_{Q_v.Q_P}$.

\begin{claim}\label{cla:pathQ}
There exists a path $P_Q$ from $S$ to $v_{k-1}=u_k$ in $D$ that has cost $\gamma^D(B',P_Q)\preceq\gamma_{\langle e_1, \dots, e_{k-1} \rangle}$.
\end{claim}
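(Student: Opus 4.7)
The plan is to identify a suffix of $\langle e_1,\dots,e_{k-1}\rangle$ consisting entirely of edges in the zero-block $E^{arr}_{\tau,\lambda=0,\alpha=0}$. Let $j$ be the smallest index such that $e_j,\dots,e_{k-1}$ all belong to this zero-block; such $j$ exists since $e_{k-1}$ does by hypothesis. The path $P_Q$ will be defined to start at the head $v_j$ of $e_j$ and follow the arcs of $D$ corresponding to $e_{j+1},\dots,e_{k-1}$, which all belong to $A'$ by Line~\ref{linr:begdigr} of Algorithm~\ref{alg:order}. By the definition of $\gamma^D$ and iterated use of isotonicity, it will suffice to show that $v_j\in S$ and $B'[v_j]\preceq\gamma_{\langle e_1,\dots,e_j\rangle}$, since then $\gamma^D(B',P_Q)=B'[v_j]\oplus\gamma(e_{j+1})\oplus\cdots\oplus\gamma(e_{k-1})\preceq\gamma_{\langle e_1,\dots,e_{k-1}\rangle}$.

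For $j=1$ we have $u_1=s$, so when $e_1$ is scanned in the zero-block the condition $u=s$ at Line~\ref{linr:if} holds, $v_1$ is added to $S$ at Line~\ref{linr:set}, and $B'[v_1]$ is set to at most $\gamma(e_1)=\gamma_{\langle e_1\rangle}$ through Lines~\ref{linr:c1}--\ref{linr:bprime}. For $j>1$, the edge $e_{j-1}$ does not belong to the zero-block and has arrival time at most $\tau$; a direct inspection of the block order, using $\alpha_{v_{j-1}}=0$ (since $v_{j-1}$ is the tail of the zero-block edge $e_j$) to rule out $e_{j-1}\in E^{arr}_{\tau,\lambda=0,\alpha_{head}>0}$, yields $e_{j-1}<_{\Bar{E}^{arr}}e_k$. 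The choice of $e_k$ then supplies a $(\Bar{E}^{dep},\Bar{E}^{arr})$-respected walk $Q'$ from $s$ ending with $e_{j-1}$ such that $\gamma_{Q'}\preceq\gamma_{\langle e_1,\dots,e_{j-1}\rangle}$. I next need to verify that $Q'.e_j$ is itself $(\Bar{E}^{dep},\Bar{E}^{arr})$-respected: since $Q'$ already satisfies the property, only the condition between $e_{j-1}$ and $e_j$ must be checked, and a short case analysis on $(dep(e'),\lambda')$ for each edge $e'$ with tail $v_{j-1}$ and $e_j\leq_{\Bar{E}^{dep}}e'$ shows that $e'$ always lies in a block strictly later than that of $e_{j-1}$, again using $\alpha_{v_{j-1}}=0$.

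The main technical obstacle I foresee is propagating the cost bound from the moment $e_{j-1}$ is scanned to the moment $\reorder$ is invoked. Applying Lemma~\ref{lem:kthiter} at the scan of $e_{j-1}$ together with the fact that $Q'.e_j$ is $(\Bar{E}^{dep},\Bar{E}^{arr})$-respected ensures that immediately after processing $e_{j-1}$, the edge $e_j$ belongs to some interval of $\intervs{v_{j-1}}$ with associated cost at most $\gamma_{Q'}$. Between this moment and the call to $\reorder$, every subsequently scanned edge has arrival time at most $\tau$, and combined with $\alpha_{v_{j-1}}=0$ this forces any induced call to $\processcosts(v_{j-1},l-1)$ to use an $l$ bounded by the index of $e_j$ in $\Bar{E}^{dep}_{v_{j-1}}$; consequently $e_j$ is never finalized into $B$, and the update rule at Lines~\ref{ling:whileinterv}--\ref{ling:delinterv} can only decrease the cost of its interval. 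Hence $B_{tail}[v_{j-1}]\preceq\gamma_{Q'}\preceq\gamma_{\langle e_1,\dots,e_{j-1}\rangle}$, and when $e_j$ is scanned in the zero-block the head $v_j$ enters $S$ at Line~\ref{linr:set} and $B'[v_j]$ receives a value bounded by $B_{tail}[v_{j-1}]\oplus\gamma(e_j)\preceq\gamma_{\langle e_1,\dots,e_j\rangle}$ by isotonicity, closing the argument.
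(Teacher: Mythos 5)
Your treatment of the cost bound follows the paper's route: you split $\langle e_1,\dots,e_{k-1}\rangle$ at the start of its zero-block suffix (your $j$ is the paper's $i$), use the minimality of $e_k$ in $\Bar{E}^{arr}$ to replace the prefix by a respected minimum-cost walk ending with $e_{j-1}$, check respectedness of the junction with $e_j$, and propagate through $B_{tail}$ and Lines~\ref{linr:c1}--\ref{linr:bprime} to get $B'[v_j]\preceq\gamma_{\langle e_1,\dots,e_j\rangle}$; your tracking of $e_j$'s interval via Lemma~\ref{lem:kthiter} and the $\processcosts$ calls is if anything more explicit than the paper's corresponding sentence. So that part is fine.

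The genuine gap is in the very first step: you \emph{define} $P_Q$ as the sequence of arcs of $D$ corresponding to $e_{j+1},\dots,e_{k-1}$ and call it a path, but this sequence is only a walk in $D$. In this section the temporal graph may contain zero-cycles, so the portion of $Q$ inside the zero-block $E^{arr}_{\tau,\lambda=0,\alpha=0}$ can revisit nodes (a minimum-cost walk $Q$ is merely \emph{some} optimal walk and may well traverse a cycle of cost-neutral edges), in which case your arc sequence repeats vertices and is not a path. The claim explicitly asserts the existence of a \emph{path}, and the statement would in fact be false without the absorption hypothesis: if a zero-cycle strictly decreased cost, no simple path need match the walk's cost. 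The paper closes exactly this step with Lemma~\ref{lem:paths}, which uses absorption together with isotonicity to excise cycles from the walk in $D$ without increasing $\gamma^D(B',\cdot)$, yielding the required path $P_Q$ with $\gamma^D(B',P_Q)\preceq\gamma^D(B',W_Q)\preceq\gamma_{\langle e_1,\dots,e_{k-1}\rangle}$. Your proof never invokes Lemma~\ref{lem:paths} nor the absorption property, so either you must add this reduction (or argue that $Q$ can be chosen with a simple zero-block portion, which again needs absorption); as written the conclusion "there exists a path" does not follow.
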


To prove this, we decompose $\langle e_1, \dots, e_{k-1}\rangle=Q^1.Q^2$ where $Q^2=\langle e_i,\ldots, e_{k-1}\rangle$ is its longest suffix of edges in the zero-block $E^{arr}_{\tau,\lambda=0,\alpha=0}$ containing $e_{k-1}$. Note that all edges of $\langle e_1, \dots, e_{k-1}\rangle$ belonging to the zero-block must be consecutive as edges are sorted according to non-decreasing arrival time in a walk and no edge of the zero-block can be extend by and edge having positive travel time or positive minimum waiting-time. 

First, consider the moment when $e_i=(u_i,v_i,\tau,0)$ is considered at Line~\ref{linr:if} in Algorithm~\ref{alg:order} for possibly updating $B'[v_i]$. 
If its associated cost is not $\bottom$, we have $B_{tail}[u_i]\preceq \gamma_{Q^1}$. The reason is that $B_{tail}[u_i]$ is the cost of a walk $R$ that $e_i$ extends, which is composed of edges scanned so far, and such that $R.e_i$ is  $(\Bar{E}^{dep},\Bar{E}^{arr})$-respected. Moreover, it has minimum cost among such walks according to Lemma~\ref{lem:kthiter}. We thus have $B_{tail}[u_i]\preceq \gamma_{Q^1}$ since $Q^1$ is $(\Bar{E}^{dep},\Bar{E}^{arr})$-respected by our choice of $e_k$. Otherwise, $Q^1$ is empty and we must have $u_i=s$ and $i=1$. In this latter case, we let $Q^1.e_i$ denote the walk $\langle e_i\rangle$. In both cases, $c$ is set at Lines~\ref{linr:c1}-\ref{linr:c2} to a value such that $c\preceq \gamma_{Q^1.e_i}$. The update of $B'[v_i]$ according to Lines~\ref{linr:ifbprime}-\ref{linr:bprime} then ensures $B'[v_i]\preceq \gamma_{Q^1.e_i}$.
%
%\lv{Why do we say this?:
%Notice that, thanks to Lemma \ref{lem:paths}, without loss of generality on the choice of $Q$, since it has minimum-cost, we can assume that walk $\langle e_{i+1},\ldots, e_{k-1}\rangle$ is indeed a path, because if it was a walk we could just remove any cycle and $Q$ we would still be a minimum-cost walk ending with $e_k$.}
Second, each edge $e_j=(u_j,v_j,\tau,0)$ for $j>i$, is associated to an arc $e'_j=(u_j,v_j,\gamma(e_j))$ in $D$ according to the construction of $A'$ at Line~\ref{linr:begdigr}. Let $W_Q$ denote the walk $\langle e'_{i+1},\ldots, e'_{k-1}\rangle$ in $D$ which has cost $\gamma^D(B',W_Q)=(\cdots(B'[v_i]\oplus \gamma(e_{i+1}))\cdots)\oplus \gamma(e_{k-1})$. As $B'[v_i]\preceq \gamma_{Q^1.e_i}$, we get $\gamma^D(B',W_Q)\preceq \gamma_Q$ according to isotonicity. According to Lemma~\ref{lem:paths}, there exists a path $P_Q$ in $D$ from $v_i$ to $v_{k-1}$ satisfying $\gamma^D(B',P_Q)\preceq \gamma^D(B',W_Q)\preceq \gamma_Q$. $P_Q$ is thus a path from $v_i\in S$ to $v_{k-1}$ with cost $\gamma^D(B',P_Q)\preceq \gamma_{\langle e_1, \dots, e_{k-1} \rangle}$ as claimed.

%\fb{questo cammino non contiene $e_k$, ma attenzione, quando prendiamo quello con costo minimo potrebbe contenerlo? Verificare bene}

%\fb{First rule out the case in which $e_k$ is $P'[v_k]$}
\begin{claim}
There exists a walk $\Tilde{Q}$ such that $\Tilde{Q}.e_k$ is a $(\Bar{E}^{dep}, \Bar{E}^{arr})$-respected minimum-cost walk among the walks from $s$ that end with $e_k$.
\end{claim}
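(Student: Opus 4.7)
The plan is to construct $\Tilde{Q}$ from the output of $\dijkstra(D, S, B')$ (\Cref{alg:dijkstra}), handling in one subcase a different construction when the forest-based walk would fail to satisfy the ordering constraint.

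In the generic case, let $v \in S$ be the root of the tree of the minimum-cost forest $F$ that contains $v_{k-1}$, and let $P^* = P_F^{v_{k-1}}$ be the corresponding path in $D$. By \Cref{lem:dijkstra}, $\gamma^D(B', P^*) \preceq \gamma^D(B', P_Q)$, and combined with \Cref{cla:pathQ} this yields $\gamma^D(B', P^*) \preceq \gamma_{\langle e_1, \ldots, e_{k-1}\rangle}$. Applying \Cref{cla:pathwalk} to $P^*$ produces a walk $\Tilde{Q} = Q_v \cdot Q_{P^*}$ arriving at $v_{k-1}$ at time $\tau$ with cost $\gamma_{\Tilde{Q}} = \gamma^D(B', P^*)$, and isotonicity gives $\gamma_{\Tilde{Q} \cdot e_k} \preceq \gamma_Q$, so $\Tilde{Q} \cdot e_k$ has minimum cost.

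The $(\Bar{E}^{dep}, \Bar{E}^{arr})$-respecting property for this $\Tilde{Q}$ I would establish in three steps. First, the prefix $Q_v$ can be chosen $(\Bar{E}^{dep}, \Bar{E}^{arr})$-respected by invoking \Cref{lem:kthiter} on the cost $B_{tail}[u]$ (where $u$ is the tail of $P'[v]$) computed before the zero-block is scanned, together with the minimality assumption on $e_k$. Second, the sequence of zero-block edges in $Q_{P^*}$ follows the BFS order computed at Line~\ref{linr:begbfs} of \Cref{alg:order}; for each internal consecutive pair $(f_1,f_2)$ and each competing $e'$ with $f_2 \leq_{\Bar{E}^{dep}} e'$, either $e'$ sits in a strictly later block of $\Bar{E}^{arr}$ (when $arr(e') > \tau$ or $\alpha_{\text{head}(e')} > 0$), or is a proper descendant arc in $F_v$ (coming after $f_1$ in the BFS), or lies in the ``remaining'' suffix appended at Line~\ref{linr:remain}, or is a root-starter $P'[w]$ moved to the front of $\Bar{E}_{\cdot}^{dep}$ by the swap at Line~\ref{linr:swap}, in which case $f_2 \leq_{\Bar{E}^{dep}} e'$ is vacuous. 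Third, the same analysis applies to the terminal junction $g_l \to e_k$ whenever $e_k$ itself is not a root-starter of the current zero-block.

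The main obstacle is the remaining subcase in which $e_k$ equals $P'[v_k]$ for some root $v_k$: both $g_l$ and $e_k$ then lie among the root-starters at the front of $\Bar{E}_{v_{k-1}}^{dep}$, and their relative position in $E^{ord}$ depends on the (unspecified) root-iteration order of \Cref{alg:order}. To sidestep this I would choose $\Tilde{Q}$ differently: let $\Tilde{Q} := Q_u$ be a walk witnessing $B_{tail}[v_{k-1}]$ (which by construction contains no edge of the current zero-block), so that $\Tilde{Q} \cdot e_k = Q_u \cdot e_k$ witnesses $B'[v_k]$. The cost bound $\gamma_{\Tilde{Q} \cdot e_k} = B'[v_k] \preceq \gamma_Q$ follows by applying \Cref{lem:dijkstra} to the path $P_Q$ extended in $D$ by the arc corresponding to $e_k$ (since $v_k$ is a root with initial key $B'[v_k]$, and Dijkstra guarantees the initial key is optimal among all paths from $S$ to $v_k$). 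The last edge of $Q_u$ lies in a block strictly preceding the zero-block in $\Bar{E}^{arr}$, so the junction check reduces to comparing block indices and trivially succeeds for every candidate $e'$. Together with the generic case above, this contradicts the assumption on $e_k$ and completes the proof of the claim.
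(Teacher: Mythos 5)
Your proposal is correct, and its generic case coincides with the paper's argument: take the minimum-cost forest path $P$ to $u_k=v_{k-1}$, bound its cost via \Cref{lem:dijkstra} together with \Cref{cla:pathQ}, turn it into a temporal walk via \Cref{cla:pathwalk}, and check the respected property junction by junction using the block structure of $\Bar{E}^{arr}$, the BFS ordering of the tree arcs, and the swaps of Line~\ref{linr:swap} (your three steps are exactly the paper's Cases~1--3, with the prefix handled through \Cref{lem:kthiter}). Where you genuinely diverge is the exceptional subcase. The paper excludes every edge of the form $P'[\cdot]$: when $e_k=P'[v_k]$ it truncates $\Tilde{Q}$ at the root-starter $\Tilde{e}_h$, bounds the truncated cost by absorption ($C$-non-negativity of the discarded zero-block edges and of $e_k$), and concludes as in Case~A; this treats $e_k$ as the root-starter of the tree containing $u_k$, i.e.\ it identifies $v_k$ with that root. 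You instead exclude only the case where $e_k$ is a root-starter, folding non-root $P'$ edges into the generic Case-3 analysis — which indeed still works, since only root-starters are swapped to the front of $\Bar{E}_{u_k}^{dep}$, so no root-starter $e'$ can satisfy $e_k\leq_{\Bar{E}^{dep}}e'$ — and in the root-starter case you take a witness of $B_{tail}[u_k]$ so that $\Tilde{Q}.e_k$ realizes $B'[v_k]$, bounding $B'[v_k]\preceq\gamma_Q$ by the optimality of a root's initial key in the minimum-cost forest (\Cref{lem:dijkstra} applied to $P_Q$ extended by the arc of $e_k$), with respectedness coming from \Cref{lem:kthiter} plus a block comparison at the single remaining junction. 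Both routes reach the same conclusion; yours replaces the paper's absorption-based truncation by the root-key optimality of the algebraic Dijkstra and does not need the identification of $v_k$ with the root of $u_k$'s tree, which makes the case split arguably cleaner. To turn the sketch into a full proof you should still record the small walk-validity checks the paper makes explicit — that $e_k$ extends $\Tilde{Q}$ and does not already occur in it (using that $P$ is a path and that $e_k$ is not the root-starter) — and the degenerate corner where the optimum is the one-edge walk $\langle e_k\rangle$, i.e.\ an empty prefix, a corner both treatments share.
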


This claim will clearly conclude the proof of correctness. We first note that $u_k$ is reachable from $S$ according to Claim~\ref{cla:pathQ} through a path $P_Q$ with cost $\gamma^D(B',P_Q)\preceq\gamma_{\langle e_1, \dots, e_{k-1} \rangle}$. Lemma~\ref{lem:dijkstra} then ensures that $F$ contains a minimum-cost path $P$ in $D$. Its cost must thus satisfy $\gamma^D(B',P)\preceq \gamma^D(B',P_Q)$. %\fb{rephrase these two sentences}. 
%Notice that both $P$ and $P_Q$ correspond according to Claim~\ref{cla:pathwalk} to some walks in $G$ that can be extended by edge $e_k$, since $e_k$ extends $e_{k-1}$.
Isotonicity then implies $\gamma^D(B',P)\oplus \gamma(e_k)\preceq \gamma^D(B',P_Q)\oplus \gamma(e_k)\preceq \gamma_{\langle e_1, \dots, e_{k-1} \rangle}\oplus \gamma(e_k) = \gamma_Q$.

%\lv{Treat the case $e_k$ is a $P'[v]$ edge for some $v$. As $v$ is then the head of $e_k$, this is equivalent to $e_k=P'[v_k]$.}

Let us denote with $\Tilde{Q} = \langle \Tilde{e}_1, \dots, %\Tilde{e}_h,\Tilde{e}_{h+1}, \dots,
\Tilde{e}_l \rangle$ the walk corresponding to $P$ according to Claim~\ref{cla:pathwalk}, and $\Tilde{e}_j = (\Tilde{u}_j,\Tilde{v}_j, \Tilde{\tau}_j, \Tilde{\lambda}_j)$ for $j=1, \dots,l$. According to the construction of $\Tilde{Q}$ in Claim~\ref{cla:pathwalk}, let $h$ be the index of the edge $P'[v]$, that is $h$ is the (only) index satisfying $\Tilde{e}_{h}=P'[\Tilde{v}_{h}]$. Note that the subsequent edges $\Tilde{e}_{h+1}, \dots, \Tilde{e}_l$ correspond to the arcs of $P$. On the other hand, edges $\Tilde{e}_{1}, \dots, \Tilde{e}_{h-1}$ precede the zero-block $E^{arr}_{\tau,\lambda=0,\alpha=0}$ in $E^{arr}$. Note that they thus also precede $e_k$ in $\Bar{E}^{arr}$ as $e_k$ extends an edge of the zero-block and thus satisfy $arr(e_k)\ge \tau$ and $\alpha_{u_k}=0$.

\paragraph{We can assume $e_k$ is not a $P'[v]$ edge for some $v$.} If this was the case, as $v$ is then the head of $e_k$, this is equivalent to $e_k=P'[v_k]$. Then the walk $Q''=\langle \Tilde{e}_1, \dots, \Tilde{e}_{h} \rangle$ ends with edge $e_k = \Tilde{e}_{h}$ and has cost $\gamma_{Q''}\preceq \gamma_{\Tilde{Q}}$ by $C$-non-negativity of the edges $\Tilde{e}_{h+1}, \dots, \Tilde{e}_{l}$. As Claim~\ref{cla:pathwalk} guarantees  $\gamma_{\Tilde{Q}}=\gamma^D(B',P)$, we then have $\gamma_{Q''}\preceq \gamma^D(B',P)\oplus \gamma(e_k)$ by $C$-non-negativity of $e_k$, implying $\gamma_{Q''}\preceq \gamma_Q$. As $\Tilde{e}_{h-1}$ is not in the zero-block at time $\tau$, then either $\Tilde{e}_{h-1}$ is not in a zero-block or $arr(\Tilde{e}_{h-1}) < dep(e_k)$, and we can conclude as case A.

\medskip

We now prove that $\Tilde{Q}.e_k$ is a walk: first $e_k$ is not an edge of $\Tilde{Q}$ and second, $e_k$ extends $\Tilde{Q}$. 
The only case where $e_k$ could be in $\Tilde{Q}$ is when the edge $e_k$ itself belongs to the zero-block $E^{arr}_{\tau,\lambda=0,\alpha=0}$. Let us rule out this eventuality. 
As $P$ is a path and not a walk, $\Tilde{v}_l=u_k$ cannot be the tail of $\Tilde{e}_j$ for any $j\in [h+1,l]$. We also just proved we can assume $e_k \neq \Tilde{e}_{h}$.
%If $e_k = \Tilde{e}_{h}$, then the walk $Q''=\langle \Tilde{e}_1, \dots, \Tilde{e}_{h} \rangle$ ends with edge $e_k = \Tilde{e}_{h}$ and has cost $\gamma_{Q''}\preceq \gamma_{\Tilde{Q}}$ by $C$-non-negativity of the edges $\Tilde{e}_{h+1}, \dots, \Tilde{e}_{l}$. As Claim~\ref{cla:pathwalk} guarantees  $\gamma_{\Tilde{Q}}=\gamma^D(B',P)$, we then have $\gamma_{Q''}\preceq \gamma^D(B',P)\oplus \gamma(e_k)$ by $C$-non-negativity of $e_k$, implying $\gamma_{Q''}\preceq \gamma_Q$. As $\Tilde{e}_{h-1}$ is not in the zero-block,
%we can conclude as in the first case considered in which $e_{k-1}$ does not belong to a zero-block. In the following we will thus assume $e_k \neq P'[v_{k}]$, and $e_k$ cannot be an edge of $\Tilde{Q}$.
%
Second, $e_k$ extends $\Tilde{Q}$. The reason is that it extends $e_{k-1}$ which belongs to the zero-block $E^{arr}_{\tau,\lambda=0,\alpha=0}$. We thus have $\tau +\alpha_{u_{k}}=\tau\le dep(e_k)\le \tau+\beta_{u_{k}}$. As $\Tilde{e}_l$ also belongs to the zero-block, it has also arrival time $\tau$ and $e_k$ also extends $\Tilde{e}_l$ since $\Tilde{v}_l=u_k$.

It just remains to prove that $\Tilde{Q}.e_k$ is $(\Bar{E}^{dep},\Bar{E}^{arr})$-respected.
Since $\Tilde{e}_{h-1} <_{\Bar{E}^{arr}} e_k$, our choice of $e_k$ implies that we can assume without of loss of generality that $\langle \Tilde{e}_1, \dots, \Tilde{e}_{h-1} \rangle$ is $(\Bar{E}^{dep},\Bar{E}^{arr})$-respected. Thus, we have to consider the $(\Bar{E}^{dep},\Bar{E}^{arr})$-respected property with respect to the following three types of pairs of consecutive edges in $\Tilde{Q}.e_k$:
\begin{itemize}
    \item[1)] $\Tilde{e}_{h-1}, \Tilde{e}_{h}$,
    \item[2)] $\Tilde{e}_j, \Tilde{e}_{j+1}$ for $j = h, \dots, l-1$,
    \item[3)] $\Tilde{e}_l, e_k$.
\end{itemize}

In Case~1, we have to prove that for each edge $e'$ such that $tail(e')=tail(\Tilde{e}_{h})=\Tilde{u}_{h}$ and $\Tilde{e}_{h} \leq_{\Bar{E}^{dep}} e'$ we have $\Tilde{e}_{h-1} <_{\Bar{E}^{arr}} e'$. As $\Tilde{e}_{h}$ extends $\Tilde{e}_{h-1}$ and $\Tilde{e}_{h} \leq_{\Bar{E}^{dep}} e'$, we have $arr(\Tilde{e}_{h-1}) \leq dep(\Tilde{e}_{h}) \leq dep(e') \leq arr(e')$. If $arr(\Tilde{e}_{h-1}) < arr(e')$ we can conclude since $\Bar{E}^{arr}$ is sorted by non-decreasing arrival time. Thus let us suppose that $arr(\Tilde{e}_{h-1}) = arr(e')$, which implies that the travel time of $e'$ is zero and $arr(\tilde{e}_{h-1})=dep(\tilde{e}_{h})=\tau$. Moreover, we have $\alpha_{tail(\Tilde{e}_{h})} = \alpha_{tail(e')}= 0$ since $e_{h} \in E^{arr}_{\tau,\lambda=0,\alpha=0}$. Since $\tilde{e}_{h-1}$ is not in the zero-block $E^{arr}_{\tau,\lambda=0,\alpha=0}$ and $arr(\tilde{e}_{h-1})=\tau$, we must have $\alpha_{tail(\Tilde{e}_{h-1})} >0$. This implies $\Tilde{e}_{h-1} <_{\Bar{E}^{arr}} e'$ since in $\Bar{E}^{arr}$ the block of edges in $E^{arr}_{a,\lambda=0,\alpha_{tail}>0}$ precedes block $E^{arr}_{a,\lambda=0,\alpha=0}$ and $E^{arr}_{a,\lambda=0,\alpha_{head}>0}$.
%\lv{FB:due to the scan order of the lists with same arrival time $E^{arr}$; LV: we should say something like that very early}.

In Case~2, we have to prove that for each edge $e'$ such that $tail(e')=tail(\Tilde{e}_{j+1})=\Tilde{u}_{j+1}$ and $\Tilde{e}_{j+1} \leq_{\Bar{E}^{dep}} e'$ we have $\Tilde{e}_j <_{\Bar{E}^{arr}} e'$. Again, we have $arr(\Tilde{e}_{j}) \leq dep(\Tilde{e}_{j+1}) \leq dep(e') \leq arr(e')$ and if $arr(\Tilde{e}_{j}) < arr(e')$ we can conclude. So let us suppose $arr(\Tilde{e}_j) = arr(e')$, implying that $e'$ has zero travel time and departure $\tau$. If $\alpha_{head(e')>0}$, we then have $\Tilde{e}_j <_{\Bar{E}^{arr}} e'$ since $\Tilde{e}_j$ belongs to the zero-block and $e'$ is scanned after the zero-block. Otherwise, we have $\alpha_{head(e')=0}$ and $e'$ is in the zero-block $E^{arr}_{\tau,\lambda=0,\alpha=0}$. Having $e' = P'[head(e')]$ would contradict $\Tilde{e}_{j+1} \leq_{\Bar{E}^{dep}} e'$. The reason is that $\Tilde{e}_{j+1}$ is in $F$ while the swaps performed at Line~\ref{linr:swap} ensures that alls edges $P'[v]$ with tail $tail(\Tilde{e}_{j+1})$ for some $v\in V'_{head}$ precede other edges with same tail and same departure time in $\Bar{E}^{dep}$. We can thus assume $e' \neq P'[head(e')]$.
If $e' \in F$, notice that either $\Tilde{e}_j = P'[\Tilde{v}_j]$ or $\Tilde{e}_j\in F$ and precede $e'$ in the BFS ordering, since $tail(e') = head(\Tilde{e}_j)$, and thus in both cases we have $\Tilde{e}_j <_{\Bar{E}^{arr}} e'$. Finally, if $e'$ is not in $F$, it is added at the end of the reordered zero-block at Line~\ref{linr:remain} and we again have $\Tilde{e}_j <_{\Bar{E}^{arr}} e'$.

In Case~3, we have to prove that for each edge $e'$ such that $tail(e')=tail(e_k)=u_k$ and $e_k \leq_{\Bar{E}^{dep}} e'$ we have $\Tilde{e}_l <_{\Bar{E}^{arr}} e'$. %Both in the case that $e_k$ does and does not belong to the zero-block $E^{arr}_{\tau,\lambda=0,\alpha=0}$, 
Since $e_k \leq_{\Bar{E}^{dep}} e'$ and $e_k \neq P'[v_k]$, we deduce that $e'$, as $e_k$ is ordered after edges with tail $u_k$ of the form $P'[v]$ for some $v\in V'_{head}$, which implies $e' \neq P'[head(e')]$. The proof is now similar to the previous case: if $arr(\Tilde{e}_l) < arr(e')$, we can directly conclude. Otherwise, $e'$ has zero travel time and arrival time $\tau$. In the three cases $\alpha_{head(e')}>0$, $e'\in F$ and $e'\notin F$, we conclude similarly.
This achieves the proof of correctness

\bigskip
\noindent
\textit{Complexity analysis.}

We finally analyze the complexity of Algorithm~\ref{alg:0del}. As discussed previously, sorting $E^{arr}$ by non-decreasing arrival time can be done in $O(|E|\log |V|)$ time by computing the lists $(E_v^{arr})_{v\in V}$ and then merging them using a priority queue. Once $E^{arr}$ is sorted by non decreasing arrival time we can partition the list of edges with same arrival time into $E^{arr}_{a,\lambda>0}$, $E^{arr}_{a,\lambda=0,\alpha_{tail}>0}$, $E^{arr}_{a,\lambda=0,\alpha=0}$ and $E^{arr}_{a,\lambda=0,\alpha_{head}>0}$ by bucket sorting into the four lists.

The calls to $\reorder(E^{arr}_{\tau,\lambda=0,\alpha=0},\tau)$ incur the only other additional costs compared to Algorithm~\ref{alg:gal}. The nodes in $V'_{tail}$ and $V'_{head}$ are identified in $O(|E^{arr}_{\tau,\lambda=0,\alpha=0}|)$, and both sets cardinality is bounded by $|E^{arr}_{\tau,\lambda=0,\alpha=0}|$. Thus the construction of digraph $D$ is linear in $|E^{arr}_{\tau,\lambda=0,\alpha=0}|$. In order to identify interval $I$ at Line~\ref{linr:findint} we might scan up to $|\intervs{u}|$ intervals. However, all interval scanned contain edges with departure time less or equal to $\tau$. This means that, at the moment of the next call to $\reorder$() these intervals will have already been removed by the calls to $\processcosts()$ at Line~\ref{ling:callonu} of Algorithm~\ref{alg:gal}, and thus will not be scanned again. Overall, we can bound the time complexity of this operation by the total number of intervals created during the execution of the algorithm, which is bounded by $|E|$. Similarly, the computation of index $i$ at Line~\ref{linr:findindex} requires to scan all the edges from $E_u^{dep}[l_u]$ until an edge with departure time $\tau$ is found. We know that such an edge exists in interval $I$ as some edges in the zero-block have tail $u$, and again, at the moment of the next call to $\reorder$() these edges will have already been processed, and thus will not be scanned again. Overall, we can bound the time complexity of this operation by the total number of edges $|E|$.
Computing set $S$ and computing for each node $v$ in such set the cost $B'[v]$ and the edge $P'[v]$ is linear in $|E^{arr}_{\tau,\lambda=0,\alpha=0}|$, since it requires  a single scan of the edges in the zero-block and few constant time operations per edge. 

The time complexity of Algorithm~\ref{alg:order} is thus dominated by the Dijkstra call which costs time $O(|A'|+n'\log n')$ where $n'=|V'_{tail}\cup V'_{head}|\le |V|$. As $|A'|=|E^{arr}_{\tau,\lambda=0,\alpha=0}|$ and $n'=O(|E^{arr}_{\tau,\lambda=0,\alpha=0}|)$, the overall time complexity of the calls to Algorithm~\ref{alg:order} is $O(\sum_{a\in A^{arr}}|E^{arr}_{\tau,\lambda=0,\alpha=0}|(1+\log |V|))=O(|E|\log |V|)$.
The space complexity is clearly linear.

%--------------------------------------------------------------

% As the number $n'=|V'|$ of nodes considered in such a call  is at most $2|E^{arr}_{\tau,\lambda=0}|$, the construction of the graph is linear in $|E^{arr}_{\tau,\lambda=0}|$. The time complexity is thus dominated by the Dijkstra call and is $O(|A'|+n'\log n') = O(|E^{arr}_{\tau,\lambda=0}|\log |V|)$ as $|A'\le |E^{arr}_{\tau,\lambda=0}|$, $n'=O(|E^{arr}_{\tau,\lambda=0}|)$ and $n'\le |V|$. Hence, the overall time complexity of the calls to Algorithm~\ref{alg:order} is $\sum_{a\in A^{arr}}O(|E^{arr}_{\tau,\lambda=0}|\log |V|)=O(|E|\log |V|)$.
\end{proof}

\paragraph*{Optimizing a linear combination of classical criteria.}
A temporal graph with the cost structure defined in \Cref{sec:solving} for optimizing a linear combination of classical criteria also satisfies the absorption property under the following assumption: for any edge $e=(u,v,\tau,\lambda)$ such that $\lambda=0$, $\alpha_u=0$ and $\alpha_v=0$, we require $\delta(e)=\delta_5 c(e) +\delta_6 \ge 0$. This implies that for any cost $(\tau,\Delta)\in \mathbb{R}\times\mathbb{R}$, we indeed have $(\tau,\Delta)\preceq (\tau,\Delta)\oplus \gamma(e)$ as $(\tau,\Delta)\oplus\gamma(e)=(\tau,\Delta+\delta(e))$ and $\Delta+\delta(e)\ge \Delta$.
%Note that both isotonicity and absorption hold more generally for any real values of $\delta_1,\ldots,\delta_7\in\mathbb{R}$ and costs $(c(e))_{e\in E}$ as long as $\delta_5 c(e) +\delta_6 \ge 0$ for all edges $e\in E$ with zero travel time.
If we assume $\delta_5,\delta_6 \ge 0$, non-negativity of costs is required only for edges with zero travel time, and negative values are allowed for $\delta_1,\ldots,\delta_4$ and $\delta_7$.  \Cref{th:zero} thus extends the result of~\cite{BentertHNN2020} to a wider range of linear combinations, and has a slightly better complexity.

\begtodolater

\lv{If we need a neutral element...}

\paragraph*{Optimizing a linear combination of classical criteria (zero travel time part).}
We now define an extension of the cost structure defined in \Cref{sec:solving} for optimizing a linear combination of classical criteria in order to have a neutral element. We consider an undefined departure time $\bottom$ and add a neutral cost $0_C=(\bottom,0)$ so that we now consider $C=(\mathbb{R}\times\mathbb{R})\cup\{0_C\}$. We define the cost of an edge similarly as before and extend $\oplus$ and $\preceq$ as follows. For any $(\tau,\Delta)\in \mathbb{R}\times\mathbb{R}$ we define $(\bottom,0)\oplus(\tau,\Delta)=(\tau,\Delta)$, $(\tau,\Delta)\oplus(\bottom,0)=(\tau,\Delta)$ and $(\bottom,0)\oplus(\bottom,0)=(\bottom,0)$, implying that $0_C=(\bottom,0)$ is indeed a neutral element. We also define $(\bottom,0)\preceq (\tau,\Delta)$ (resp. $ (\tau,\Delta)\preceq (\bottom,0)$) as the condition $0\le \Delta$ (resp. $\Delta\le 0$).

We now show that the isotonicity property is still satisfied, that is: for any $c,c_1,c_1\in C$ such hat $c1\preceq c2$, we have $c\oplus c_1\preceq c \oplus c_2$. We have already proved the case where $c,c_1,c_2\in \mathbb{R}\times\mathbb{R}$. If $c=0_C$, we have $0_C\oplus c_1=c_1$ and $0_C\oplus c_2=c_2$, implying $0_C\oplus c_1 \preceq 0_C\oplus c_2$. Now suppose that we have $c=(\tau,\Delta)\in \mathbb{R}\times\mathbb{R}$. The case $c_1=c_2=0_C$ is also immediate as we then have $c\oplus c_1=c=c\oplus c_2$.  If $c_1=0_C$ and $c_2=(\tau_2,\Delta_2)\in \mathbb{R}\times\mathbb{R}$, the condition $c_1\preceq c_2$ is then equivalent to $0\le \Delta_2$. We then have $-(\delta_2+\delta_3+\delta_7)\tau+\Delta\le -(\delta_2+\delta_3+\delta_7)\tau+\Delta+\Delta_2$ which is equivalent to $c\preceq c\oplus c_2$ as $c\oplus c_2=(\tau,\Delta+\Delta_2)$. Since $c\oplus c_1=c$, we get $c\oplus c_1\preceq c\oplus c_2$. The case where $c_2=0_C$ and $c_1\in \mathbb{R}\times\mathbb{R}$ is similar.

Finally, the absorption property is also satisfied under the following assumption: for any edge $e=(u,v,\tau,\lambda)$ such that $\lambda=0$ and $\alpha_u=0$, we require $\delta(e) \ge 0$ or equivalently $\delta_5 c(e) +\delta_6\ge 0$. We show that $\gamma(e)$ is then $C$-non-negative, that is: $c\preceq c\oplus \gamma(e)$ for any $c\in C$. For any cost $c=(\tau,\Delta)\in \mathbb{R}\times\mathbb{R}$, we indeed have $(\tau,\Delta)\oplus\gamma(e)=(\tau,\Delta+\delta(e))$,implying $(\tau,\Delta)\preceq (\tau,\Delta)\oplus \gamma(e)$ as we have $\Delta+\delta(e)\ge \Delta$ for such an edge $e$. For cost $c=0_C=(\bottom,0)$, we have $0_C\oplus \gamma(e)=\gamma(e)=(\tau,\delta(e))$ which indeed satisfies $(\bottom,0)\preceq (\tau,\delta(e))$ as long as $0\le \delta(e)$.
In particular, non-negativity of costs is required only for edges with zero travel time if we assume $\delta_5,\delta_6 \ge 0$, and negative values are allowed for $\delta_1,\ldots,\delta_4$ and $\delta_7$.  \Cref{th:zero} thus extends the result of~\cite{BentertHNN2020} to a wider range of linear combinations, with a slightly better time complexity.

\endtodolater

\bibliographystyle{plainurl}
\bibliography{biblio}

\begin{thebibliography}{10}

\bibitem{BentertHNN2020}
Matthias Bentert, Anne{-}Sophie Himmel, Andr{\'{e}} Nichterlein, and Rolf
  Niedermeier.
\newblock Efficient computation of optimal temporal walks under waiting-time
  constraints.
\newblock {\em Appl. Netw. Sci.}, 5(1):73, 2020.

\bibitem{Berman1996}
Kenneth~A. Berman.
\newblock Vulnerability of scheduled networks and a generalization of menger's
  theorem.
\newblock {\em Networks}, 28(3):125--134, 1996.

\bibitem{BrodalJ04}
Gerth~St{\o}lting Brodal and Riko Jacob.
\newblock Time-dependent networks as models to achieve fast exact time-table
  queries.
\newblock {\em Electron. Notes Theor. Comput. Sci.}, 92:3--15, 2004.

\bibitem{BrunelliCV2021}
Filippo Brunelli, Pierluigi Crescenzi, and Laurent Viennot.
\newblock On computing pareto optimal paths in weighted time-dependent
  networks.
\newblock {\em Inf. Process. Lett.}, 168:106086, 2021.

\bibitem{BuiXuanFJ2003}
Binh-Minh Bui-Xuan, Afonso Ferreira, and Aubin Jarry.
\newblock Computing shortest, fastest, and foremost journeys in dynamic
  networks.
\newblock {\em International Journal of Foundations of Computer Science},
  14(02):267--285, 2003.

\bibitem{Bumby1981}
Richard~T. Bumby.
\newblock A problem with telephones.
\newblock {\em SIAM. J. on Algebraic and Discrete Methods}, 2(1):13--18, 1981.

\bibitem{BussMNR2020}
Sebastian Bu{\ss}, Hendrik Molter, Rolf Niedermeier, and Maciej Rymar.
\newblock Algorithmic aspects of temporal betweenness.
\newblock In Rajesh Gupta, Yan Liu, Jiliang Tang, and B.~Aditya Prakash,
  editors, {\em {KDD} '20: The 26th {ACM} {SIGKDD} Conference on Knowledge
  Discovery and Data Mining, Virtual Event, CA, USA, August 23-27, 2020}, pages
  2084--2092. {ACM}, 2020.

\bibitem{Casteigts2012}
Arnaud Casteigts, Paola Flocchini, Walter Quattrociocchi, and Nicola Santoro.
\newblock Time-varying graphs and dynamic networks.
\newblock {\em {IJPEDS}}, 27(5):387--408, 2012.

\bibitem{CasteigtsHMZ2021}
Arnaud Casteigts, Anne{-}Sophie Himmel, Hendrik Molter, and Philipp Zschoche.
\newblock Finding temporal paths under waiting time constraints.
\newblock {\em Algorithmica}, 83(9):2754--2802, 2021.

\bibitem{CookeH1966}
Kenneth~L. Cooke and Eric Halsey.
\newblock The shortest route through a network with time-dependent internodal
  transit times.
\newblock {\em Journal of Mathematical Analysis and Applications},
  14(3):493--498, 1966.

\bibitem{CormenLRS2001}
Thomas~H. Cormen, Charles~E. Leiserson, Ronald~L. Rivest, and Clifford Stein.
\newblock {\em Introduction to Algorithms}, chapter Single-Source Shortest
  Paths and All-Pairs Shortest Paths, page 580–642.
\newblock MIT Press and McGraw-Hill, 2001.

\bibitem{DehneOS2012}
Frank Dehne, Masoud~T. Omran, and Jörg-Rüdiger Sack.
\newblock Shortest {Paths} in {Time}-{Dependent} {FIFO} {Networks}.
\newblock {\em Algorithmica}, 62(1-2):416--435, 2012.

\bibitem{Dibbelt2013}
Julian Dibbelt, Thomas Pajor, Ben Strasser, and Dorothea Wagner.
\newblock Intriguingly {Simple} and {Fast} {Transit} {Routing}.
\newblock In {\em Experimental {Algorithms}}, Lecture {Notes} in {Computer}
  {Science}, pages 43--54. Springer, 2013.

\bibitem{Dibbelt2018}
Julian Dibbelt, Thomas Pajor, Ben Strasser, and Dorothea Wagner.
\newblock Connection scan algorithm.
\newblock {\em {ACM} Journal of Experimental Algorithmics}, 23:1.7:1--1.7:56,
  2018.

\bibitem{Dijkstra59}
Edsger~W. Dijkstra.
\newblock A note on two problems in connexion with graphs.
\newblock {\em Numerische Mathematik}, 1:269--271, 1959.

\bibitem{FuchsleMNR22}
Eugen F{\"{u}}chsle, Hendrik Molter, Rolf Niedermeier, and Malte Renken.
\newblock Delay-robust routes in temporal graphs.
\newblock In Petra Berenbrink and Benjamin Monmege, editors, {\em 39th
  International Symposium on Theoretical Aspects of Computer Science, {STACS}
  2022, March 15-18, 2022, Marseille, France (Virtual Conference)}, volume 219
  of {\em LIPIcs}, pages 30:1--30:15. Schloss Dagstuhl - Leibniz-Zentrum
  f{\"{u}}r Informatik, 2022.

\bibitem{Kahn62}
Arthur~B. Kahn.
\newblock Topological sorting of large networks.
\newblock {\em Commun. {ACM}}, 5(11):558--562, 1962.

\bibitem{Latapy2018}
Matthieu Latapy, Tiphaine Viard, and Cl{\'{e}}mence Magnien.
\newblock Stream graphs and link streams for the modeling of interactions over
  time.
\newblock {\em Social Netw. Analys. Mining}, 8(1):61:1--61:29, 2018.

\bibitem{Michail2016}
Othon Michail.
\newblock An introduction to temporal graphs: An algorithmic perspective.
\newblock {\em Internet Mathematics}, 12(4):239--280, 2016.

\bibitem{MullerHSWZ04}
Matthias M{\"{u}}ller{-}Hannemann, Frank Schulz, Dorothea Wagner, and
  Christos~D. Zaroliagis.
\newblock Timetable information: Models and algorithms.
\newblock In {\em {ATMOS}}, volume 4359 of {\em Lecture Notes in Computer
  Science}, pages 67--90. Springer, 2004.

\bibitem{Nachtigall1995}
Karl Nachtigall.
\newblock Time depending shortest-path problems with applications to railway
  networks.
\newblock {\em European Journal of Operational Research}, 83(1):154--166, 1995.

\bibitem{PallottinoS1997}
Stefano Pallottino and Maria~Grazia Scutell{\`a}.
\newblock Shortest path algorithms in transportation models: classical and
  innovative aspects.
\newblock Technical Report TR-97-06, University of Pisa, 1997.

\bibitem{PallottinoS1998}
Stefano Pallottino and Maria~Grazia Scutell{\`a}.
\newblock {\em Equilibrium and Advanced Transportation Modelling}, chapter
  Shortest path algorithms in transportation models: classical and innovative
  aspects, pages 245--281.
\newblock Kluwer Academic Publishers, 1998.

\bibitem{SchulzWW2000}
Frank Schulz, Dorothea Wagner, and Karsten Weihe.
\newblock Dijkstra's algorithm on-line: An empirical case study from public
  railroad transport.
\newblock {\em {ACM} J. Exp. Algorithmics}, 5:12, 2000.

\bibitem{Simard21}
Fr{\'{e}}d{\'{e}}ric Simard.
\newblock Evaluating metrics in link streams.
\newblock {\em Soc. Netw. Anal. Min.}, 11(1):51, 2021.

\bibitem{Sobrinho2005}
Jo{\~{a}}o~L. Sobrinho.
\newblock An algebraic theory of dynamic network routing.
\newblock {\em {IEEE/ACM} Trans. Netw.}, 13(5):1160--1173, 2005.

\bibitem{Griff2010}
Jo{\~{a}}o~L. Sobrinho and Timothy~G. Griffin.
\newblock Routing in equilibrium.
\newblock {\em 19th International Symposium on Mathematical Theory of Networks
  and System}, pages 941--947, 2010.

\bibitem{Wu2014}
Huanhuan Wu, James Cheng, Silu Huang, Yiping Ke, Yi~Lu, and Yanyan Xu.
\newblock {Path Problems in Temporal Graphs}.
\newblock {\em VLDB Endowment}, 7(9):721--732, 2014.

\bibitem{Wu2016}
Huanhuan Wu, James Cheng, Yiping Ke, Silu Huang, Yuzhen Huang, and Hejun Wu.
\newblock Efficient {Algorithms} for {Temporal} {Path} {Computation}.
\newblock {\em IEEE Transactions on Knowledge and Data Engineering},
  28:2927--2942, 2016.

\end{thebibliography}

%\newpage
%\appendix

%\section{Appendix: full paper at next page}

\end{document}